\let\proof\@undefined
\let\endproof\@undefined
\newcommand{\cX}{{\mathcal X}}
\newcommand{\cC}{{\mathcal C}}
\newcommand{\cS}{{\mathcal S}}
\newtheorem{observation}{Observation}
\newcommand{\cs}[1]{\textcolor{black}{#1}}
\newtheorem*{prop1bis}{Proposition 1 (extended)}
\newtheorem*{prop2bis}{Proposition 2 (extended)}
\newtheorem*{lemma4bis}{Lemma 4 (extended)}
\newtheorem*{lemma7bis}{Lemma 7 (extended)}
\newtheorem*{obs5bis}{Observation 5 (extended)}
\begin{document}

\title{Constructing minimal phylogenetic networks from softwired clusters is fixed parameter tractable%\thanks{Grants or other notes
%about the article that should go on the front page should be
%placed here. General acknowledgments should be placed at the end of the article.}
}
%\subtitle{Do you have a subtitle?\\ If so, write it here}

%\titlerunning{Short form of title}        % if too long for running head

\author{Steven Kelk         \and
        Celine Scornavacca %etc.
}

%\authorrunning{Short form of author list} % if too long for running head
\institute{S. Kelk \at
              Department of Knowledge Engineering (DKE), Maastricht University, P.O. Box 616, 6200 MD Maastricht, The Netherlands.  \\
              Tel.: +31 (0)43 38 82019\\
              Fax: +31 (0)43 38 84910 \\
              \email{steven.kelk@maastrichtuniversity.nl}           %  \\
%             \emph{Present address:} of F. Author  %  if needed
           \and
           C. Scornavacca \at
              Center for Bioinformatics (ZBIT), T\"ubingen University, Sand 14, 72076 T\"ubingen, Germany\\
              \email{scornava@informatik.uni-tuebingen.de} 
}

\titlerunning{Constructing minimal softwired networks is fixed parameter tractable}

\date{Received: date / Accepted: date}
% The correct dates will be entered by the editor

\maketitle

\begin{abstract}
Here we show that, given a set of clusters $\cC$ on a set of taxa $\cX$, where $|\cX|=n$, it is possible to determine in
time $f(k) \cdot poly(n)$ whether there exists a level-$\leq k$ network (i.e. a network where each biconnected component has reticulation number at most $k$) that represents all the clusters in $\cC$ in the softwired sense, and if so to construct such a network. This extends a polynomial time result from \cite{elusiveness}.
By generalizing the concept of ``level-$k$ generator'' to general networks, we then extend this fixed parameter tractability result to the problem where $k$ refers not to the level but to the  reticulation number of the whole network. 

\keywords{Phylogenetics \and Fixed Parameter Tractability \and Directed Acyclic Graphs}
% \PACS{PACS code1 \and PACS code2 \and more}
% \subclass{MSC code1 \and MSC code2 \and more}
\end{abstract}

\section{Introduction}

\subsection{\emph{Phylogenetic networks and softwired clusters}}

The traditional model for representing the evolution of a set  of species ${\cX}$ (or, more abstractly, a set of \emph{taxa}) is the \emph{rooted phylogenetic tree} \cite{SempleSteel2003,MathEvPhyl,reconstructingevolution}. Essentially, this is a singly-rooted tree where the leaves are bijectively labelled by ${\cX}$ and the edges are directed away from the root. In recent years there has been a growing interest in extending this model to also incorporate non-treelike evolutionary phenomena such as hybridizations, recombinations and horizontal gene transfers. This has subsequently stimulated research into \emph{rooted phylogenetic networks} which generalize rooted phylogenetic trees by
also permitting nodes with indegree two or higher, known as \emph{reticulation} nodes, or simply \emph{reticulations}. For detailed background information on phylogenetic networks we refer the reader to
\cite{HusonRuppScornavacca10,Nakhleh2009ProbSolv,Semple2007,husonetalgalled2009,twotrees,surveycombinatorial2011}. Figure \ref{fig:bcc}
shows an example of a rooted phylogenetic network.

\begin{figure}[h]
  \centering
  \includegraphics[scale=.2]{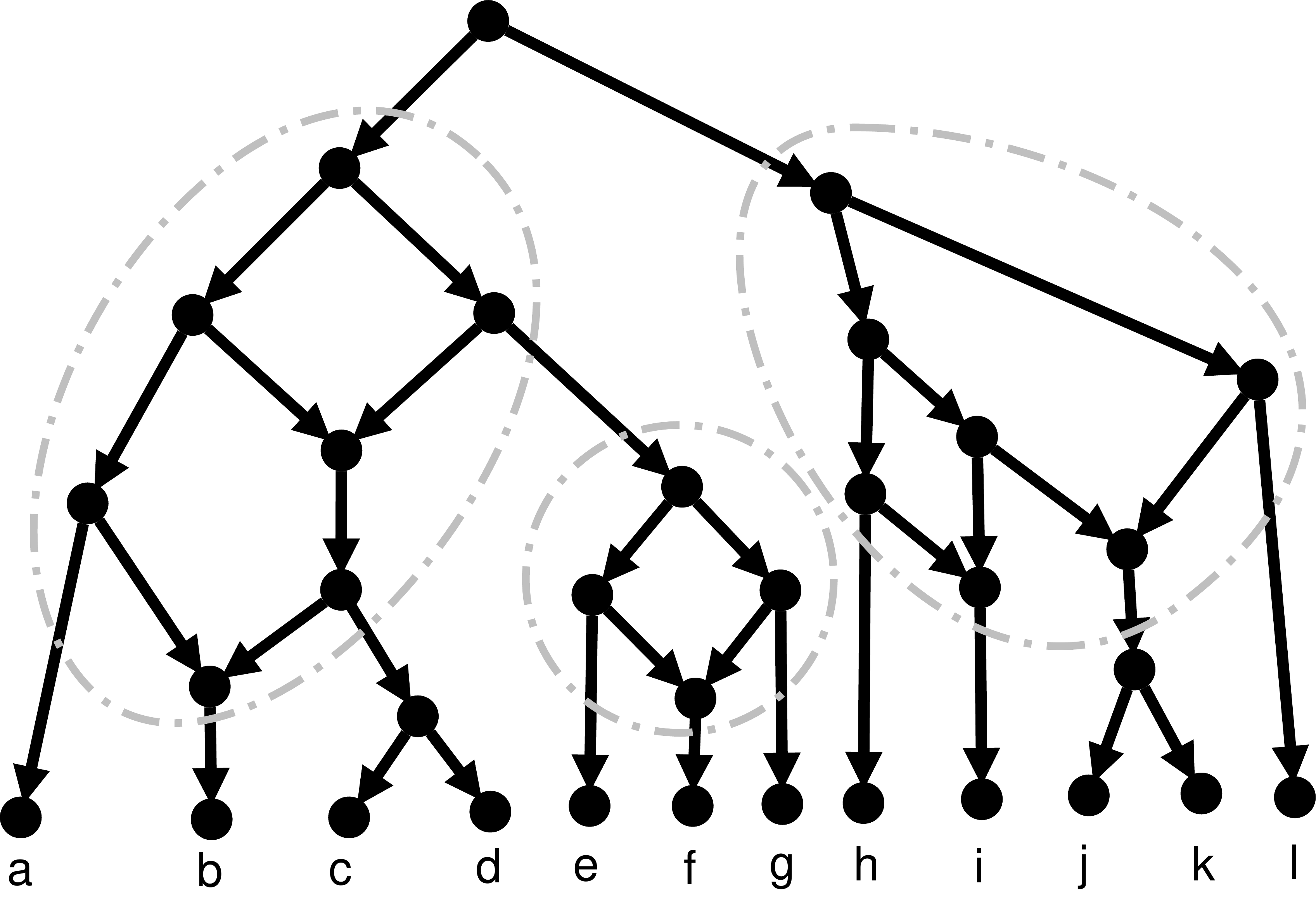}
  \caption{Example of a phylogenetic network with five reticulations. The encircled subgraphs form its biconnected components, also known as its ``tangles''. This binary network has level equal to 2 since each biconnected 
component contains at most two reticulations.}
  \label{fig:bcc}
\end{figure}

We are interested in the following biologically-motivated optimization problem. We are given a set ${\cC}$ of \emph{clusters} on ${\cX}$, where a cluster is
simply a strict subset of ${\cX}$. We wish to construct a phylogenetic network that ``represents'' all the clusters in ${\cC}$ such that the amount of reticulation in the network is ``minimized''. There are several different definitions of ``represents'' and ``minimized''
present in the literature. In this article we will consider only the \emph{softwired} definition of ``represents'' \cite{husonetalgalled2009,cass,HusonRuppScornavacca10,surveycombinatorial2011}. Most of our formal definitions will be
deferred to the preliminaries. Nevertheless, it is helpful to already formally state that a rooted phylogenetic tree $T$ on ${\cX}$ represents a cluster $C \subset {\cX}$ if $T$
contains an edge $(u,v)$ such that $C$ is exactly equal to the subset of ${\cX}$ reachable from $v$ by directed paths. A phylogenetic network $N$ on ${\cX}$,
on the other hand, represents a cluster $C \subset {\cX}$  in the softwired sense if there exists \emph{some} rooted phylogenetic tree $T$ on ${\cX}$ such that $T$ represents $C$ and
$T$ is topologically embedded inside $N$. Regarding ``minimized'', we consider two closely related, but subtly different, variants of minimality. The first variant,
\emph{reticulation number minimization}, aims at minimizing the \emph{total} number of reticulation nodes in the network\footnote{This is the definition when all reticulation
vertices have indegree-2, for more general networks reticulation number is defined slightly differently. See the
Preliminaries for more information.}. The second, less well-known variant, \emph{level minimization} \cite{JanssonSung2006,JanssonEtAl2006,lev2TCBB,reflections,tohabib2009}, asks us to minimize
the maximum number of reticulation nodes contained in any ``tangled'' region of the network, which essentially correspond to the non-trivial biconnected components of the underlying
undirected graph (see Figure \ref{fig:bcc}). The reticulation number is a global optimality criterion, while the level 
is a local optimality criterion. In general minimizing for one variant does not induce minimum solutions for the other variant (see e.g. Figure 3 of \cite{husonetalgalled2009}), although the algorithmic techniques used to tackle these problems are often related \cite{elusiveness}.

Both these problems are NP-hard and APX-hard \cite{bordewich,twotrees}. This raises the natural question: is it NP-hard to minimize the reticulation number (respectively, the level) if the number of
reticulation nodes in the network (respectively, per tangled region) is fixed? Prior to this article there were only partial answers known to these questions. In \cite{elusiveness}
it was proven that level-minimization is polynomial-time solvable if the level is fixed. A striking aspect of this proof is that the running time of the algorithm is
only polynomial time in a highly theoretical sense: it is too high to be of any practical interest. This exorbitant running time has two causes. Firstly, the exhaustive enumeration of all \emph{generators} \cite{lev2TCBB}, essentially
the set of all possible underlying topologies of a network if the taxa are ignored. Secondly, after determining the correct generator, a second wave of exhaustive enumeration determines
where a critical subset of ${\cX}$ should be located within the network, after which all remaining elements of ${\cX}$ can easily be added without much computational effort.
%This rather extreme enumeration-based approach contrasted sharply with the more efficient algorithm \textsc{Cass} \cite{cass}
%which, unfortunately, turned out to only be optimal for low level networks.

The question of whether a corresponding positive result would hold for reticulation
number minimization was left open, although the emergence of several partial results and practically efficient algorithms \cite{husonetalgalled2009,elusiveness} suggested that this might well be the case. Furthermore, it was not obvious how the algorithm from \cite{elusiveness} could be adapted to yield a fixed parameter tractable algorithm for level minimization -- where
the parameter is the level of the network $k$ -- since $k$ appears as an exponent of $|\cX|$ in the running time of the algorithm. 
(We refer to \cite{Flum2006,niedermeier2006,downey1999,Gramm2008} for an introduction to fixed parameter
tractability). Curiously, the main problem is not the enumeration of the generators, because the number of generators is independent of
$|\cX|$ \cite{Gambette2009structure}, but the allocation of the critical initial subset of taxa to their correct location in the network.

In this article we settle all these questions by proving for the first time that both level minimization and reticulation number minimization are fixed parameter tractable (where, in the
case of reticulation number minimization, the parameter is the reticulation number of the whole network). %This is thus simultaneously the first proof that level minimization and reticulation number minimization are tractable for a fixed level and a fixed reticulation number respectively. 
We give one algorithm for level minimization and one algorithm for reticulation minimization, although
the two algorithms have a large common core. The algorithms again rely heavily on generators, which we extend here to also be useful in the context
of reticulation number minimization; generators had hitherto only appeared in the level minimization literature. In both algorithms the major non-triviality is
showing how the network structure can still be adequately recovered if the parameter is no longer allowed to
appear in the exponent of $|\cX|$ as it was in \cite{elusiveness}.

\subsection{\emph{Beyond softwired clusters: the wider context}}

We believe that this approach is significant beyond the softwired cluster literature. Other articles discuss the problem of constructing rooted phylogenetic
networks not by combining clusters but by combining triplets \cite{simplicityAlgorithmica,reflections}, characters \cite{gusfielddecomp2007,gusfield2,WuG08,myers2003} or entire phylogenetic trees into a network. These models are in general mutually distinct although they do have a significant
common overlap which reaches its peak in the case of data derived from \emph{two} phylogenetic trees. To see this, note that if one takes the union of clusters represented
by a set of two or more phylogenetic trees, then the reticulation number (or level) required to represent these clusters is in general less than or equal to the reticulation number (or
level) required to topologically embed the trees themselves in the network, and this inequality is often strict.  However, in the case of a set comprising \emph{exactly} two trees the inequality becomes
equality \cite{twotrees}. Hence for data obtained from two trees one could solve the reticulation number minimization and level minimization problems for clusters by using algorithms
developed for the problem of topologically embedding the trees themselves into a network. These algorithms are highly efficient and fixed parameter tractable in
a practical, as opposed to solely theoretical sense \cite{bordewich2,sempbordfpt2007,quantifyingreticulation,whiddenWABI}. However, these tree algorithms do not help us with more general cluster sets, because for more than two trees the optima
of the cluster and tree models start to diverge. Indeed, the cluster model often saves reticulations with respect to the tree model by weakening the concept of ``above'' and
``below'' in the network, which is exactly why the input tree topologies do not generally survive if one atomizes them into their constituent clusters \cite{twotrees}. Moreover, the literature on
embedding three or more trees into a network is not yet mature, with articles restricting themselves to preliminary explorations \cite{pirnISMB2010,huynh}.  It therefore seems plausible that the generator approach might be adapted to the tree model (or
the other constructive methods mentioned) to yield a unified technique for producing positive complexity results for reticulation number minimization and level
minimization, even in the case of many input trees (or data obtained from many input trees).

\section{Preliminaries}

Consider a set of taxa~$\mathcal{X}$, {where $|\cX|=n$.} A \emph{rooted phylogenetic network} (on~$\mathcal{X}$), henceforth \emph{network}, is a directed acyclic graph with a single node with indegree zero 
(the \emph{root}), no nodes with both indegree and outdegree equal to 1, and nodes with outdegree zero (the \emph{leaves}) bijectively labeled by~$\mathcal{X}$. {In this article we usually identify the leaves with $\cX$}.
The indegree of a 
node~$v$ is denoted~$\delta^-(v)$ and~$v$ is called a \emph{reticulation} if~$\delta^-(v)\geq 2$, {otherwise $v$ is a \emph{tree} node}. An edge~$(u,v)$ is called a 
\emph{reticulation edge} if its target node
%head
$v$ is a reticulation and is called a \emph{tree edge} otherwise.
 When counting reticulations in a network, we count reticulations with more than two incoming 
edges more than once because, biologically, these reticulations represent several reticulate evolutionary events. Therefore, we formally define the \emph{reticulation number} of a 
network~$N=(V,E)$ as \[r(N) = \sum_{\substack{v\in V: \delta^-(v)>0}}(\delta^-(v)-1) = |E| - |V| + 1 \enspace.\]

A \emph{rooted phylogenetic tree} on $\mathcal{X}$, henceforth \emph{tree}, is simply a network that has reticulation number zero. 
We say that a network $N$ on $\mathcal{X}$ \emph{displays}
a tree $T$ if~$T$ can be obtained from $N$ by performing a series of node and edge deletions and eventually by suppressing nodes with both indegree and outdegree equal to 1, see
Figure \ref{fig:treesclusters} for an example. We assume without loss of generality that each reticulation has outdegree at least one. Consequently, each leaf has indegree one. We say that a network is \emph{binary} if
every reticulation node has indegree 2 and outdegree 1 and every tree node that is not a leaf has outdegree 2.
%Using the terminology of \cite{HusonRuppScornavacca10}, we say that a  network is \emph{bicombining} if 
%all reticulate nodes have indegree 2 and \emph{bifurcating}}\marginpar{S: is it true that in a bifurcating network \emph{every} internal node has outdegree 2? What about reticulation nodes?} if every internal node  has outdegree two.
%We say that a  network that is both bicombining \emph{and} bifurcating is a \emph{binary} network. 
%We say that a network 
 %We define a \emph{binary refinement} of a network $N$ any network that is binary \emph{and} \textcolor{green}{displays} $N$\marginpar{S: we need to be formal about what it means for a network to display a network...}.

 Proper subsets of~$\mathcal{X}$ are called \emph{clusters}, and a cluster $C$ is a \emph{singleton} if $|C|=1$. We say that an edge $(u,v)$ of 
a tree \emph{represents} a cluster $C \subset \cX$ if $C$ is the set of leaf descendants of $v$. A tree $T$ represents a cluster $C$ if it contains an 
edge that represents $C$. It is well-known that the set of clusters represented by a tree is a 
laminar family, often called a \emph{hierarchy} in the phylogenetics literature, and uniquely defines that tree. We say that  $N$ represents $C$ ``in the softwired sense'' if $N$ displays some tree $T$ on $\cX$ such that $T$ 
represents $C$, see Figures \ref{fig:treesclusters} and \ref{fig:example}. In this article we only consider the softwired notion of cluster representation and henceforth assume this implicitly\footnote{Alternatively, we say that a network N represents a cluster $C \subset \mathcal{X}$ ``in the hardwired sense" if there exists a tree edge $(u,v)$ of $N$ such that $C$ is the set of leaf descendants of v.}.  A network represents a set of clusters $\mathcal{C}$ if it 
represents every cluster in $\mathcal{C}$ (and possibly more). 
The set of {\emph{all}} softwired clusters {represented by a network} can be obtained as follows. 
For a network $N$, we say that a \emph{switching} of $N$ is obtained by, for each reticulation node, deleting all but one of its
incoming edges. Given a network $N$ and a switching $T_N$ of $N$, we say that an edge $(u,v)$ of $N$ represents a cluster $C$ w.r.t. $T_N$ if $(u,v)$ is an edge of $T_N$ and $C$ is the set of leaf 
descendants of $v$ in $T_N$. The set of {all} softwired clusters {represented by} $N$, {denoted $\cC(N)$}, is the set of clusters represented by all edges of $N$ w.r.t. $T_N$, where $T_N$ ranges over all possible switchings \cite{HusonRuppScornavacca10}. Note that the set of all possible switchings of $N$ coincides with the set of all trees displayed by $N$. It is
also natural to define that an edge $(u,v)$ of $N$ represents a cluster $C$ if there exists some switching $T_N$ of $N$ such that $(u,v)$ represents $C$ w.r.t $T_N$.
Note that, in general, an edge of $N$ might represent multiple clusters, and a cluster might be represented by multiple edges of $N$.

\begin{figure*}[t]
  \centering
  \includegraphics[scale=.165]{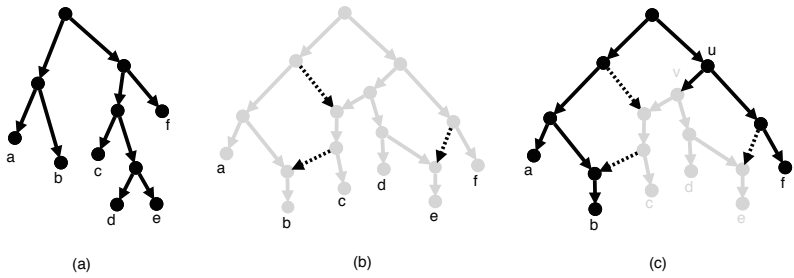}
  \caption{A phylogenetic tree~$T$ (a) and a phylogenetic network~$N$ (b,c); (b) illustrates in grey that~$N$ displays~$T$  (deleted edges are dashed); 
%(edges not in the subdivision are dashed); 
%(c) illustrates that~$N$ is consistent with (amongst others) the triplet $cd|f$ % from~$T$ (edges not in the embedding 
%(deleted edges 
%are again dashed);
(c)
illustrates that~$N$ represents (amongst others) the cluster $\{c,d,e\}$ in the softwired sense (dashed reticulation edges are ``switched off'').}
  \label{fig:treesclusters}
\end{figure*}

Given a set of clusters $\cC$ on $\cX$, throughout the article  we assume that, for any taxon $x \in \cX$, $\cC$ contains at least one cluster $C$
%\neq \cX$
containing $x$. 
For a set $\mathcal{C}$ of clusters on $\mathcal{X}$ we define $r(\mathcal{C})$ as $\min \{ r(N) | N \text{ represents } \mathcal{C} \}$, we sometimes refer to this as the \emph{reticulation number} 
of $\mathcal{C}$. 
The related concept of {\em level} 
%A related concept 
 requires some more background. A directed acyclic graph is \emph{connected} (also called ``weakly connected'') if there is an undirected path 
(ignoring edge orientations) between each pair of nodes. A node (edge) of a directed graph is called a \emph{cut-node} (\emph{cut-edge}) if its removal disconnects the graph. A directed graph 
is \emph{biconnected} if it contains no cut-nodes. A biconnected subgraph~$B$ of a directed graph~$G$ is said to be a \emph{biconnected component} if there is no biconnected subgraph~$B' \neq 
B$ of $G$ that contains~$B$. A phylogenetic network is said to be a 
\emph{${\mbox{level-}\leq k}$ network}  
%\emph{${\mbox{level-}k}$ network} 
if each biconnected component has reticulation number less than or equal to $k$.\footnote{Note that to 
determine the reticulation number of a biconnected component, the indegree of each node is computed using only edges belonging to  this biconnected component.}
A 
%${\mbox{level-}k}$ 
network is called \emph{simple} if the
removal of a cut-node or a cut-edge creates two or more connected components of which  
at most one is non-trivial (i.e. contains at least one edge). 
%if the target node 
%head 
%of each cut-edge is a leaf. 
A (simple) level-$\leq k$ network $N$ is called a (simple) \emph{ level-$k$ network} if  the maximum reticulation number among the biconnected components of $N$  is precisely~$k$. %\marginpar{check k, \mbox{$\leq k$} carefully}
%A simple level-$\leq k$ network is called a \emph{simple level-$k$ network} if its reticulation number is precisely~$k$. 
For example, the network in Figure \ref{fig:bcc} is a  level-2 network {(which is not simple)},  the network in Figure 
\ref{fig:gallednetwork} is a simple level-4 network and the network in Figure \ref{fig:level2example} is a simple level-2 network. Note that a tree is a level-0 network. For a set $\mathcal{C}$ of clusters on $\mathcal{X}$ we define $l(\mathcal{C})$, the
\emph{level} of $\mathcal{C}$, as the smallest $k \geq 0$ such that there exists a level-$k$ network that represents $\mathcal{C}$.  It is immediate that for every cluster set $\mathcal{C}$ it holds that 
 $r(\mathcal{C}) \geq l(\mathcal{C})$, because a level-$k$ network always contains at least one biconnected component containing $k$ reticulations. 

%\textcolor{red}{In this article we will assume, without loss of generality, that a cluster set $\cC$ on $\cX$ always contains all $|\cX|$ singleton
%clusters and that every $x \in \cX$ is in at least one cluster in $\cC$. This assumption is safe because a network can always be trivially
%modified without adjusting its reticulation number or level to ensure that it represents all singletons 

We say that two clusters~$C_1,C_2\subset\mathcal{X}$ are \emph{compatible} 
if either~$C_1\cap C_2=\emptyset$ or~$C_1\subseteq C_2$ or~$C_2\subseteq C_1$, {and \emph{incompatible}
otherwise}. Consider a set of clusters~$\mathcal{C}$. 
The \emph{incompatibility graph}~$IG(\mathcal{C})$ 
of~$\mathcal{C}$ is the undirected graph~$(V,E)$ that has node set~$V=\mathcal{C}$
%\textcolor{red}{$\setminus \{ \{x\} | x \in \cX \}$}
and edge set
$E=\{\{C_1,C_2\}\enspace |\enspace C_1$ \mbox{ and } $C_2$\mbox{ are incompatible\newline} \mbox{clusters in} $\mathcal{C}$\}.
%\[{E=\{\{C_1,C_2\}\enspace |\enspace C_1 \mbox{ and } C_2\mbox{ are incompatible\newline} \mbox{ clusters in } \mathcal{C}\}}. \]
%\textcolor{red}{A connected component of $IG(\cC)$ is \emph{trivial} if it consists only of a node corresponding
%to a singleton cluster}. 
We say that a set of taxa~$\cX' \subseteq\mathcal{X}$ is 
{\emph{compatible} with~$\mathcal{C}$ if every cluster $C\in\mathcal{C}$ is compatible 
with~$\cX'$, and \emph{incompatible} otherwise.}

We say that a set of clusters $\mathcal{C}$ on $\mathcal{X}$ is  {\em separating} if it is incompatible with all sets of taxa $\cX'$ such that $\cX' \subset \cX$ and $|\cX'| \geq 2$.

\begin{figure}[t]
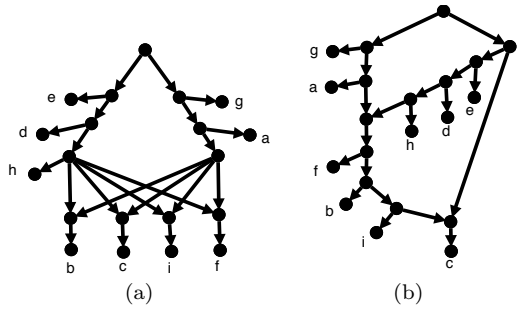

  \centering
  \begin{subfigure}[]
  {
    \centering
    \includegraphics[scale=0.16]{./figures/VanIersel34fig2a}
    \label{fig:gallednetwork}
    \vspace{.5cm}
  }
  \end{subfigure}
%  \hspace{1cm}
  \begin{subfigure}[]
  {
    \centering
    \includegraphics[scale=0.16]{./figures/VanIersel34fig2b}
    %\hspace{.8cm}
    \label{fig:level2example}
  }
  \end{subfigure}
  \caption{Two examples of networks that represent, among others,  the set of clusters $\mathcal{C}=\{\{a,b,f,g,i\}$, $\{a,b,c,f,g,i\}$, $\{a,b,f,i\}$, $\{b,c,f,i\}$, $\{c,d,e,h\}$, 
$\{d,e,h\}$, $\{b,c,f,h,i\}$, $\{b,c,d,f,h,i\}$, $\{b,c,i\}$, $\{a,g\}$, $\{b,i\}$, $\{c,i\}$, $\{d,h\}\}$. The network in (a) is a simple level-4 network, and the
network in (b) is a (binary) simple level-2 network.}
  \label{fig:example}
\end{figure}

When we write $f(k)$ we mean ``some function that only depends on $k$''. For simplicity we overload $f(k)$ to refer to multiple different functions with this property. We write $poly(n)$ to mean ``some function $f(n)$ that is polynomial in $n$'', {where $|\cX|=n$}. As in the case of $f(k)$,  we often overload this expression. Indeed, the goal of this article is not to derive exact expressions for the running time, but 
to show that it is bounded above by $f(k) \cdot poly(n)$. It should be noted that the $f(k)$ that we encounter in this article can be \emph{extremely} exponential in $k$. {Also, $|\cC|$ can in general
be exponentially large as a function of $n$, but (as we shall see in due course) it is reasonable to assume that $|\cC|$ is bounded above by $f(k) \cdot poly(n)$ when the parameter $k$ (reticulation number or level) is fixed}. 

The next lemma ensures that, if our goal is to find a network representing a set of clusters and minimizing the level or the reticulation number, we can restrict our attention to binary networks: 

\begin{lemma}
\label{lem:binaryOK}
\cite{elusiveness} Let $N$ be a  network on $\cX$. Then we can
transform N into a binary  network $N'$ such that $N'$ has the same reticulation number
and level as $N$ and all clusters represented by $N$ are also represented by $N'$.
\end{lemma}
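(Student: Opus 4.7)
The plan is to reduce $N$ to a binary network by applying three types of local transformations in sequence. Each transformation will preserve the reticulation number and level exactly and will preserve every softwired cluster represented by $N$ (possibly adding new ones, which is allowed by the statement).

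First, I would eliminate reticulations of outdegree greater than one. For each reticulation $v$ of outdegree $q > 1$, insert a new tree node $v'$, add a new edge $(v,v')$, and redirect every outgoing edge of $v$ to leave $v'$ instead. This adds one node and one edge, so $|E|-|V|$ is unchanged and $r(N)$ is preserved. The set of reticulations is unchanged, so no biconnected component can gain reticulations, and the level does not increase. For any switching of $N$, using the same incoming choice at $v$ gives a switching of the new network in which $v$ and $v'$ have identical leaf-descendant sets, so no clusters are lost.

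Second, I would eliminate reticulations of indegree greater than two. For a reticulation $v$ with parents $u_1, \ldots, u_p$ where $p > 2$, replace $v$ by a chain $v_1, \ldots, v_{p-1}$ of indegree-two reticulations: $v_1$ has parents $u_1, u_2$; for $2 \le j \le p-1$, $v_j$ has parents $v_{j-1}$ and $u_{j+1}$; and $v_{p-1}$ inherits the single outgoing edge of $v$ (single, thanks to step one). Again $|E|-|V|$ is unchanged, and the chain contributes exactly $p-1$ to the reticulation number, matching the original contribution of $v$. Every pair of incoming edges of $v$ lay in a common undirected cycle through $v$ in $N$; after the swap, those same external paths close cycles through the chain, so all of $v_1, \ldots, v_{p-1}$ lie inside the single biconnected component corresponding to the block that contained $v$, and that block's reticulation count is unchanged. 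To match clusters, given a switching of $N$ that picks $u_i$ at $v$, build a switching of the new network by choosing $u_{j+1}$ at $v_j$ for $1 \le j \le i-2$ (with $u_2$ at $v_1$ when $i \ge 3$), $u_i$ at $v_{i-1}$, and $v_{j-1}$ at $v_j$ for $j \ge i$. Then $v_{i-1}$ has exactly the original leaf-descendant set of $v$, while all earlier chain edges carry empty leaf-descendant sets, so every cluster of $N$ is still represented.

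Third, I would eliminate tree nodes of outdegree greater than two. For each tree node $u$ of outdegree $d > 2$, replace $u$ by an arbitrary binary tree rooted at $u$ with its $d$ original children as leaves, adding $d-2$ new internal tree nodes and $d-2$ new edges. Once more $|E|-|V|$ is preserved, and the new internal nodes are joined only by cut-edges, so they sit in trivial blocks and no nontrivial block gains reticulations. For any switching of the new network, each edge inside the inserted binary tree has as its leaf-descendant set a union of leaf-descendant sets of some subset of $u$'s original children under the same switching, and in particular the unique edge into $u$ still represents $u$'s original cluster, so no clusters of $N$ are lost.

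Applying steps one, two, and three in this order yields a binary network $N'$: step one makes every reticulation outdegree-one, step two only creates new reticulations that are themselves outdegree-one (since $v_{p-1}$ inherits an already-singular outgoing edge), and step three finishes off the tree nodes without re-introducing high-degree reticulations. The main obstacle is step two: one must simultaneously check that the chain stays inside a single block with the same reticulation count \emph{and} that the explicit switching construction really reproduces every cluster of $N$ without spurious additions, which is exactly what the cycle argument and the case split on the index $i$ above handle.
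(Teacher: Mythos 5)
Your overall strategy (a sequence of local replacements, checked against reticulation number, level, and switchings) is the right one, and it is essentially the route taken in the cited reference \cite{elusiveness}, to which the present paper defers the proof. Your steps one and two are correct: in step two all incoming edges of $v$ indeed lie in one biconnected component (any two of them lie on an undirected cycle through a common ancestor of the parents), so the chain stays in that block with the same reticulation count, and your explicit switching reproduces every cluster (modulo a harmless off-by-one in the case $i=1$, where the choice should be $u_1$ at $v_1$ and $v_{j-1}$ at $v_j$ for $j\geq 2$).

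Step three, however, has a genuine gap: the claim that the newly inserted internal nodes ``are joined only by cut-edges'' is false, and with an \emph{arbitrary} binary resolution the level can strictly increase. If two children of $u$ have subnetworks that reconverge at a reticulation below, the new internal edges above them lie on that cycle; a bad resolution can then force two different blocks through a common new edge and merge them. Concretely, take root $\rho$ with children $u$ and $x$; let $u$ have children $r_1$, $c$, $r_2$, where $c$ has children a leaf and $r_1$, the reticulation $r_1$ has parents $u,c$, and the reticulation $r_2$ has parents $u,x$ (all reticulations with one leaf child, $x$ with an extra leaf child). This network is level-1: one block is the triangle on $\{u,c,r_1\}$, the other the $4$-cycle on $\{\rho,u,x,r_2\}$. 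Resolving $u$ by a new node $w$ with children $r_1$ and $r_2$ (and $u$ with children $w,c$) puts both former cycles through the edge $(u,w)$, producing a single biconnected component containing both $r_1$ and $r_2$, i.e.\ a level-2 network, even though the reticulation number and all clusters are preserved. The repair is to resolve $u$ \emph{respecting its blocks}: for each biconnected component of $N$ that contains two or more outgoing edges of $u$, first gather exactly those children under one new node, and only then attach these group nodes and the remaining children to $u$ by an arbitrary binary tree. Then every block's cycles close at (or below) its group node, no two blocks acquire a common edge, the new nodes all have indegree one, and the level is preserved; with that modification your argument goes through.
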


Thanks to Lemma \ref{lem:binaryOK}, we may assume that there exists a {binary} network $N$ with reticulation number  $r(\cC)$ (or  with  level $l(\cC)$ if we are interested in level minimization) that represents $\cC$. We henceforth restrict
our analysis to binary networks and, {except in places where it might cause confusion to not be explicit}, we will not emphasize again that we only deal with this kind of network.  

%In the following of the article we will only consider binary networks (and consequently binary generators, see below), without pointing it out again. 

%{In Section \ref{sec:levelFPT} we will show that minimizing level is fixed parameter tractable, and in Section \ref{sec:retnumFPT} we will extend this result to show that minimizing reticulation %number is
%also fixed parameter tractable}.

\section{Minimizing level is fixed parameter tractable}
\label{sec:levelFPT}

The aim of this section is to show that level-minimization is  fixed parameter tractable.  
To compute $l(\cC)$, we will repeatedly query, \emph{``Is $l(\cC)=k$? If so, construct a network with level equal to $k$ that represents $\cC$''} for $k = 0,\ldots,l(\cC)$, where $k$ starts at 0 and is incremented by 1 until the query is answered positively. 
%(For convenience we will often assume that $l \geq 1$, since it is well known that $l(\cC)=0$ if and only if all the clusters in $\cC$ are pairwise compatible). 
Assuming that the queries are correctly answered, this process will terminate after $l(\cC)+1$ iterations. Hence, to prove an overall running time of $f(l(\cC)).poly(n)$, it is sufficient to show that
for each $k$ we can correctly answer the query in time at most $f(k) \cdot poly(n)$. {Note that $r(\cC)=l(\cC)=0$ if and only if all the clusters in $\cC$ are pairwise mutually compatible, which can be easily
checked in time $poly(n)$, so we henceforth assume that $k \geq 1$}.

{The high-level idea is the following. In \cite{cass,HusonRuppScornavacca10} it is shown that level-$k$ networks can be constructed using a divide and conquer strategy. Informally, the idea is
to construct a level-$\leq k$ network for each connected component of $IG(\mathcal{C})$ and then to combine these into a single network. The clusters in each connected component first have to be processed,
which creates (for each component) a separating set of clusters. From Lemma 1 of \cite{elusiveness}, we know that, if a level-$k$ network representing a 
separating set of clusters $\cC$ on $\cX$ exists, a simple level-$k$ network representing $\cC$ has to exist. This network will never have two or more taxa with the same parent \cite{elusiveness}.
The transformation underpinning Lemma \ref{lem:binaryOK} furthermore allows us to assume that this simple level-$k$ network is binary.
Hence, the divide and conquer strategy essentially reduces to constructing binary simple level-$\leq k$ networks for separating sets of clusters (and then combining them into a single network).}

{In Section \ref{subsec:simpleFPT} we show how to construct a simple level-$k$ network in time $f(k) \cdot poly(n)$ from a separating set of clusters. Subsequently
we show in Section \ref{subsec:generalFPT} how to combine these networks in time $f(k) \cdot poly(n)$ into a single level-$k$ network.}

\subsection{\emph{Constructing simple networks from separating cluster sets}}
\label{subsec:simpleFPT}

Before proving the main result of this paper, we need to prove some preliminary results.

\begin{proposition}
\label{prop:checkReprCluster}
Given a simple level-$k$ network $N$ and a set of clusters $\cC$ on $\cX$, checking whether  $\cC$ is represented by $N$ can be done in  time $f(k) \cdot poly(n)$, where $n=|\cX|$.
\end{proposition}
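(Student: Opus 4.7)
The plan is to exploit the fact that a binary simple level-$k$ network has very few switchings. Because $N$ is simple, all of its reticulations lie inside its unique non-trivial biconnected component; because $N$ is binary, every reticulation has indegree exactly two. Hence $N$ contains at most $k$ reticulations in total, and a switching of $N$ is determined by a binary choice at each reticulation, giving at most $2^k$ switchings. By the definition recalled in the Preliminaries, the set of trees displayed by $N$ coincides with the set of switchings of $N$, and $\cC(N)$ equals the union over all displayed trees $T_N$ of the clusters that $T_N$ represents.

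Based on this, I would enumerate all $2^k$ switchings, and for each one construct the corresponding displayed tree $T_N$ in $O(n)$ time (by deleting the unchosen reticulation edges and suppressing the resulting indegree-1 outdegree-1 nodes). For each $T_N$ I would then compute the set of clusters it represents in $poly(n)$ time via a standard post-order traversal that labels each internal node with its set of leaf descendants. Taking the union of these sets yields $\cC(N)$, and by construction $|\cC(N)| \leq 2^k \cdot (2n-2)$, which is $f(k) \cdot poly(n)$. I would store $\cC(N)$ in a dictionary keyed on (say) sorted taxon lists, supporting $poly(n)$-time membership queries.

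Finally, $\cC$ is represented by $N$ if and only if every $C \in \cC$ lies in $\cC(N)$, which can be tested in $|\cC| \cdot poly(n)$ time. Invoking the standing assumption that $|\cC| \leq f(k) \cdot poly(n)$, the overall running time is $f(k) \cdot poly(n)$ as required. The main step to get right is the counting of switchings; once one observes that simplicity together with binariness force at most $k$ reticulations in $N$, the rest of the argument is routine enumeration combined with a standard cluster-extraction traversal, and there is no substantive combinatorial obstacle.
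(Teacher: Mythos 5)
Your proposal is correct and follows essentially the same route as the paper: bound the number of displayed trees (switchings) by $2^k$, note each tree represents at most $2(n-1)$ clusters, so $|\cC(N)| \leq 2^{k+1}(n-1)$, and check membership of each input cluster in $\cC(N)$. The only small difference is that the paper does not invoke the standing bound on $|\cC|$ (which it in fact derives \emph{from} this proposition) but instead rejects immediately when $|\cC| > |\cC(N)|$; replacing your appeal to the standing assumption by this one-line size check makes your argument self-contained.
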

\begin{proof}
Note that there are at most $2^{k}$ trees displayed by $N$ and each tree represents at most $2(n-1)$ clusters. This means that $|\cC(N)|$ is at most $2^{k+1}(n-1)$. Since $N$ cannot represent $\cC$ if $|\cC|>|\cC(N)|$, checking whether $\cC \subseteq \cC(N)$ takes at most $f(k) \cdot poly(n)$ time.
\end{proof}

Thus, if $|\cC| > 2^{k+1}(n-1)$, since $\cC$ is assumed to be separating,  it is not possible that $l(\cC) = k$  and we can immediately answer ``no'' to the query. We thus
henceforth assume that $|\cC| \leq 2^{k+1}(n-1)$ i.e. that $\cC$ contains at most $f(k) \cdot poly(n)$ clusters.\\
\\
If all the leaves of a binary simple level-$k$ network $N$ are removed and all nodes with both indegree and outdegree equal to 1 are deleted, the resulting structure is called a \emph{level-$k$ generator} as defined in \cite{lev2TCBB}. See Figure \ref{fig:gen} for the level-1 and level-2 generators. The number of level-$k$ generators is bounded by $f(k)$ \cite{Gambette2009structure}\footnote{Note that the number of level-$k$ generators grows
rapidly in $k$, lying between $2^{k-1}$ and $k!^2 50^k$ \cite{Gambette2009structure}.}. 

\begin{figure}[t]
  \centering
  \includegraphics[scale=.2]{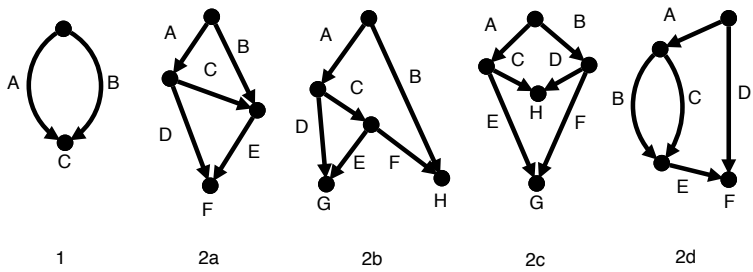}
  \caption{The single level-1 generator and the four level-2 generators. Here the sides have been labelled with capital letters.}
  \label{fig:gen}
\end{figure}
 
 The \emph{sides} of a level-$k$ generator are defined as the union of its edges (the \emph{edge} sides) and its nodes of indegree-2 and outdegree-0 (the \emph{node sides}).  The number of sides in a generator is bounded by $f(k)$, because the sum of its vertices and edges is linear in $k$ \cite{reflections}. 

\begin{definition}
\label{def:Nk}
The set $\mathcal{N}^{k}$ (for $k \geq 1$) is defined as the set of all  networks that can be constructed by choosing some level-$k$
generator $G$ and then applying the following \emph{leaf hanging} transformation to $G$ such that
each taxon of $\cX$ appears exactly once in the resulting network. (This is essentially identical
to the definition given in \cite{reflections}, which is only a superficial refinement of the definition given in \cite{lev2TCBB}). 
\begin{enumerate}

\item First, for each pair $u, v$ of vertices in $G$ connected by a single edge $(u, v)$,
replace $(u, v)$ by a path with $l \geq 0$ internal vertices and, for each 
such internal vertex $w$, add a new leaf $w'$, an edge $(w, w')$, and
label $w'$ with some taxon from $\cX$.  All the taxa added in this way
are ``on side $s$'' where $s$ is the side corresponding to the edge $(u,v)$. (It is also permitted that the path has zero
internal nodes i.e. that the side remains empty).
\item Second, for each pair $u, v$ of vertices in $G$ connected by two edges, treat the two edges as in step 1, but ensure that at least one of the two paths does not have zero 
internal nodes. 
%replace
%one such edge by a path with $l \geq 1$ internal vertices and for each such
%internal vertex $w$ add a new leaf $w'$ and an edge $(w, w')$, and label
%$w'$ with some taxon from $\cX$. All the taxa added in this way are ``on side $s$''
%where $s$ is the side corresponding to this first edge between $u$ and $v$. Treat the second edge
%between vertices $u$ and $v$ as in step 1.
\item Third, for each vertex $v$ of $G$ with indegree 2 and outdegree 0 add a new
leaf $y$, an edge $(v, y)$ and label $y$ with a taxon $x \in \cX$; we say ``taxon $x$ is
on side $s$'' where $s$ is the side corresponding to vertex $v$.
\end{enumerate}
\end{definition}

{ The main
reason for step 2 in Definition \ref{def:Nk} is to ensure that multi-edges in generators do not survive in the final network,
because our definition of phylogenetic network does not allow multi-edges.}
The following lemma follows directly from the results in Section 3.1 of \cite{lev2TCBB}:

\begin{lemma}
\label{lemma:NkSet}
The set $\mathcal{N}^{k}$ (for $k \geq 1$) is equal to the set of all binary simple level-$k$ networks.
\end{lemma}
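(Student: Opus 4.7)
The plan is to establish the set equality by proving both inclusions, leaning on the constructive definition of a level-$k$ generator (obtained by stripping leaves and suppressing indegree-$1$/outdegree-$1$ nodes from a binary simple level-$k$ network) so that leaf hanging is essentially the inverse operation.

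For the inclusion $\mathcal{N}^{k} \subseteq \{\text{binary simple level-}k\text{ networks}\}$, I would take an arbitrary $N \in \mathcal{N}^k$ obtained by leaf hanging from a level-$k$ generator $G$, and verify the three defining properties in turn. First, the \emph{binary} property: step 1 subdivides an edge $(u,v)$ into a path whose internal vertices all have indegree $1$ and outdegree $2$ (one edge along the path, one to a fresh leaf), step 2 behaves identically, and step 3 turns each indegree-$2$/outdegree-$0$ node of $G$ into an indegree-$2$/outdegree-$1$ reticulation whose child is a leaf. Second, the reticulation number: subdividing an edge and attaching a pendant leaf leaves $|E|-|V|$ invariant, so $r(N) = r(G) = k$, and all reticulations are confined to the subgraph inherited from $G$. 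Third, \emph{simple level-$k$}: since $G$ itself came from a simple level-$k$ network by suppressions that cannot split or merge biconnected components, $G$ has a unique non-trivial biconnected component containing all $k$ reticulations, and the leaf-hanging steps only attach pendant subtrees via cut-edges, so this structure is preserved in $N$.

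For the reverse inclusion, given a binary simple level-$k$ network $N$, I would define $G$ by removing all leaves of $N$ together with their incoming edges and then exhaustively suppressing indegree-$1$/outdegree-$1$ vertices; by definition $G$ is a level-$k$ generator. The task is then to exhibit a taxon assignment under the leaf-hanging procedure that reproduces $N$ from $G$. I would classify the stripped leaves of $N$ according to how their nearest non-suppressed ancestor survives in $G$: leaves whose parent is a reticulation whose only child was that leaf are recovered by step 3 (the parent becomes an indegree-$2$/outdegree-$0$ node of $G$); leaves whose parent is an internal tree node of $N$ lie on a maximal path of suppressed tree nodes, which corresponds either to a single edge $(u,v)$ of $G$ (recovered by step 1) or, when two such parallel paths connect the same vertex pair, to a pair of edges in $G$ (recovered by step 2). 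In the latter case, because $N$ contains no multi-edge, at least one of the two paths in $N$ must contain an internal suppressed node, so the step-2 side condition is satisfied and is moreover exactly what is needed.

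The main obstacle is verifying that biconnectivity is preserved in both directions under the suppression/subdivision operations and that the structural invariants of a generator are precisely those that arise from stripping leaves. In particular, one has to argue that suppressing an internal indegree-$1$/outdegree-$1$ node never destroys or creates a cut-node inside the unique non-trivial biconnected component, and conversely that subdividing an edge of $G$ and pendantly attaching a leaf via a tree node only creates cut-edges incident to the new leaf. Both facts are standard graph-theoretic observations about suppression and subdivision, and together they imply that the bijection between leaf-stripped/suppressed $N$ and leaf-hung $G$ respects simplicity and level. Once these points are nailed down, the two inclusions combine to give $\mathcal{N}^k = \{\text{binary simple level-}k\text{ networks}\}$, consistent with Section 3.1 of \cite{lev2TCBB}.
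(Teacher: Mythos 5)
Your two-inclusion argument is correct and is essentially the argument the paper itself does not spell out but delegates to Section 3.1 of \cite{lev2TCBB}; it also mirrors the paper's later explicit proof of Lemma 4 (extended) for $r$-reticulation generators. One small point worth making explicit in your reverse inclusion: simplicity of $N$ (for $k \geq 1$) is what rules out a tree node of $N$ having two leaf children, and the root having a leaf child, which is exactly what guarantees that stripping the leaves never leaves behind an indegree-1/outdegree-0 node (or an outdegree-1 root) and hence that every leaf parent is either suppressed onto an edge side or survives as a node side -- this is the analogue of the appeal to Lemma \ref{lem:nocut} that the paper makes in the extended setting.
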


For example, the simple network in Figure \ref{fig:level2example} has been obtained from generator $2a$ (see Figure \ref{fig:gen}) by putting 0 taxa on sides $A$ and $D$, 1 taxon on side $F$, 2 taxa on side $B$ and 3 taxa on sides $C$ and $E$.

By Lemma 1 of \cite{elusiveness} and Lemmas 1 and 2, we have the following:
\begin{corollary}
\label{cor:genIsSufficient}
Let $\cC$ be a %tangled 
separating set of clusters on $\cX$, such that $l(\cC) \geq 1$. Then there exists a network $N$ in $\mathcal{N}^{l(\mathcal{C})}$ such that
$N$ represents $\cC$.
\end{corollary}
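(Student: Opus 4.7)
The plan is to chain together the three ingredients that the corollary explicitly references. First, I would unpack the definition of $l(\cC)$: since $l(\cC) \geq 1$ is finite by assumption, there exists at least one level-$l(\cC)$ network $N_0$ on $\cX$ that represents $\cC$. This gives a starting point on which the two transformation lemmas can be applied.

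Next, I would apply Lemma \ref{lem:binaryOK} to $N_0$ to obtain a binary network $N_1$ on $\cX$ with the same level $l(\cC)$ that still represents $\cC$ (the lemma guarantees the cluster set represented only grows, and level is preserved exactly). Then I would invoke Lemma 1 of \cite{elusiveness}: because $\cC$ is separating and $N_1$ is a level-$l(\cC)$ network representing it, there must exist a \emph{simple} level-$l(\cC)$ network $N_2$ representing $\cC$. I would order the two reductions in this way (binary first, simple second) because the construction in Lemma 1 of \cite{elusiveness} operates by restricting attention to the unique non-trivial biconnected piece after removing cut-nodes and cut-edges, which is a substructure-extraction step and therefore keeps the network binary. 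At this point $N_2$ is binary, simple, of level exactly $l(\cC)$, and represents $\cC$.

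Finally, I would apply Lemma \ref{lemma:NkSet}: the set $\mathcal{N}^{l(\cC)}$ coincides with the set of all binary simple level-$l(\cC)$ networks, so $N_2 \in \mathcal{N}^{l(\cC)}$, yielding the desired conclusion.

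The only place that needs genuine care — and the step I would expect to be the main obstacle — is making sure that the two reductions compose cleanly, i.e.\ that taking the binary refinement of Lemma \ref{lem:binaryOK} and the simple substructure of Lemma 1 of \cite{elusiveness} can be done in one order without losing the property established by the other. Applying the binary transformation first is the cleanest choice, since the extraction of the unique non-trivial biconnected component does not resolution any vertex degrees and so cannot destroy the binary property; had I reversed the order, I would instead need to check that the binarization procedure never introduces new cut-nodes or cut-edges that could spawn a second non-trivial biconnected component, which is true but less immediate. Modulo that bookkeeping the proof is a direct corollary.
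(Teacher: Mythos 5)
Your argument is correct and is essentially the paper's: the corollary is proved there by exactly the same chain, namely the definition of $l(\cC)$ combined with Lemma 1 of \cite{elusiveness}, Lemma \ref{lem:binaryOK} and Lemma \ref{lemma:NkSet}. The only divergence is the order of composition: the paper first invokes Lemma 1 of \cite{elusiveness} to obtain a simple level-$l(\cC)$ network representing $\cC$ and then applies the transformation underlying Lemma \ref{lem:binaryOK} to make it binary (which preserves simplicity), whereas your order leans on the unstated claim that the simplification step of \cite{elusiveness} is a binarity-preserving substructure extraction -- something the lemma's statement does not guarantee -- though, as your own closing remark indicates, the argument closes under either ordering.
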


Given a taxon set $\cX$, we call any network resulting from adding all taxa in $\cX$ to sides of a generator $G$ (in the sense of Definition \ref{def:Nk}) a {\em completion} of $G$ on $\cX$. Here we call a side that receives $\geq 2$ taxa a \emph{long side}, a side that receives 1 taxon a \emph{short side} and a side that receives 0 taxa an \emph{empty side}. Figure \ref{fig:level2example} is thus a completion of generator $2a$,
where sides $A$ and $D$ are empty, side $F$ is short, and sides $B, C, E$ are long. Note that node sides (such as $F$ in the example) are always short, but not all short sides are node sides i.e.
edge sides can be short too.
 
Given a generator $G$, we call  \emph{a set of side guesses} for $G$, denoted by $S_G$, a set of guesses about the type of each side of $G$ (i.e. whether it is empty, short, or long). 
A completion $N$ of $G$ on $\cX$ \emph{respects  $S_G$} if all {sides} that are long in  $S_G$ receive at least 2 taxa in $N$, {sides} that are short in  $S_G$ one taxon and {empty sides} zero taxa. 
Then we have the following result:
\begin{observation}
\label{obs:searchingSpace}
Searching in the space of all binary simple level-$k$ networks on $\cX$ is equivalent to searching in the space of  all completions of a level-$k$ generator $G$ respecting a set of side guesses $S_G$, iterating overall  all sets of side guesses for a generator and  all level-$k$ generators.
\end{observation}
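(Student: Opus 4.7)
The plan is to derive Observation 5 as a direct bookkeeping consequence of Lemma 2 together with the definition of ``side guesses''. The key point is that Lemma 2 already tells us that $\mathcal{N}^{k}$ coincides with the set of all binary simple level-$k$ networks on $\cX$. So the only remaining content of the observation is to argue that re-parameterising the enumeration of $\mathcal{N}^{k}$ via an outer loop over generators and over all possible patterns of empty/short/long sides is equivalent to the original enumeration.

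First I would argue the ``forward'' inclusion. Given any binary simple level-$k$ network $N$ on $\cX$, Lemma 2 guarantees that $N$ can be constructed as a completion of some level-$k$ generator $G$ via the leaf hanging procedure of Definition 3. For each side $s$ of $G$, count how many taxa of $\cX$ are placed on $s$ in this construction: this number is either $0$, $1$, or $\geq 2$, and accordingly we classify $s$ as empty, short, or long. These classifications define a unique set of side guesses $S_G$, and by construction $N$ is a completion of $G$ that respects $S_G$. Hence $N$ is encountered when the outer enumeration reaches the pair $(G,S_G)$.

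Next I would argue the ``reverse'' inclusion. Suppose we iterate over all level-$k$ generators $G$ and over all possible side guesses $S_G$, and for each such pair enumerate every completion of $G$ on $\cX$ that respects $S_G$. Any such completion is, by construction, obtained via the leaf hanging procedure of Definition 3, and hence lies in $\mathcal{N}^{k}$. By Lemma 2, it is therefore a binary simple level-$k$ network on $\cX$. So nothing outside the target search space is produced. Combining both directions yields the claimed equivalence.

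The only subtlety, and the one place where care is needed, is that the same binary simple level-$k$ network can in principle arise as a completion of more than one generator (or of the same generator under different leaf-hanging choices consistent with the same $S_G$) because of automorphisms of generators and symmetries among sides; but this only means the outer iteration is non-redundant as a cover, not as a partition, which is harmless for the purpose of an equivalence of search spaces. I do not expect any genuine obstacle here: once Lemma 2 is in hand, the observation is essentially a restatement of the leaf-hanging construction stratified by the coarse statistic ``how many taxa does each side receive''.
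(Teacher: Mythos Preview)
Your proposal is correct and matches the paper's implicit reasoning: the paper states this observation without proof, treating it as an immediate consequence of Lemma~\ref{lemma:NkSet} and the definition of side guesses, which is exactly the argument you spell out. Your remark about possible redundancy (the same network arising from multiple $(G,S_G)$ pairs) is apt and does not affect the equivalence of search spaces; you might also note that some side-guess patterns are infeasible (e.g.\ a node side declared empty, or both edges of a multi-edge declared empty), but these simply admit no completions and are therefore harmless to include in the outer iteration.
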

%This observation will be crucial to prove  Lemma \ref{lem:core}. 
%The following lemma restricts even more the set of networks to check:

%\begin{lemma}[\cite{elusiveness}] 
%\label{lemma:binarySimple}
%Let $\cC$ be a %tangled 
%separating  set of clusters on $\cX$. Let $N$ be
%any network that represents $\cC$. Then there exists a simple network $N^{*}$ with at most one
%leaf-child per node such that $l(N^{*}) \leq l(N)$. 
%\end{lemma}

%The following lemma follows from Lemma \ref{lemma:binarySimple} and Definition \ref{def:Nk}: 

%The previous lemma implies that, when adding a taxon to a  side $s$ of a generator to reconstruct a minimum level network that represents a cluster set $\cC$, we can avoid to consider adding it as a child %of a node  that has already a leaf-child. 

Let $G$ be a level-$k$ generator and let $S_G$ be a  set of side guesses for $G$. We say that the pair $(G,S_G)$ is \emph{side-minimal} w.r.t. a separating cluster set $\cC$ on $\cX$ and  $k$, if there exists a completion $N$ of $G$ on $\cX$ respecting $S_G$ that is a  level-$k$ network representing $\cC$ and, amongst all simple level-$k$ networks that represent $\cC$, $N$ has a minimum number of long sides, and (to further break ties) amongst those networks it has a minimum number of short sides.

We define an \emph{incomplete network} as 
a generator $G$, a set of side guesses $S_G$, a set of \emph{finished sides} (i.e. those sides for which we have already decided that no more taxa will placed on them), a set
of \emph{future sides} (i.e. those short and long sides that have had no taxa allocated yet) and at most one long side on which at least one taxon has already been placed but where we might still want to add some more taxa. We call this the \emph{active side}. A \emph{valid completion} of an incomplete network is an assignment of the unallocated taxa to the future sides and (possibly) \emph{above} the taxa already placed on the active side, that respects $S_G$ 
%, such that the allocation respects the guesses about the type of each side (i.e. empty, short, long) 
and such that the
resulting network (which we call the \emph{result} of the valid completion) represents $\cC$. Informally, the
result of a valid completion is any network on $\cX$  respecting $S_G$ and  representing $\cC$   that is obtained by respecting all
placements of taxa made thus far. 

For example, consider again the network in Figure \ref{fig:level2example}. Let $N$ be the network in that figure and let $N'$ be the
network obtained from $N$ by deleting taxa $c, d, e$ and suppressing the resulting vertices with indegree and outdegree both equal to 1. Let $G$ be generator 2a,
and let $S_G$ be the set of side-guesses where sides $A$ and $D$ are empty, side $F$ is short, and sides $B, C, E$ are long. Sides
$A,B,D,E$ are finished, $F$ is a future side and $C$ is the active side. In particular, we can perform a valid completion of this incomplete network $N'$ by putting
taxa $d$ and $e$ above taxon $h$ on side $C$ and then putting $c$ on side $F$. In this case, $N$ is the result of the completion, although in general an incomplete network
might have many valid completions, or none.\\
\\
Given a cluster set $\cC$, we write $x \rightarrow_{\cC}  y$ if and only if every non-singleton cluster in $\cC$ containing $x$, also contains $y$. Then we have the following result.
\begin{proposition}
\label{prop:noCyclesRightarrow}
Given a separating  set of clusters  $\cC$ on $\cX$ and an ordered set of distinct taxa of $\cX$ $(x_1, \dots, x_k)$ such that  $k\geq 2$ and $x_i \rightarrow_{\cC}  x_{i+1}$ for $ 1 \leq i \leq (k-1)$. Then $x_k \not\rightarrow_{\cC}  x_1$.
\end{proposition}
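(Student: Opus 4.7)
The plan is to prove Proposition \ref{prop:noCyclesRightarrow} by contradiction, so suppose $x_k \rightarrow_\cC x_1$. A direct observation from the definition of $\rightarrow_\cC$ is that this relation is transitive: if every non-singleton cluster of $\cC$ containing $a$ also contains $b$, and every one containing $b$ also contains $c$, then every one containing $a$ also contains $c$. Applying transitivity to the closed chain $x_1 \rightarrow_\cC x_2 \rightarrow_\cC \cdots \rightarrow_\cC x_k \rightarrow_\cC x_1$, I conclude that $x_i \rightarrow_\cC x_j$ for every pair $i,j$. Equivalently, every non-singleton cluster $C \in \cC$ either contains all of $x_1, \dots, x_k$ or none of them.

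Next, set $X' = \{x_1, \dots, x_k\}$, so $|X'| = k \geq 2$ because the taxa are distinct. I would then verify that $X'$ is compatible with every cluster of $\cC$. Singleton clusters are automatically compatible with every set of taxa (either $C \subseteq X'$ or $C \cap X' = \emptyset$). For a non-singleton $C \in \cC$, the dichotomy above forces either $C \cap X' = \emptyset$ or $X' \subseteq C$; in both cases $X'$ and $C$ are compatible.

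Finally, I would invoke the separating hypothesis to extract a contradiction. Because $\cC$ is separating, every set of taxa $\cX'$ with $\cX' \subset \cX$ and $|\cX'|\geq 2$ must be incompatible with $\cC$. Since $X'$ is compatible with all of $\cC$ and $|X'|\geq 2$, this forces $X' = \cX$. But then any non-singleton cluster $C \in \cC$ is non-empty and meets $X' = \cX$, so by the dichotomy $\cX = X' \subseteq C$, contradicting that $C$ is a strict subset of $\cX$. Hence $\cC$ consists only of singletons; but singletons are compatible with every subset of $\cX$, so any proper subset of $\cX$ of size $\geq 2$ is compatible with $\cC$, contradicting separating.

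The main obstacle in this plan is handling the degenerate case $X' = \cX$: the separating hypothesis is stated only for \emph{strict} subsets of $\cX$, so one cannot directly apply it to $X'$ when $k=n$. The argument above dispatches this by observing that in that situation $\cC$ can contain no non-singleton clusters at all, which then violates separating through any other proper subset of size at least two. All other steps (transitivity of $\rightarrow_\cC$ and the case analysis for compatibility) are routine consequences of the definitions.
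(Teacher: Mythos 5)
Your proof is correct and follows essentially the same route as the paper's: the cycle in $\rightarrow_{\cC}$ forces the set $\{x_1,\dots,x_k\}$ to be compatible with $\cC$, which, having size at least $2$, contradicts the separating hypothesis. You merely spell out explicitly the transitivity/dichotomy argument and the corner case $\{x_1,\dots,x_k\}=\cX$ (where the separating condition, stated only for strict subsets, cannot be applied directly), details the paper's two-line proof leaves implicit.
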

\begin{proof}
If $x_i \rightarrow_{\cC}  x_{i+1}$ for $ 1 \leq i \leq (k-1)$ and $x_k \rightarrow_{\cC}  x_1$, this means that the set $\cX'= \cup_{i=1}^{k}  x_i$ is compatible with $\cC$. Since $|\cX'|\geq 2$, we have  a contradiction. 
\end{proof}

The following observations will be useful to prove Lemma \ref{lemma:subroutineIsCorrect}.

%WRONG:
%\begin{observation}
%\label{obs:when_l'_impl_l}
%Let $N$ be  a simple phylogenetic network on $\cC$ representing a separating set of clusters $\cC$ on $\cX$. If two taxa $x$ and $y$ in $\cX$ are such that $y \rightarrow_{\cC}  x$, then there exists a direct path going from  $x$ and $y$ or from $y$ to $x$.
%%see exBelow.graffle to see an example of $b \rightarrow_{\cC}  c$ and c is actually below b
%\end{observation}
 
\begin{observation}
\label{obs:sameSide}
Let $N$ be  a  phylogenetic network on $\cX$ representing a  set of clusters $\cC$ on $\cX$ constructed 
by choosing some level-$k$
generator $G$ and then applying the  \emph{leaf hanging} transformation described in Definition \ref{def:Nk} to $G$. 
%Let $N$ be  a simple phylogenetic network on $\cC$ representing a separating set of clusters $\cC$ on $\cX$. 
If two taxa $x$ and $y$ in $\cX$ are on the same side of the generator underlying $N$ and the parent of $x$ is a descendant of the parent of $y$, then $y \rightarrow_{\cC}  x$. 
\end{observation}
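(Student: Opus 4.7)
The plan is to reduce the statement to a concrete combinatorial description of the side carrying $x$ and $y$ and then track where the witness edge of a cluster can sit. First I would observe that every node side of the underlying generator receives exactly one taxon by step 3 of Definition \ref{def:Nk}, so $x$ and $y$ must share an edge side $s$. By step 1, the side $s$ is obtained by subdividing some edge of the generator into a directed path $u = w_0 \to w_1 \to \cdots \to w_l \to v^*$ and attaching leaves $w_1', \ldots, w_l'$ to the internal vertices $w_1, \ldots, w_l$. Since the leaves on this side are distinct and the parent of $w_m'$ is $w_m$, the hypothesis that the parent of $x$ is a descendant of the parent of $y$ forces $x = w_i'$ and $y = w_j'$ for some indices $1 \le j < i \le l$.

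The key structural fact I would then isolate is that every chain edge $(w_{m-1}, w_m)$ and every pendant edge $(w_m, w_m')$ is a tree edge of $N$: each $w_m$ with $m \ge 1$ has indegree one (being an internal subdivision node), and each $w_m'$ is a leaf. Consequently these edges survive in every switching $T_N$ of $N$, so the ancestors of $y$ in any $T_N$ are exactly $y$ itself, the chain nodes $w_0, w_1, \ldots, w_j$, and the ancestors of $u = w_0$ in $T_N$.

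Now let $C \in \cC$ be any non-singleton cluster containing $y$. Since $N$ represents $\cC$, some switching $T_N$ has an edge whose head $q$ has leaf-descendant set in $T_N$ equal to $C$. Because $y \in C$, the node $q$ is an ancestor of $y$ in $T_N$, and because $|C| \ge 2$ we must have $q \ne y$. By the preceding paragraph, $q$ is either one of $w_0, w_1, \ldots, w_j$ or lies strictly above $u$ in $T_N$. In the first case, since the entire sub-chain $w_m \to w_{m+1} \to \cdots \to w_i \to w_i'$ consists of tree edges, $x = w_i'$ is a descendant of $q$ in $T_N$, hence $x \in C$. In the second case, $u$ is a descendant of $q$ in $T_N$, and the same tree-edge chain from $u$ down to $w_i$ then places $x$ below $q$, so again $x \in C$. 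This proves $y \rightarrow_\cC x$.

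I do not expect any essential obstacle; the only subtleties are (i) recognising that node sides cannot host both $x$ and $y$, so the chain picture applies, and (ii) remembering that the non-singleton hypothesis is needed precisely to rule out $q = y$, since even when $q$ is the immediate parent $w_j$ of $y$ the sibling edge $(w_j, w_{j+1})$ is a tree edge of $N$ and therefore still forces $x$ into $C$.
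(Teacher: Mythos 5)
Your proof is correct: the paper states this as an Observation without proof, and your argument — noting that both taxa must lie on a subdivided edge side, that the chain and pendant edges there are tree edges present in every switching, and then tracking the head $q$ of the edge witnessing a non-singleton cluster containing $y$ down the chain to $x$ — is precisely the reasoning the paper treats as immediate. The only steps left implicit (that acyclicity forces the descendant relation to follow the chain order, so $j<i$, and that node sides carry a single taxon) are trivial and do not constitute gaps.
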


Given a simple phylogenetic network $N$, we say  that a side $s'$ is \emph{reachable} from a side $s$ in $N$ if  there is a directed  path in the generator underlying $N$ from the head of side $s$ to the tail of side $s'$.  

\begin{observation} 
\label{obs:when_l'_impl_l}
Let $N$ be  a  phylogenetic network on $\cX$ representing a  set of clusters $\cC$ on $\cX$ constructed 
by choosing some level-$k$
generator $G$ and then applying the  \emph{leaf hanging} transformation described in Definition \ref{def:Nk} to $G$. Moreover,  let $x$ and $y$ be two taxa of $\cX$ on the same side $s$ of the generator underlying $N$ such that  $y \rightarrow_{\cC}  x$ and let $z$ be a taxon on a side $s' \neq s$ such that $s'$ is not reachable from $s$ and $z \rightarrow_{\cC}  x$. Then we have that  $z \rightarrow_{\cC}  y$. 
\end{observation}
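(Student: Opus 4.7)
The plan is to prove Observation 8 by unpacking the softwired representation of a single cluster. Fix any non-singleton cluster $C \in \cC$ containing $z$; since $z \rightarrow_{\cC} x$ we already have $x \in C$, and it suffices to prove $y \in C$. Because $N$ represents $C$ in the softwired sense, some switching of $N$ gives a displayed tree $T$ together with an edge $(u',v')$ of $T$ whose set of leaf descendants equals $C$. Thus $x$ and $z$ both descend from $v'$ in $T$, and the goal reduces to showing that $y$ descends from $v'$ in $T$ as well.

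The first key step is to locate $v'$. By Definition~\ref{def:Nk}, side $s$ is realised in $N$ as a directed path $u = q_0 \to q_1 \to \cdots \to q_m \to q_{m+1} = w$ whose head is $w$ and whose internal nodes $q_1,\ldots,q_m$ each carry exactly one hanging leaf; let $x = x_{q_i}$ and $y = x_{q_{i_y}}$ with $i \neq i_y$. The set of descendants of any $q_j$ in $N$ is exactly $\{x_{q_j}, x_{q_{j+1}}, \ldots, x_{q_m}\}$ together with the descendants of $w$. None of the taxa in the first set equals $z$ (it lies on $s' \neq s$), and $z$ also does not descend from $w$, because that would require the tail of $s'$ to be reachable from $w$, which is exactly what ``$s'$ is not reachable from $s$'' forbids. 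Hence $v' \notin \{q_1,\ldots,q_m\}$, so every strict ancestor of $x$ that is not on side $s$ lies at $u$ or strictly above $u$ in $N$.

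Since each $q_j$ is a tree node of indegree one, the unique $v'$-to-$x$ path in the binary tree $T$ therefore passes through $u, q_1, q_2, \ldots, q_i$ in order before hitting $x$. Now split on the position of $y$. If $i_y \leq i$, then $q_{i_y}$ already appears on this path, so $y = x_{q_{i_y}}$ descends from $v'$ in $T$ and $y \in C$. If $i_y > i$, then at $q_i$ the tree $T$ branches to its two children $x$ and $q_{i+1}$; continuing down side $s$ from $q_{i+1}$ reaches $q_{i_y}$, so $y$ again descends from $v'$ in $T$ and $y \in C$. Either way $y \in C$, establishing $z \rightarrow_{\cC} y$.

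I expect the main obstacle to be the ``$v' \notin \{q_1,\ldots,q_m\}$'' step, since this is where the combinatorial hypothesis ``$s'$ is not reachable from $s$'' must be translated into a concrete statement about descendants in $N$. The argument hinges on the structural fact that any descendant of an internal node of side $s$ in $N$ is either a hanging leaf of side $s$ or a descendant of the head $w$ of $s$; once this is in hand, the two positional cases for $y$ relative to $x$ follow by routine tracing of the unique $v'$-to-$x$ path in the binary displayed tree $T$.
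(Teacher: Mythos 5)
Your proof is correct and takes essentially the same route as the paper's: fix a non-singleton cluster $C$ containing $z$ (hence $x$), use the softwired representation of $C$ together with the non-reachability of $s'$ from $s$ to force the head of the representing edge to lie at or above the tail of side $s$, and conclude that $y$ hangs below it in the displayed tree. Your write-up is merely more explicit than the paper's one-line claim that the path ``can only reach $x$ via the parent of $y$'' — your case split on the relative positions of $x$ and $y$ also covers the configuration where $y$ lies below $x$, and in fact shows that every taxon on side $s$ lies in $C$, so the hypothesis $y \rightarrow_{\cC} x$ is never needed — but the underlying argument is the same.
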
 
\begin{proof}
Since $z \rightarrow_{\cC}  x$, we know that every non-singleton cluster that contains $z$ also contains $x$. Now, let $C$ such a cluster. $C$ is
represented by some tree $T$ displayed by $N$, so some edge $e$ in $T$ is such
that $C$ is the set of all taxa reachable from directed paths from the head of $e$. Now,
$z$ and $x$ are both in $C$, so there a directed path from the head of $e$ to $z$ and a directed path from
the head of $e$ to $x$. Since $s'$ is not reachable from $s$, the only way that such a directed path can reach $x$ is via the parent of $y$, hence the fact
that $z \rightarrow_{\cC}  y$.
\cs{If no cluster $C$ containing $z$ and $x$ exists, since $z \rightarrow_{\cC}  x$ we have that the only cluster containing $z$ is the singleton cluster $\{z\}$. Then, obviously, $z \rightarrow_{\cC}  y$ too.}
\end{proof}

%\begin{observation}
%\label{obs:implications}
%Let $N$ be  a simple phylogenetic network on $\cC$ representing a separating set of clusters $\cC$ on $\cX$. Let $x$ and $y$ two taxa in $\cX$  on the same side $s$ of the generator underlying $N$ such that $y \rightarrow_{\cC}  x$ and let $z$ be a taxon in $\cX$ such that a direct path from the parent of  $z$ to the source node of $s$ exists. If $z \rightarrow_{\cC} x$, then $z \rightarrow_{\cC} y$. 
%\end{observation}

\begin{observation}
\label{obs:onlyCluster}
Let $N$ be  a  phylogenetic network on $\cX$ representing a  set of clusters $\cC$ on $\cX$ constructed 
by choosing some level-$k$
generator $G$ and then applying the  \emph{leaf hanging} transformation described in Definition \ref{def:Nk} to $G$. 
Let $x$ and $y$ be  two taxa in $\cX$  on the same side $s$ of the generator underlying $N$ such that there exists an edge $e$ from the parent of $y$ to the parent of $x$. Then $e$ represents all   clusters in $\cC$ containing $x$ but not $y$.
\end{observation}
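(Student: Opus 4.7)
The plan is, for any cluster $C\in\cC$ with $x\in C$ and $y\notin C$, to produce a switching of $N$ in which the head of $e$—namely, the parent of $x$—has leaf descendants equal exactly to $C$. Since $N$ represents $\cC$ in the softwired sense, for this particular $C$ there must exist some switching $T_N$ of $N$ and some edge $e^*$ of $T_N$ whose head $v^*$ has $C$ as its set of leaf descendants in $T_N$. My strategy is to argue that $v^*$ is forced to coincide with the parent of $x$, so that $T_N$ together with $e$ itself already witnesses the required representation.

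To locate $v^*$, I would exploit the fact that the parent of $x$, being an internal vertex on the spine of side $s$ in the construction of Definition \ref{def:Nk}, is a tree node whose unique in-edge in $N$ is precisely $e$. Since $e$ is a tree edge, it is preserved in every switching, so in any switching the unique directed path from the root down to $x$ has the form root, $\ldots$, parent of $y$, parent of $x$, $x$. Because $x\in C$, $v^*$ is an ancestor of $x$ in $T_N$ and must lie on this path; hence either $v^*$ equals the parent of $x$, or $v^*$ equals the parent of $y$, or $v^*$ is a strict ancestor of the parent of $y$ in $T_N$.

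The main (but easy) obstacle is to rule out the last two possibilities using the hypothesis $y\notin C$. In either of them the parent of $y$ is a descendant of $v^*$ in $T_N$; but the edge from the parent of $y$ to $y$ is itself a tree edge and is therefore present in every switching, so $y$ would be a leaf descendant of $v^*$ in $T_N$, giving $y\in C$, a contradiction. Consequently $v^*$ equals the parent of $x$, i.e.\ the head of $e$, and $T_N$ exhibits $e$ as representing $C$. The argument above implicitly assumes $|C|\ge 2$, which is harmless: if $C=\{x\}$ then the conclusion follows either from the same case analysis when the spine structure below the parent of $x$ allows it, or else by appealing to the in-edge of $x$, which trivially represents $\{x\}$ in every switching and can be used in the subsequent reasoning in place of $e$.
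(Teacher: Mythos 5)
The paper states this observation without any proof, so there is no written argument to compare against; your proof supplies essentially the justification the authors leave implicit, and its core is correct. Since $x$ and $y$ lie on the same side and their parents are joined by the edge $e$, the side is an edge side and both parents are subdivision vertices of indegree one; hence $e$ and the pendant edges to $x$ and to $y$ are tree edges and survive in every switching, so in any switching the root-to-$x$ path ends with the parent of $y$, the parent of $x$, $x$. Your case analysis then correctly forces the head $v^*$ of any edge representing a cluster $C$ with $x\in C$, $y\notin C$ and $|C|\ge 2$ to be the parent of $x$ (the options ``parent of $y$'' or higher are killed because the pendant edge of $y$ is present in every switching, and $v^*=x$ is excluded by $|C|\ge 2$), after which $e$ itself, being present in that same switching with head $v^*$, witnesses the representation. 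As a bonus, since $e$ is the unique in-edge of the parent of $x$, your argument also yields the uniqueness reading that the paper actually invokes later (``the only edge that can represent such a cluster is $e$'').

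The one flaw is your treatment of $C=\{x\}$. In that case the literal conclusion never holds: the parent of $x$ keeps its second child (the next spine vertex of the side, or the bottom endpoint of the underlying generator edge) in every switching, and because a switching deletes only incoming edges of reticulations, every node still reaches some leaf, so that child reaches a taxon different from $x$; hence the head of $e$ always has at least two leaf descendants and $e$ cannot represent $\{x\}$. So the clause ``when the spine structure below the parent of $x$ allows it'' is vacuous, and the correct resolution is the second half of your sentence: the observation should be read as a statement about non-singleton clusters, which is how the paper uses it (the clusters it is applied to always contain a second taxon, e.g. a taxon of $B(l)$, or are non-singleton by assumption in Lemma 3), while $\{x\}$ is trivially represented by the pendant edge of $x$ in every switching. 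Stating this explicitly, rather than suggesting the case analysis might still go through, would make your proof fully precise.
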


\begin{observation}
\label{obs:size2Incomp}
Let $\cC$ be a separating cluster set on $\cX$. Then every size-2 subset of  $\cX$ is incompatible with $\cC$. 
\end{observation}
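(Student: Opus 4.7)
The plan is essentially to unfold the definition of a separating cluster set and observe that a size-2 subset satisfies the hypothesis of that definition. Recall that $\cC$ is separating precisely when every $\cX' \subsetneq \cX$ with $|\cX'| \geq 2$ is incompatible with $\cC$. So given any pair $\{x,y\} \subseteq \cX$ with $x \neq y$, I would like to instantiate this definition with $\cX' := \{x,y\}$.

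The only subtlety is ensuring that $\{x,y\}$ is a \emph{proper} subset of $\cX$, which is needed to apply the definition. When $|\cX| \geq 3$ this is automatic since $|\{x,y\}| = 2 < |\cX|$. When $|\cX| = 2$ the statement is essentially degenerate: the only size-2 subset is $\cX$ itself, and in that case one checks that the definition of separating is vacuously satisfied and the observation is either vacuous or trivially true (there is nothing for $\cC$ to be incompatible with in a meaningful way, since every cluster is a proper subset of $\cX$ and hence compatible with $\cX$). In the substantive regime $|\cX| \geq 3$, the conclusion is immediate.

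So the actual proof reduces to two sentences: pick $\cX' = \{x,y\}$, note $|\cX'| = 2 \geq 2$ and $\cX' \subsetneq \cX$, and invoke the definition of ``separating'' to conclude that $\{x,y\}$ is incompatible with $\cC$. There is no real obstacle here; the observation is just a reformulation of the definition specialized to pairs, presumably isolated as a named observation because it will be invoked repeatedly in the sequel (likely in arguments about where pairs of taxa must be placed in the network, e.g.\ when reasoning about sides of a generator).
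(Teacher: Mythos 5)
Your proof is correct and matches the paper's treatment: the paper states this observation without proof, treating it as an immediate unfolding of the definition of ``separating'' (instantiate $\cX'$ with the given pair), which is exactly what you do. Your remark on the degenerate case $|\cX|=2$ is a reasonable extra precaution but plays no role in how the observation is used later.
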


\begin{figure}[t]
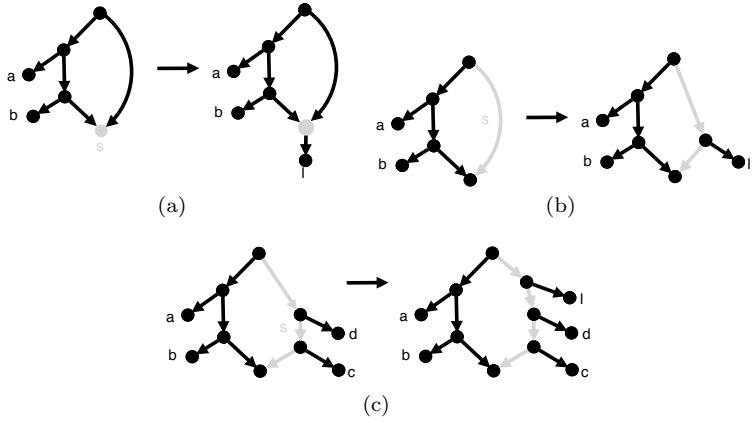

  \centering
  \begin{subfigure}[]
  {
    \centering
    \includegraphics[scale=0.16]{./figures/nls}
    \label{fig:nlsFirst}
    \vspace{.5cm}
  }
  \end{subfigure}
%  \hspace{1cm}
  \begin{subfigure}[]
  {
    \centering
    \includegraphics[scale=0.16]{./figures/nls2}
    %\hspace{.8cm}
    \label{fig:nlsSecond}
  }
  \end{subfigure}

  \begin{subfigure}[]
  {
    \centering
    \includegraphics[scale=0.16]{./figures/nls3}
    %\hspace{.8cm}
    \label{fig:nlsThird}
  }
  \end{subfigure}

  \caption{Three examples of the $N(l,s)$ operation. (a) $N(l,s)$ when $s$ is an unfinished short node side; (b) $N(l,s)$ when $s$ is an unfinished short edge side (or a long side that does not
yet have any taxa); (c) $N(l,s)$ when $s$ is a long side that already has at least one taxon.}
  \label{fig:nlsExamples}
\end{figure}

Let $N$ be a simple phylogenetic network $N$, $l$ a taxon and $s$ a side of the generator $G$ underlying $N$. We denote by $N(l,s)$ the following operation, where we exclude the case
from consideration where $s$ is a short side that already has a taxon on it. If $s$ is a short side, then $N(l,s)$ is simply the network obtained by putting $l$ on side $s$ (in the sense of Definition \ref{def:Nk}).
Otherwise, $s$ is a long side, and then $N(l,s)$ is the network obtained by placing $l$ ``just above'' the highest taxon on side $s$. If there are not yet any taxa on side $s$ then
we simply let $l$ be the first taxon on side $s$. (See Figure \ref{fig:nlsExamples} for clarification).\\
\\
We are now ready to analyse Algorithm \ref{algo:addOnSide}, which is a critical subroutine. Let us assume that we have an incomplete network $N$ with an active side $s$ (which is by definition
long) such that \underline{all} long sides $s' \neq s$ that are reachable from $s$, are \underline{finished}. These preconditions will be motivated in due course. Informally, Algorithm \ref{algo:addOnSide} lets us decide whether
we should continue adding taxa to the top of the active side, or stop and declare it finished. (In fact, the algorithm is rather more complicated than that, because a side-effect of the algorithm
is that it sometimes adds taxa to unfinished short sides, irrespective of whether it has chosen to add a taxon to the top of the active side). Algorithm \ref{algo:completeSide} will
repeatedly call Algorithm \ref{algo:addOnSide} until it finally declares the active side finished. This is all ultimately driven by the main algorithm, Algorithm \ref{algo:core}, which - amongst
other tasks - then identifies and initialises (i.e. places a first taxon at the bottom of) a new active side.

\begin{lemma}
\label{lemma:subroutineIsCorrect}
Let $\cC$ be a separating set of clusters on $\cX$ and let $k$ be the first integer for which a level-$k$ network representing $\cC$ exists. 
Let $N$ be an incomplete network such that 
its underlying generator $G$ and set of side guesses $S_G$ are such that $(G,S_G)$ is side-minimal w.r.t. $\cC$ and $k$, and let $s$ be an active side of $N$. 
%Algorithm \ref{algo:addOnSide}  computes a set of (incomplete) networks $\mathcal{N}$ such that, if  a valid completion for $N$ exists, then the set  $\mathcal{N}$  will contain a network  for which there  exists  a valid completion. 
Then, if a valid completion for $N$ exists, Algorithm \ref{algo:addOnSide}  computes a set of (incomplete) networks $\mathcal{N}$ such that  
%$\mathcal{N}$
 this set contains at least one  network  for which a valid completion exists. 
\end{lemma}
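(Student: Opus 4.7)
The plan is to begin from a hypothetical valid completion $N^*$ of the incomplete network $N$ (which exists by hypothesis) and to argue that at least one network produced by Algorithm~\ref{algo:addOnSide} admits $N^*$, or a minor modification of $N^*$, as a valid completion. Let $t$ denote the taxon currently placed highest on the active side $s$. In $N^*$, either nothing is placed above $t$ on $s$ (in which case $s$ is ``finished'' in $N^*$), or some taxon $l^*$ is placed immediately above $t$ on side $s$. I expect Algorithm~\ref{algo:addOnSide} to output both the ``finish side $s$'' option and, for each candidate taxon $l$ satisfying $l \rightarrow_{\cC} t$, the network $N(l,s)$, possibly augmented by forced placements on certain short sides. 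By Observation~5, any taxon placed immediately above $t$ on $s$ in $N^*$ must satisfy $l^* \rightarrow_{\cC} t$, so $l^*$ is guaranteed to appear as one of the enumerated candidates; the corresponding output is then a prefix of $N^*$, which admits $N^*$ as a valid completion. The ``finish'' output covers the other sub-case.

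The main obstacle will be the forced placements of taxa on short sides that accompany the choice of $l^*$. By Observation~7, the new edge above $t$ in $N(l^*,s)$ must represent exactly those clusters in $\cC$ that contain $l^*$ but not $t$. Any such cluster that cannot already be represented in the partial network forces specific taxa to occupy specific short sides. My plan is to show that each such forced placement is consistent with some valid completion, even if it disagrees with $N^*$ pointwise: when $N^*$ places a taxon differently, I will produce another valid completion by swapping two taxa, without changing its reticulation number, its level, or the set of clusters it represents. The swap will be justified by Observation~6, which, under the precondition that every long side reachable from $s$ (other than $s$ itself) is finished, transfers $\rightarrow_{\cC}$ relations between long sides and unreachable sides and thereby identifies pairs of interchangeable taxa.

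Finally, I will invoke side-minimality of $(G, S_G)$ to rule out degenerate alternatives: any valid completion with a strictly smaller number of long sides, or with fewer short sides, would contradict the minimality hypothesis, so the swap operations above necessarily preserve the side pattern prescribed by $S_G$. Proposition~\ref{prop:noCyclesRightarrow} ensures that $\rightarrow_{\cC}$ does not cycle, so the candidate enumeration terminates and the ``highest so far'' ordering on $s$ is well defined; Observation~\ref{obs:size2Incomp} supplies the incompatibilities that make the short-side forcing unambiguous. The hard part, I expect, will be the swap step: it requires careful bookkeeping to show that every cluster of $\cC$ previously represented by $N^*$ is still represented after the exchange, which is precisely the setting where Observation~\ref{obs:when_l'_impl_l} becomes the critical tool, together with the ``only cluster'' equality provided by Observation~\ref{obs:onlyCluster}.
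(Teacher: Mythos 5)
There is a genuine gap: your plan proves the correctness of a different, more exhaustive algorithm than Algorithm~\ref{algo:addOnSide}. You assume the algorithm outputs the ``finish $s$'' option together with $N(l,s)$ for \emph{every} candidate $l$ with $l \rightarrow_{\cC} x_i$, so that the taxon $l^*$ sitting above $x_i$ in the hypothetical completion $N^*$ is automatically enumerated and the corresponding output is a prefix of $N^*$. The algorithm does neither of these things. It restricts attention to the subset $L'\subseteq L$ of $\rightarrow_{\cC}$-minimal candidates; when $U=\emptyset$ and $|L'|\neq 1$, or when $B(l)\neq\emptyset$, or when the representation test on $N(l,s)$ (or on the dummy-taxon network $N^*(l,s)$ against $\cC^*$) fails, it discards the ``add a taxon'' option entirely; and, conversely, in the positive branches it returns \emph{only} $N(l,s)$ and drops the ``finish'' option. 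So the heart of the lemma is precisely what your plan skips: (i) the forcing arguments showing that no valid completion can place an unallocated taxon above $x_i$ in the pruned branches (using the definition of $L'$, Observations~\ref{obs:sameSide}, \ref{obs:when_l'_impl_l} and \ref{obs:onlyCluster}, and acyclicity of $\rightarrow_{\cC}$ from Proposition~\ref{prop:noCyclesRightarrow} to show $L\neq\emptyset$, $L'=\emptyset$ cannot occur); and (ii) an exchange argument for the branches where the algorithm commits to $N(l,s)$ alone, showing that if \emph{any} valid completion exists then one exists with $l$ directly above $x_i$.

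Your swap idea points in the direction of (ii) but is aimed at the wrong place and uses the wrong operation. The reconciliation needed is not between $N^*$ and forced short-side placements, but between $N^*$ (which may finish $s$ at $x_i$ and put $l$ somewhere else entirely) and the single committed output $N(l,s)$. The paper handles this by \emph{relocating} $l$ from wherever it sits in $N^*$ to just above $x_i$: cluster representation is preserved because $B(l)=\emptyset$ (respectively, because $N^*(l,s)$ represents $\cC^*$ whatever the dummy taxa turn out to be) together with Observation~\ref{obs:onlyCluster}, and the side pattern is preserved because a relocation decrements one side's cardinality, and side-minimality of $(G,S_G)$ forbids a long side becoming short or a short side becoming empty. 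A two-taxon swap, as you propose, does not change side cardinalities at all, so side-minimality would be doing no work there, and it does not address the case where $l$ must be moved onto $s$ without anything moving off it. Without the pruning justifications in (i) and the relocation argument in (ii), the proof does not establish correctness of Algorithm~\ref{algo:addOnSide} as written (and the exhaustive variant you implicitly analyze would also undermine the $f(k)\cdot poly(n)$ bookkeeping used later in Lemma~\ref{lemma:step1}).
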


\begin{algorithm}[] 
$\cX' \gets{} \cX \setminus \cX(N)$\;
 $x_i \gets{}$ the most recent taxon inserted on side $s$\;
 $L \gets{} \ \{ l \in \cX' |$ $l \rightarrow_{\cC}  x_{i}\}$\;
$L' \gets{} \{ l \in L | \text{ there does not exist } l' \in L \text{ such that } l' \neq l \text{ and } l \rightarrow_{\cC}  l' \}$\;
$U \gets \{s'| s'\neq s$ is a side of $N$ that is not yet finished and is reachable from $s\}$\;
%\emph{/* Note that each side in $U$ will have space for exactly one taxon. */}\\
\ForEach{ $l \in L'$} {$S(l) =  \bigcup \{ C \in \cC|$ $x_i \in C \text{ and } l \not \in C\}$\;  $B(l) = \cX' \cap S(l)$.}

%\emph{/* Intuitively $B(l)$ is the set of not-yet allocated taxa that will have to go ``under'' side $s$ if we place $l$ directly above $x_i$. */}\\
%\\
\If{$U = \emptyset$} {
%\emph{/* --------------- $|U| = 0$. -------------- */}\\
%\\
\lIf{$|L'| \neq 1$}{declare $s$ as finished in $N$ and \Return $N$\;}%$\textsc{END SIDE}\;}
$l \gets{}$  removeFirst($L'$)\;
\lIf{$B(l) \neq \emptyset$}{declare $s$ as finished in $N$ and \Return $N$\;}%$\textsc{END SIDE}\;}
%let $N^{+l}$ be the network on $\cX(N) \cup \{l\}$ obtained by placing $l$ directly above $x_{i}$ in the current network\;
\lIf{$N(l,s)$ does not represent $\cC$ restricted to $\cX(N) \cup \{l\}$}
{declare $s$ as finished in $N$ and \Return $N$\;}%$\textsc{END SIDE}\;}
\Else{
%choose $l$ as $x_{i+1}$ and \textsc{CONTINUE}
\Return $N(l,s)$\;}%completeSide($N(l,s),s$)\;}
}
\Else{
%\emph{/* --------------- $|U| > 0$. -------------- */}\\
%\\
\If{$L' = \emptyset$}{declare $s$ as finished in $N$ and \Return $N$\;}%$\textsc{END SIDE}\;}
\If{$|L'|\geq 2$}{
	%\nlset{1}
	$\mathcal{N} \gets{} N$, where $s$ is declared as finished\;
	%\nlset{2}
	\If{$|L'| \leq |U|$}{
		$\mathcal{N'} \gets{}$ the set of networks obtainable from $N$ by allocating all taxa in $L'$ to sides in $U$\;
%		\ForEach{$N \in \mathcal{N'}$}{
%			$\mathcal{N} \gets{} \mathcal{N} \cup N',s$ completeSide($N',s$)\;
%		}
		$\mathcal{N} \gets{} \mathcal{N} \cup \mathcal{N'}$\;
		}
	%\nlset{3}
	\If{$|L'|-1 \leq |U|$}{
		\ForEach{$l \in L'$}{
			$\mathcal{N'} \gets{}$ the set of networks obtainable from $N(l,s)$ by allocating all taxa in $L'\setminus \{l\}$ to sides in $U$\;
%		\ForEach{$N \in \mathcal{N'}$}{
%			$\mathcal{N} \gets{} \mathcal{N} \cup$ completeSide($N',s$)\;
%		}
		$\mathcal{N} \gets{} \mathcal{N} \cup \mathcal{N'}$\;
	}
	}
	\Return $\mathcal{N}$\;
	}
\If{$|L'|=1$}{
%$l$ be the single taxon in $L'$ and consider $B(l)$. 
$l \gets{}$  removeFirst($L'$)\;
$\mathcal{N} \gets{} \emptyset$\; 
\If{$B(l) \neq \emptyset$}{ 
	%\nlset{1$^{-}$}
	$\mathcal{N} \gets{} N$, where $s$ is declared as finished\;
	%\nlset{2$^{-}$}
	 \ForEach{side $s' \in U$}{
		$\mathcal{N} \gets{} \mathcal{N} \cup N(l,s')$\; %completeSide($N(l,s'),s$)\;
}
	%\nlset{3$^{-}$}
	\If{$|B(l)| \leq |U|$}{
		$\mathcal{N'} \gets{}$ the set of all networks obtainable from $N(l,s)$ by allocating all taxa in $B(l)$ to sides in $U$\;
%		\ForEach{$N \in \mathcal{N'}$}{
%			$\mathcal{N} \gets{} \mathcal{N} \cup$ completeSide($N',s$)\;
%		}
		$\mathcal{N} \gets{} \mathcal{N} \cup \mathcal{N'}$\;
	}
}
%guess which of the following three possible steps is correct
%; note that if $|B(l)| > |U|$ then the third step can be \emph{a priori} excluded.
%\begin{enumerate}
%\item \textsc{END SIDE}\;
%\item allocate $l$ to some side in $U$ (guess which one) and then \textsc{CONTINUE}\;
%\item (excluded if $|B(l)| > |U|$) choose $l$ to be be $x_{i+1}$ and allocate all taxa in $B(l)$ to sides in $U$ (guess how that is done), then \textsc{CONTINUE}\;
%\end{enumerate}
\Else{
%\emph{/* $|U| > 0$, $|L'|=1$, $B(l)=\emptyset$. */}\\
%\\
%let $N^{+l}$ be the network on $\cX(N) \cup \{l\}$ obtained by placing $l$ directly above $x_{i}$ in the current network\;
\If{$N(l,s)$ does \emph{not} represent $\cC$ restricted to $\cX(N) \cup \{l\}$}{ 
	%\nlset{1$^{+}$}
	$\mathcal{N} \gets{} N$, where $s$ is declared $s$ as finished\;
	%\nlset{2$^{+}$}
	 \ForEach{side $s' \in U$}{
		$\mathcal{N} \gets{} \mathcal{N} \cup N(l,s')$\; %completeSide($N(l,s'),s$)\;
}
%guess one of the following
%two steps:
%\begin{enumerate}
%\item \textsc{END SIDE}\;
%\item allocate $l$ to some side in $U$ (guess which one) and then \textsc{CONTINUE}\;
%\end{enumerate}
}
\Else{
%If $N^{+l}$ \emph{does} represent $\cC$ restricted to $\cX(N) \cup \{l\}$ then we require an additional test. 
$D \gets{}$  an arbitrary set of $|U|$  taxa such that $D \cap \cX = \emptyset$\; 
$N^*(l,s) \gets{}$ a network obtained from $N(l,s)$ by arbitrarily and bijectively assigning each taxon in $D$ to a side in $U$\;
 $\cC^{*} \gets{}\{ C\in  \cC$ such that $x_i \in C$, $l \not \in C$, and $C \subseteq \cX(N)\}$\;
 \If{ $N^*(l,s)$  does not
represent $\cC^{*}$}{ 
%\textsc{END SIDE}
	$\mathcal{N} \gets{} N$, where $s$ is declared $s$ as finished\;
	%\nlset{2$^{+}$}
	 \ForEach{side $s' \in U$}{
		$\mathcal{N} \gets{} \mathcal{N} \cup N(l,s')$\; 
	}
	}
\Else{
%choose $l$ as $x_{i+1}$ and \textsc{CONTINUE}
$\mathcal{N \gets}$ $N(l,s)$\;} %completeSide($N(l,s),s)$ \;}
}

%\\
%\emph{/* $|U| > 0$,  $|L'| \geq 2$}. */\\
%\\
}
\Return $\mathcal{N}$\;
%In this case we guess one of the following three possible steps
%. Note that the second possible step can be \emph{a priori} excluded if $|L'| > |U|$, and the third possible step can be \emph{a priori} excluded if $|L'|-1 > |U|$.
%\begin{enumerate}
%\item \textsc{END SIDE}.
%\item (excluded if $|L'| > |U|$) allocate all taxa in $L'$ to sides in $U$ (and guess how that is done), then \textsc{CONTINUE}.
%\item (excluded if $|L'|-1 > |U|$) let one taxon of $L'$ be $x_{i+1}$ (guess which one) and allocate the remaining taxa in $L'$ to sides in $U$ (guess how that is done), then \textsc{CONTINUE}.
%\end{enumerate}
}
}
\caption{
%Subroutine for choosing taxon $x_{i+1}$ on side $s$ or deciding to \textsc{END SIDE}
addOnSide($N,s$)\label{algo:addOnSide}
}
\end{algorithm}
\begin{proof}
%We need to prove that Algorithm \ref{algo:addOnSide} is correct, and that its running time is $f(k) \cdot poly(n)$.  
%We begin with  proving the correctness.
%Proving correctness is rather subtle and requires that we formally sharpen  this idea that we will sometimes
%end up constructing a network that is not the same as $N$. 
%We define an \emph{incomplete network} as
%a generator, a set of guesses about the type of each side (i.e. whether it is empty, short, or long), a set of \emph{finished sides} (i.e. those sides for which we have already decided that they have all their taxa, and which we will not touch again), a set
%of \emph{future sides} (i.e. those short and long sides that have had no taxa allocated yet) and at most one long side in which $s^{-}$ has been identified but where we might still want to add some taxa, we call this the \emph{active side}. A \emph{valid completion} of an incomplete network is an assignment of the unallocated taxa to the future sides and/or to the top of the active side,
%such that the allocation respects the guesses about the type of each side (i.e. empty, short, long) and such that the
%resulting network (which we call the \emph{result} of the valid completion) represents $\cC$. Informally, the
%result of a valid completion is any network on $\cX$  that represents $\cC$ and such that it is obtained by respecting all
%guesses made thus far. 
%The
%general proof strategy is to show that, if $N$there is a valid completion before the current iteration of the subroutine, then there will
%also be a valid completion after the current iteration. 

Recall that, from Corollary \ref{cor:genIsSufficient}, we can restrict our search to networks in $\mathcal{N}^k$. 
We write $\mathcal{X}(N)$ to denote the set of taxa present in a (incomplete) network $N$. For a set of clusters $\cC$ on $\cX$ and a subset $\cX' \subseteq \cX$, we define the
\emph{restriction} of $\cC$ to $\cX'$ as $\{ C \cap \cX' | C \in \cC\}$. We start the proof by analyzing  the case when $U = \emptyset$ (see Algorithm \ref{algo:addOnSide} for the definition of $\cX'$, $U$, $B(l)$,  etc).

\paragraph{\textbf{Case $\boldsymbol{U = \emptyset}.$}} 
Suppose $|L'| \neq 1$. If $\boldsymbol{|L'| = 0}$ then there are two possibilities. 
If $L= \emptyset$ then clearly no taxon $l$ can be placed  directly above
$x_{i}$ on $s$, because that would mean $l \rightarrow_{\cC}  x_i$, and thus $l \in L$, contradiction. Hence the only correct move is to declare that the side $s$ is finished and return $N$. If $L \neq \emptyset$ then, since $|L'| = 0$, we have that, for every $l \in L$ there exists
some $l' \in L$ such that $l \neq l'$ and $l \rightarrow_{\cC}  l'$. Clearly the $\rightarrow_{\cC} $ relation is not allowed to create cycles in $L$, because otherwise the set of taxa in the cycle would form a cluster compatible with $\cC$ (see Proposition \ref{prop:noCyclesRightarrow}). Suppose we start at an arbitrary taxon in $L$ and perform a non-repeating walk on the taxa of $L$ by following the $\rightarrow_{\cC} $ relation. Given that $L$ is of finite size and this walk cannot visit a taxon of $L$ that it has already visited earlier in the walk (thus creating a cycle), we will find a taxon $l \in L$ such that there is no $l' \in L$ such that $l \neq l'$ and $l \rightarrow_{\cC}  l'$,
meaning that $l \in L'$, contradiction. So the case that $L \neq \emptyset$ but $L' = \emptyset$, cannot actually happen. 
Now, consider the case that $\boldsymbol{|L'| \geq 2}$. Algorithm \ref{algo:addOnSide} will always end the side $s$ and return $N$ in this case. Indeed, no valid completion of $N$ can have some taxon $p$ that has not yet been allocated above $x_i$ on side $s$. 
Suppose this is not true. 
Clearly, from Observation \ref{obs:sameSide}, $p \rightarrow_{\cC}  x_i$, so $p \in L$. In this case, all taxa in $L'$ are either equal to $p$, or underneath $p$ and above $x_i$. Indeed, let $l \neq p$ be a taxon in $L'$ and suppose,  for the sake of contradiction, that $l$ is above $p$ on side $s$ or on another side $s'$. 
If $l$ is above $p$ on side $s$, then from Observation \ref{obs:sameSide} we have that $l \rightarrow_{\cC}  p$. 
If $l$ is  on another side $s'$, the fact that $|U|=0$ implies that there is no room under side $s$ so, by Observation \ref{obs:when_l'_impl_l} we have that $l \rightarrow_{\cC}  p$.  
%Since $l \rightarrow_{\cC}  x_i$ and  $|U|=0$, Observation \ref{obs:when_l'_impl_l} implies that $s'$ has to be above $s$. Then, from Observations \ref{obs:sameSide} and \ref{obs:implications}, 
%Since 
%and passing for $s$ is the only way to reach $x_i$. 
Thus, in both cases (i.e. if $l$ is above $p$ on side $s$ or on a different side $s'$) we have that $l \rightarrow_{\cC}  p$, meaning that $l \not \in L'$, contradiction. 
%Now, observe that a taxon $l \neq p \in L'$ must be under $p$ on side $s$. Indeed, the only other option for $l$ is that it is above
%$p$ on side $s$, or on another side. But every cluster that contains $l$, also contains $x_i$. So if $l$ is not underneath $p$ then every cluster that contains $l$ also contains $p$, because this is the only
%way to reach $x_i$. (The last statement holds because $U = \emptyset$ i.e. there is no room for $l$ ``under'' side $s$). In other words it would mean that $l \rightarrow_{\cC}  p$, meaning that $l \not \in L'$, contradiction. 
We can hence conclude that each taxon in $L'$ is either equal to $p$, or underneath $p$ and  above $x_i$ in any completion of $N$ where $p$ is on $s$. 
But, however one arranges two or more taxa on one side, 
at least one taxon will imply another taxon in the sense of the $\rightarrow_{\cC} $ relation. 
More formally,  in any case there exist two taxa  $l$ and $l'$ in $L'$ such that $l \neq l'$ and $l \rightarrow_{\cC}  l'$. This implies that $l \not \in L'$, contradiction. This concludes the correctness of the case $|L'| \neq 1$.

We now consider the  case when $\boldsymbol{|L'| = 1}$. Let $l$ be the only taxon in $L'$. 
In this case, Algorithm \ref{algo:addOnSide} will return $N$ if $\boldsymbol{B(l) \neq \emptyset}$. Indeed, no valid completion of $N$ exists where 
 one or more taxa are  placed above $x_i$ on $s$. 
Suppose this is wrong. In that case, observe that in every valid completion $l$ always has to be the taxon directly above $x_i$. Indeed, if there was some valid completion such that $l$ is not directly above $x_i$, then 
there would exist some taxon $l' \neq l$ such that  $l' \rightarrow_{\cC}  x_i$ (from Observation \ref{obs:sameSide}) and $l \rightarrow_{\cC} l'$ (as before, this follows   from  the fact that  $U=\emptyset$ and from Observations   \ref{obs:sameSide} and \ref{obs:when_l'_impl_l}).   
This would mean that $l \not \in L'$, contradiction.  So we assume that $l$ is directly above $x_i$. Now,
since $B(l) \neq \emptyset$, then there is some cluster in the input that contains $x_i$, does not contain $l$, and contains some not-yet allocated taxon distinct from $l$. 
%Any edge that represents
%such a cluster 
%must begin at the edge
%%why "must begin"?
%\marginpar{?}
From Observation \ref{obs:onlyCluster}, the only edge that can represent such a cluster is the edge $e$ 
 between the parents of $x_i$ and $l$. But all the clusters  represented by $e$ consist only of already-allocated taxa, because $U = \emptyset$. This means that adding $l$ on side $s$ will only lead us to construct non-valid completions. Hence we
conclude that, if $B(l) \neq \emptyset$, all valid completions of $N$ do not contain any other taxon on $s$ and ending the side $s$ is the right choice.  

%but to declare \textsc{END SIDE}.
Now consider the case $\boldsymbol{B(l) = \emptyset}$ and let $\cC'$ be $\cC$ restricted to $\cX(N) \cup \{l\}$. If $\boldsymbol{N(l,s)}$  \textbf{does not represent $\boldsymbol{\cC'}$} we are definitely correct to declare the side $s$ as finished and return $N$.
Indeed, 
all valid completions of N do not contain any other taxon on $s$. 
 Suppose it is not correct. Then there exists a valid completion of  $N$ where at least one taxon is above $x_i$ on $s$. Again, for the same reasons as above we assume that $l$ is always the taxon directly
above $x_{i}$. Since $N(l,s)$ does not represent $\cC'$, this incompatibility cannot be eliminated by adding more taxa, hence we conclude that there are no valid completions of $N$ with taxa above $x_i$ on side $s$. Hence, ending the side $s$ is 
%\textsc{END SIDE} was 
the only correct option.  
Suppose now that $\boldsymbol{N(l,s)}$ \textbf{does represent} $\boldsymbol{\cC}'$; Algorithm \ref{algo:addOnSide} adds $l$ above $x_i$ on side $s$, 
%chooses $l$ as $x_{i+1}$, 
and does \emph{not} declare $s$ as finished. This conclusion can only be incorrect if \emph{all} valid completions
require that $l$ is \emph{not} directly above $x_i$. We observe that in any valid completion of $N$  there can be no taxon $l'\neq l$ directly above $x_i$ on $s$, because otherwise, as before, since $U = \emptyset$ we will have that $l \rightarrow_{\cC}  l' \rightarrow_{\cC}  x_i$
and hence $l \not \in L'$, contradiction. So all valid completions terminate the side at $x_i$. Let $N'$ be an arbitrary valid completion of $N$ and denote by $N''$ the network obtained from $N'$ by moving $l$, wherever it is, just
above $x_i$. Firstly, we claim that $N''$ still represents $\cC$. Recall that $l \rightarrow_{\cC}  x_i$, so the only potential problem  is with clusters in $\cC$ that contain $x_i$ but do not
contain $l$.  Let $C$ be such a cluster not represented by $N''$. Suppose $C \not \subseteq \cX(N) \cup \{l\}$. But in this case we would have $B(l) \neq \emptyset$ in $N$, contradiction. 
%so in that case
%we would be in a different subcase and we would never have placed $l$ above $x_i$ anyway i.e. we would have decided \textsc{END SIDE}. 
So the only possibility is that $C \subseteq \cX(N) \cup \{l\}$. Clearly $C$ was in $\cC'$ and was
thus represented by  $N(l,s)$. Moreover, from Observation \ref{obs:onlyCluster}, the edge  that represents $C$ in $N(l,s)$ is the edge between the parents
of $l$ and $x_i$.  Given that $U = \emptyset$, no more taxa can
be added ``underneath'' side $s$ and  this edge still represents $C$ in $N''$ because $N'$ is a  valid completion of $N$. Hence moving $l$ in the way \emph{is} safe in terms of cluster representation. 

Secondly, we claim that moving $l$ in this way does not alter the side types i.e. the empty/short/long sides before moving $l$ remain empty/short/long after moving $l$. To
see this, note that moving $l$ from its original location reduces the number of taxa by 1 on some side, and increases the number of taxa of $s$ by 1. Side $s$ is by assumption already long, so remains so. The side of $N'$ containing $l$ cannot 
change from being long to being short in $N''$, because this lowers the total number of long sides, and by assumption the pair $(G,S_G)$ underlying $N$ is side-minimal. Similarly it cannot change from being short to being empty, because this leaves the
number of long sides the same but reduces the number of short sides, again contradicting the assumption 
that $(G,S_G)$ is side-minimal. Combining these two claims - that moving $l$ is safe for cluster representation and does not alter the side types nor the underlying generator - let us conclude that there \emph{is} a valid completion for $N$ in which $l$ is placed directly above $x_i$. Hence it is correct to add $l$ above $x_i$ on side $s$, and does \emph{not} declare $s$ as finished. 

\paragraph{\textbf{Case $\boldsymbol{U \neq \emptyset}$.}}The case $\boldsymbol{|L'| = 0}$ is identical to the corresponding subcase
when $U = \emptyset$. This means that in this case it is always correct to declare the side $s$ as finished and return $N$.

Consider now the case $\boldsymbol{|L'| \geq 2}$. Observe firstly that, if some taxon $l \in L'$ is placed directly above $x_i$, then all remaining taxa
in $L'$ \emph{must} be allocated to sides in $U$. To see why this is, note that for every $l' \in L'$ we have that $l' \rightarrow_{\cC}  x_i$.  So, if $l' \neq l$ is placed above $l$ on $s$ or on a side not in $U$, 
%(i.e. on a side ``above'' $s$ from Observation \ref{obs:when_l'_impl_l})
then, from Observation \ref{obs:sameSide} and \ref{obs:when_l'_impl_l} we would have that 
 $l' \rightarrow_{\cC}  l \rightarrow_{\cC}  x_i$, contradicting the fact that $l'$ is in $L'$. 
We only need to show that, if a valid completion for $N$ exists, then the set $\mathcal{N}$  contains a network for which there exists a valid completion.
Note that $\mathcal{N}$ contains (line 20) $N$, where $s$ is declared as finished, (lines 21-23) 
all possible networks obtained from $N$  by allocating all taxa in $L'$ to sides in $U$ and (lines 24-27)  all possible networks obtained from $N(l,s)$ by  allocating all taxa in $L' \setminus \{l\} $ to sides in $U$, iterating over all $l \in L'$.  The only
case that these three sets do not describe, is when every valid completion has a taxon $p \not \in L'$ directly above
$x_i$, but at least one taxon 
%\marginpar{\tiny{you wrote NONE are but I think it is at least one is not }} 
$l \in L'$ is  not mapped to $U$. But this implies, similarly to the case $|U|=0$, that $l \rightarrow_{\cC}  p \rightarrow_{\cC}  x_i$, so $l \not\in L'$, contradiction. Hence this case cannot happen, and  the three sets actually describe all possible
outcomes in this situation. So at least one of them will contain a network with a valid completion in the case $N$ does have a valid completion. 
%Note that  this case is relatively easy to analyse
%exactly because we know, whatever we do, either $U$ will get smaller or we will declare \textsc{END SIDE}. In other
%words, we know that we can ``pay'' for the guesses by decreasing one of two $f(k)$-bounded counters towards zero. As we shall see a similar line of reasoning holds in some, but not all, of the subcases linked to $|L'|=1$.

 Consider now the case $\boldsymbol{|L'|=1}$. We begin with the subcase  $\boldsymbol{B(l) \neq \emptyset}$. Similar to previous arguments we know that, if we place $l$ (the only element in $L'$) directly above $x_i$, all taxa in $B(l)$ have to be allocated
to $U$. This holds because, from Observation \ref{obs:onlyCluster}, any cluster that contains $x_i$ but not $l$ is represented by the edge between the parents
of $l$ and $x_i$.
%must begin at the edge
%between the parents of $x_i$	 and $l$. 
%We use this fact to ``pay'' for the guesses we will make. Specifically, 
If $B(l) \neq \emptyset$, the set $\mathcal{N}$ is composed of 
(line 33) $N$, where $s$ is declared as finished, (lines 34-35) all possible networks obtained from $N$ by  allocating $l$ to a side in $U$ and (lines 36-38) all possible networks obtained from $N(l,s)$ by  allocating all taxa in $B(l)$  to sides in $U$.  
Observe that the only situation that these three guesses do not describe, is when  some
taxon $p \neq l$ is placed above $x_i$ and $l$ is not mapped to $U$. But in this case we would have that 
$l \rightarrow_{\cC}  p \rightarrow_{\cC}  x_i$, contradicting the fact that $l$ is in $L'$. So  $\mathcal{N}$ does again describe
all possible outcomes.

This leaves us with the very last subcase, $|L'|=1$ and $\boldsymbol{B(l)=\emptyset}$. The subcase when
$\boldsymbol{N(l,s)}$ {\bf does \emph{not} represent} $\boldsymbol{\cC}$ restricted to $\cX(N) \cup \{l\}$ is actually fairly straightforward. It is clear that
$l$ cannot be placed in this position in a valid completion. Hence the only two situations that %the two guesses 
line 41 and lines 42-43 do not 
describe, is when some element $p \neq l$ is placed directly above $x_i$, and $l$ is not mapped to $U$. But, as before, this implies that $l \rightarrow_{\cC}  p \rightarrow_{\cC}  x_i$, which as we have seen is not possible. So
the only remaining subcase is when $|L'|=1$, $B(l)=\emptyset$ and $\boldsymbol{N(l,s)}$ {\bf \emph{does} represent} $\boldsymbol{\cC}$ restricted to $\cX(N) \cup \{l\}$. 
%The reason we have to work hard here is that it is not obvious that
%we can declare \textsc{END SIDE} or make $U$ smaller. Hence, we cannot ``pay'' for our guesses. So,
Now, 
consider the network $N^{*}(l,s)$. Informally the dummy taxa in $N^{*}(l,s)$ act as ``placeholders'' for taxa that will only later
in the algorithm be mapped to $U$. We do not know exactly what these taxa will be, but we know that they will
definitely be there. Consider a cluster $C \in \cC^{*}$. If $N^{*}(l,s)$ does not represent $C$ then this must be because
of the dummy taxa, because we know that $N(l,s)$ did represent $\cC$ restricted to $\cX(N) \cup \{l\}$. Note that this holds irrespective of the
true identity of the dummy taxa. Hence, $C$ will never be represented by any completion of $N(l,s)$. For this
reason we conclude that, if $N^{*}(l,s)$  does not represent $\cC^{*}$, it is definitely correct to declare the side $s$ finished (line 49) or allocate $l$  to a side in $U$ (lines 50-51). 

Finally, suppose $N^{*}(l,s)$  does represent $\cC^{*}$. This is the flip-side of the previous argument. Whatever
the true identity of the dummy taxa, every valid completion of $N(l,s)$  will represent every cluster in $\cC^{*}$. 
Let $N'$ be an arbitrary valid completion of $N$ and denote by $N''$ the network obtained from $N'$ by moving $l$, wherever it is, just
above $x_i$. Now, as we did earlier we argue that in this case it is ``safe'' to put $l$ directly above $x_i$. Indeed, because $B(l)=\emptyset$, the only clusters that might not be represented in $N''$ are clusters in $\cC^{*}$.
But we have shown that when $l$ is placed directly above $x_i$ all the clusters in $\cC^{*}$ are represented
regardless of how we complete the rest of the network. Secondly, we argue just as before that moving $l$ in this
way cannot alter the side types. So if we choose $l$ as $x_{i+1}$ there must still exist a valid completion.  This concludes the proof of the lemma. 
\end{proof}

\begin{lemma}
\label{lemma:step1}
Let $\cC$ be a separating set of clusters on $\cX$ and let $k$ be the first integer for which a level-$k$ network representing $\cC$ exists. 
Let $N$ be an incomplete network such that 
its underlying generator $G$ and set of side guesses $S_G$ are such that $(G,S_G)$ is side-minimal w.r.t. $\cC$ and $k$, and let $s$ be an active side of $N$ that contains only a single taxon. 
Then, if a valid completion for $N$ exists, Algorithm \ref{algo:completeSide}  computes  in  $f(k) \cdot poly(n)$ time a set of (incomplete) networks $\mathcal{N}$ for which $\boldsymbol{s}$ {\bf is a finished side},  such that $\mathcal{N}$  contains at least one  network  for which there  exists  a valid completion.  
\end{lemma}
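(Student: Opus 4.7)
My plan is to treat Algorithm \ref{algo:completeSide} as an iterative driver: it maintains a working set $W$ of incomplete networks, initialized to $\{N\}$, and as long as some $N' \in W$ still has $s$ active, it replaces $N'$ in $W$ by the output of addOnSide($N'$, $s$). As soon as a network's side $s$ becomes finished, that network migrates from $W$ into the output set $\mathcal{N}$; once every network has $s$ finished, $\mathcal{N}$ is returned. Correctness proceeds by induction with the invariant that $W \cup \mathcal{N}$ always contains at least one network admitting a valid completion. The base case holds by hypothesis. In the inductive step, if the good network already has $s$ finished, it is in $\mathcal{N}$; otherwise it lies in $W$ with $s$ still active, and since addOnSide never alters the underlying generator or the side guesses, $(G, S_G)$ remains side-minimal w.r.t.\ $\cC$ and $k$. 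Lemma \ref{lemma:subroutineIsCorrect} then guarantees that the output of addOnSide contains a good successor, which either enters $\mathcal{N}$ or remains in $W$, preserving the invariant.

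For termination and local work, I would observe that each continuing branch of addOnSide places at least one previously unallocated taxon (either on top of $s$ or on some side $s' \in U$), so $|\cX(N')|$ strictly grows along any branch, and therefore each branch terminates within $n$ iterations. A single invocation of addOnSide costs $f(k)\cdot poly(n)$ time, dominated by the cluster-representation checks bounded by Proposition \ref{prop:checkReprCluster}. Moreover it yields at most $f(k)$ continuing networks, since the algorithm only branches when distributing $L'$ or $B(l)$ over the sides in $U$, and the explicit guards $|L'|\leq |U|+1$ and $|B(l)|\leq |U|$ together with the fact that $|U|$ is bounded by the number of sides of a level-$k$ generator (which is $f(k)$) give a branching factor of at most $|U|^{|U|+1}\cdot|U| = f(k)$.

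The main obstacle is preventing the compounded effect of depth-$n$ branches and $f(k)$-fold branching per call from inflating the analysis to $f(k)^n$. Here I would argue structurally: genuine branching occurs only during the (at most $f(k)$-many) iterations in which taxa are still being committed to the unfinished sides of $U$; once each side of $U$ has received its share, any subsequent call to addOnSide falls into the single-continuation regime at the bottom of the algorithm (the case $|L'|=1$, $B(l)=\emptyset$, $N^*(l,s)$ represents $\cC^*$, which returns exactly one network extending $s$ by one taxon). Consequently the recursion tree has at most $f(k)^{f(k)}=f(k)$ heavy levels near the root followed by a single deterministic chain of length $O(n)$ at each leaf of that subtree. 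This caps the number of networks ever enqueued in $W$ at $f(k)\cdot poly(n)$, each demanding $f(k)\cdot poly(n)$ work, giving an overall running time of $f(k)\cdot poly(n)$ and confirming that the returned set $\mathcal{N}$ retains at least one network admitting a valid completion.
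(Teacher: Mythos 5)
Your proposal is correct and follows essentially the paper's own route: correctness by iterating Lemma \ref{lemma:subroutineIsCorrect} (maintaining the invariant that some network admitting a valid completion survives), and the running time from the observations that each multi-output call of Algorithm \ref{algo:addOnSide} consumes at least one side of $U$ (hence at most $|U|\leq f(k)$ branchings, each with branching factor $f(k)$, along any path) while single-output calls either add a taxon to $s$ or declare it finished (hence $O(n)$ of them), with each call costing $f(k)\cdot poly(n)$ by Proposition \ref{prop:checkReprCluster}. One cosmetic inaccuracy: the branching steps need not all occur ``near the root'' followed by a purely deterministic tail, since single-continuation steps and branchings can interleave while $U\neq\emptyset$ (e.g.\ $B(l)$ may become nonempty again later), but this does not affect the count, which matches the paper's argument that $|U|$ decreases (or $s$ is finished) whenever more than one network is returned.
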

\begin{proof}
The correctness follows from Lemma \ref{lemma:subroutineIsCorrect}.
We now prove the running time.

First, note that the size of the set $\mathcal{N}$ returned by Algorithm \ref{algo:addOnSide} is bounded by $f(|U|)$. This is evident for  the sets $\mathcal{N}$  constructed  on lines
34-35, 42-43 and 50-51 
%$(2^{+})$ and line $(2^{-})$ 
 but it holds also for the sets $\mathcal{N}'$  constructed respectively on lines 
 %$(2)$, line $(3)$ and line $(3^{-})$, 
 21-23, 24-27 and 36-38, 
 since these sets are constructed only if, respectively,  $|L'| \leq |U|$, $|L'|-1 \leq |U|$ or $|B(l)| \leq |U|$. 
 Since in all other cases $|\mathcal{N}|=1$, the size of the set $\mathcal{N}$ returned by Algorithm \ref{algo:addOnSide}  is indeed bounded by $f(|U|)$. Moreover, from Proposition \ref{prop:checkReprCluster}, it follows that the running time of Algorithm \ref{algo:addOnSide} is $f(k) \cdot poly(n)$.  

Second, note  that, each time that Algorithm \ref{algo:addOnSide} returns a set of networks $\mathcal{N}$ such that  $|\mathcal{N}|>1$, $|U|$ decreases \cs{or $s$ is declared as finished}.
Additionally, when $U = \emptyset$ \cs{and $|\mathcal{N}=1|$}, Algorithm \ref{algo:addOnSide}  returns only one network per call and we have at most $O(n)$ of these calls (because either $s$ is declared finished or a new taxon is added to $s$).
 
 Since the number of sides in a generator is bounded by $f(k)$ and $U$ is a subset of the short sides of the generator (which follows from the fact that all long sides reachable from $s$ are assumed
to be finished), we have that $|U|$ is bounded by $f(k)$. Thus, the running time of  Algorithm \ref{algo:completeSide} is bounded by $f(k) \cdot poly(n)$. 
\end{proof}

\begin{algorithm}[]
$ \mathcal{N} \gets{N}$\;
%$t \gets{0}$\;
\While{there exists $N \in \mathcal{N}$ such that $s$ is not finished in $N$}{
	%$t \gets{t+1}$\;
	%\If{}{
		$\mathcal{N} \gets{} \mathcal{N}  \setminus N$\;
		$\mathcal{N} \gets{}$ $\mathcal{N}$ $\cup$ addOnSide$(N, s)$\;
	%}
}

\caption{
%Subroutine for choosing taxon $x_{i+1}$ on side $s$ or deciding to \textsc{END SIDE}
completeSide($N,s$)\label{algo:completeSide}
}
\end{algorithm}

We will subsequently use the term  \emph{lowest side} to denote a long side that does not yet have all its taxa (i.e. an unfinished long side), and such that there is no other long side $s' \neq s$ with this property that is reachable from $s$. 

The following lemma is basically the fixed parameter tractable version of Lemma 3 from \cite{elusiveness}:
\begin{lemma}
\label{lem:core}
Let $\cC$ be a %a set of tangled clusters
separating  set of clusters
%fully separated cluster set 
on $\cX$. Then, for every fixed
$k \geq 0$, Algorithm \ref{algo:core} determines  whether a level-$k$ network
exists that represents $\cC$, and if so, constructs such a network in time $f(k) \cdot poly(n)$.
\end{lemma}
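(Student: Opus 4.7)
The plan is to combine the preceding lemmas inside an outer loop enumerating candidate topologies. By Lemma \ref{lem:binaryOK} and Corollary \ref{cor:genIsSufficient}, if $l(\cC) \leq k$ then a level-$k$ network representing $\cC$ lies in $\mathcal{N}^k$, and among such networks we can always select one whose associated pair $(G, S_G)$ --- a level-$k$ generator together with a set of side guesses --- is side-minimal with respect to $\cC$ and $k$. Algorithm \ref{algo:core} therefore loops over all pairs $(G, S_G)$: the number of level-$k$ generators is bounded by $f(k)$ (cited from \cite{Gambette2009structure}) and each has $f(k)$ sides (cited from \cite{reflections}), so this outer loop runs only $f(k)$ times. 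By Observation \ref{obs:searchingSpace}, this enumeration is exhaustive, so it suffices to show that for a side-minimal pair we can decide whether a valid completion exists, and if so construct one, in time $f(k) \cdot poly(n)$.

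For a fixed pair $(G, S_G)$, the completion is built by a bottom-up sweep over the long sides. Maintain a pool of incomplete networks, initialised as the skeleton of $G$ with the side types as in $S_G$ and no taxa placed. As long as the pool contains a network with an unfinished long side, pick such a network $N$ and identify a \emph{lowest side}, namely an unfinished long side $s$ from which no other unfinished long side of $G$ is reachable; by acyclicity of $G$ such an $s$ always exists as long as one or more long sides remain unfinished. Next, choose a first taxon $x$ to place at the bottom of $s$, form $N(x,s)$, and invoke Algorithm \ref{algo:completeSide}. By Lemma \ref{lemma:step1} this call takes $f(k) \cdot poly(n)$ time and returns a set of at most $f(k)$ incomplete networks in which $s$ is finished, guaranteed to contain a network admitting a valid completion whenever $N(x,s)$ does. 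Update the pool accordingly and proceed with the next lowest long side. Once no network in the pool has an unfinished long side, each remaining unallocated taxon must occupy exactly one of the short sides; enumerate the (at most $f(k)$) bijections and, for each complete network produced, apply Proposition \ref{prop:checkReprCluster} in $f(k) \cdot poly(n)$ time to check whether it represents $\cC$. Return the first positive instance; if no pair $(G, S_G)$ yields one, output ``no''.

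Correctness follows from three ingredients chained together. Corollary \ref{cor:genIsSufficient} guarantees that when $l(\cC) \leq k$ some side-minimal pair $(G, S_G)$ arises from a level-$k$ completion representing $\cC$, so the outer loop eventually reaches it. The side-minimality assumption is precisely what unlocks Lemma \ref{lemma:step1}, which inductively certifies that the pool after each sweep step still contains a network admitting a valid completion. The final short-side allocation, together with the cluster-representation check via Proposition \ref{prop:checkReprCluster}, rules out false positives.

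The main obstacle is the running-time bookkeeping. The outer loop contributes $f(k)$; the sweep has depth bounded by the number of long sides, itself $f(k)$; each invocation of Algorithm \ref{algo:completeSide} takes $f(k) \cdot poly(n)$ time and expands the pool by a factor of at most $f(k)$; and the terminal short-side assignment plus cluster-representation verification each cost $f(k) \cdot poly(n)$ (using Proposition \ref{prop:checkReprCluster} together with the separating assumption, which yields $|\cC| \leq f(k) \cdot poly(n)$). The delicate part is handling the branching over first-taxon choices at the bottom of each new long side: one must show, by exploiting side-minimality and the $\rightarrow_{\cC}$ preorder on the remaining taxa, that these choices can be limited to a bounded number of candidates per side (or absorbed into the $f(k)$ branching bound already supplied by Lemma \ref{lemma:step1}), so that the exponent of $n$ in the overall bound is independent of $k$. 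With this in place, the total running time is $f(k) \cdot poly(n)$, as required.
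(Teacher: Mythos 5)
Your overall architecture matches the paper's: enumerate pairs $(G,S_G)$ (at most $f(k)$ of them), rely on side-minimality to invoke Lemma \ref{lemma:step1} for each lowest long side, finish with the short-side allocation and the representation check of Proposition \ref{prop:checkReprCluster}, and note that soundness comes from the final check while completeness comes from the side-minimal pair. However, there is a genuine gap exactly where you flag ``the delicate part'': you never prove that the number of candidate first taxa at the bottom of each new long side is bounded by $f(k)$. This is not a bookkeeping detail that can be ``absorbed into the $f(k)$ branching bound already supplied by Lemma \ref{lemma:step1}'' --- that lemma only controls the branching \emph{after} the first taxon of the active side has been placed, and it presupposes that placement. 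Without a bound on the first-taxon choices, the natural fallback is to try all $n$ taxa on each of up to $f(k)$ long sides, giving $n^{f(k)}$ branches and reinstating $k$ in the exponent of $n$, i.e.\ exactly the obstacle this lemma (and the paper) is meant to overcome.

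The paper's proof closes this gap with a specific argument you would need to reproduce. First, finished long sides are collapsed into meta-taxa, which guarantees that from the current lowest side $s$ only short and empty sides are reachable. Then, writing $s^-$ for the bottom taxon of $s$ and $x$ for the taxon directly above it, the separating hypothesis (Observation \ref{obs:size2Incomp}) together with $x \rightarrow_{\cC} s^-$ yields a non-singleton cluster $C$ with $s^- \in C$ and $x \notin C$; by Observation \ref{obs:onlyCluster} such a $C$ must be represented by the edge between the parents of $x$ and $s^-$, so $C$ can only contain taxa reachable below that edge, which --- thanks to the collapsing --- are confined to short sides, hence $|C| \leq f(k)$. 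Finally, a counting argument shows that only $f(k)$ taxa in total (those on short sides and the bottom-most few taxa of long sides) can belong to any such ``small'' cluster, so the candidate set $\cX(s^-)$, defined as the union of all small clusters, has size $f(k)$. Your proposal names the right tools (side-minimality, the $\rightarrow_{\cC}$ relation) but does not carry out this argument, and it omits the collapsing step on which the reachability claim rests; as written, the running-time bound $f(k)\cdot poly(n)$ is therefore not established.
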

\begin{proof}

Algorithm \ref{algo:core} starts by choosing a  level-$k$ generator $G$ and a set of side guesses $S_G$.   Note that (see lines 1-2) generators and sets of guesses are analyzed in such a way that generators with a smaller number of sides and sets of side guesses  with a smaller number of long sides, and (to further break ties) short sides, are analyzed first (this is the meaning of the expression ``{in increasing side order}'').  
This implies that the side-minimal pair $(G,S_G)$, if any exists, is analyzed before any other pair $(G',S_G')$ for which a \cs{valid} completion exists.  
This is done to be able to apply Lemma \ref{lemma:step1}. 

Then (see lines 4-18), the algorithm constructs a set of \emph{complete} networks,  i.e. simple level-$k$ networks where each short side has received a single taxon and each long side at least two,  and returns the first of them that represents $\cC$, if any exists.  On line 6, $\cX(s^{-})$ denotes the set of all taxa in $\cX$ that are candidates to be the first taxon on side $s$, which we call
$s^{-}$. We will discuss this set in more detail shortly.
%The main loop of the algorithm is shown in Algorithm \ref{algo:core}. 
\RestyleAlgo{ruled}
%\linesnumbered
\begin{algorithm}
\KwData{A separating  set of clusters $ \cC$ on $\cX$}
\KwResult{A level-$k$ network representing $\cC$, if any exists}
 \ForEach{level-$k$ generator $G$ in  increasing side order }{
	 \ForEach{set of side guesses $S_G$ in  increasing side order}{
		$\mathcal{N} \gets{(G,S_G)}$\;
		 \While{there exists $N \in \mathcal{N}$ such that $N$ contains a lowest side $s$}{
			%\If{there is a lowest side $s$}{
			$\mathcal{N} \gets{} \mathcal{N}  \setminus N$\;
				%$\cX(s^{-}) \gets{}\{x \in \cX | x \not\in \cX(N)$ and $x$ can be the first taxon in $s\}$\;
				\ForEach{$s^{-} \in \cX(s^{-})$}{
					 $\mathcal{N'}\gets{}$ completeSide($N(s^{-},s),s$)\;
			 		$\mathcal{N''} \gets{}$ the networks in $\mathcal{N'}$ where $s$ contains more than one taxon\;
					\ForEach{$N \in \mathcal{N}''$}{ 
						%\nlset{11} 
						collapse all  taxa on side $s$ into a single meta-taxon $S$ and adjust the cluster set accordingly\;
					}
			 		$\mathcal{N} \gets  \mathcal{N}  \cup  \mathcal{N''}$\; 	
				}
			%}
		}
		\If{$|\mathcal{N}| >0 $}{
			 \While{there exists $N \in \mathcal{N}$}{
				 de-collapse the collapsed sides\; 
				 $\mathcal{N}\cs{'} \gets{}$ the set of networks obtainable from $N$ by allocating the taxa in $\cX \setminus \cX(N)$ to  any short sides that have not yet been allocated a taxon\;
				\If{there is a network $N\cs{'} \in \mathcal{N\cs{'}}$ representing $\cC$}{
					\Return $N\cs{'}$\;		
				}
				$\mathcal{N} \gets{} \mathcal{N}  \setminus N$\;
			}
		}			
	}
}
\Return $\emptyset$\;	
\caption{ComputeLevel-$k$($\cC$)\label{algo:core}}
\end{algorithm}

Note that lines $10$ and $14$ of the algorithm  are only a technical step. Indeed, when we declare a side $s$ as finished, we assume that we will never alter that side again. Hence it does not change the analysis if
we collapse all the taxa on side $s$ into a single meta-taxon. That is, if we have decided  that the taxa on the side $s$ are $s^{-}, x_1, \dots, x_{l}$ we simply replace all these taxa by a single new taxon $S$ and replace
$s^{-}, x_1, \dots, x_{l}$ by $S$ in any clusters in $\cC$ that they appear in (line 10). This collapsing step is simply a convenience to ensure that the set of sides reachable from the current lowest side are always empty or short sides. This will be helpful when proving the running time of Algorithm \ref{algo:core}, see below. 
%This is helpful because it ensures that we only have to consider $f(k)$ taxa when guessing the first taxon at the bottom of a long side $s$, i.e. $s^{-}$(see below) . 
When we are finished allocating all the taxa in $\cX$ and are ready to check whether the resulting final network represents $\cC$ we can simply de-collapse all the $S$ i.e. ``unfold'' all the long sides that we have collapsed (line 14). 
This means that the correctness of Algorithm \ref{algo:core} follows by Observation \ref{obs:searchingSpace} and Lemma \ref{lemma:step1}.

We now need to prove the correctness of the  running time. First, note that the number of pairs $(G,S_G)$ to consider is bounded by $f(k)$ since both the number of generators and the number of sides per generator are bounded by $f(k)$. 

We now need to prove that the size of $\cX(s^{-})$ is at most $f(k)$ for all sides $s$ i.e. that the number of taxa that might be the first taxon $s^{-}$ on side $s$, is not too big. So let $s^{-}$ be
any taxon which can fulfil this role, and let $x$ be the taxon directly above $s^{-}$ on side $s$. (The taxon $x$ must exist because we assume that $s$ is long).
Clearly, $x \rightarrow_{\cC} s^{-}$.
By line 10, 
we have that 
% For the sake of simplicity let us assume that this is the first iteration of the main algorithm. This means that 
 the only sides
reachable from side $s$ are short and empty sides. Moreover, we know from Observation \ref{obs:size2Incomp} that, because $\cC$ is separating, there is some non-singleton cluster $C \in \cC$ such that $s^{-} \in C$ but
$x \not \in C$. By Observation \ref{obs:onlyCluster}, such a cluster $C$ has to be represented by 
%must begin on 
the edge $e$ between the parents of $x$ and $s^{-}$. Now, any cluster represented by $e$  
%that begins on that edge 
can only contain taxa that are reachable from $e$ by a directed path. The only sides that are reachable from side $s$ are short and empty sides, so the cluster $C$ can only contain at most $f(k)$ taxa (because there are at most $f(k)$ short sides). So we know
that $s^{-}$ is in some cluster $C$, and that $C$ is ``small'' in the sense that its size is bounded above by $f(k)$. So if we take all ``small'' clusters, and let $\cX(s^{-})$ be their union, we know that we could simply try
taking every element in $\cX(s^{-})$ and guessing that it is equal to $s^{-}$. To ensure that we do not use too many guesses, we have to show that $|\cX(s^{-})|$ is bounded by $f(k)$. To see that this holds, consider the question: how many clusters in $\cC$ contain at most $c$ taxa for a constant $c$? Observe that on every long side only the $c$ taxa furthest
away from the root are \emph{potentially} in such clusters. Any taxon closer to the root on a long side cannot possibly be in a cluster of size at most $c$, because if it is in a cluster then so are at least $c$ other taxa too.
Hence there are at most $f(k)$ taxa that can be involved in ``small'' clusters: the taxa on the short sides and the taxa at the bottom of the long sides.
%Given that at most $f(k)$ taxa can be in such clusters, it
%follows that there are at most $f(k)$ such clusters (because a set containing at most $f(k)$ elements, has at most $2^{f(k)}$ possible subsets). 
So we have that $|\cX(s^{-})|$ is bounded by $f(k)$ and we can guess $s^{-}$ with at most $f(k)$ guesses.
%\footnote{The simplifying assumption we made, that this is
%the first iteration of the main algorithm, is not so important: the whole purpose of step 2 of the main algorithm is
%to essentially re-create these starting conditions for each new long side that we process.}.\\

The collapsing and de-collapsing steps (lines 10 and 14) can be  done in $f(k) \cdot poly(n)$ time, as well as completing each side $s$ (line 7), by Lemma \ref{lemma:step1}. Moreover,  by Proposition \ref{prop:checkReprCluster}, 
 also checking whether $N$ represents $\cC$ takes $f(k) \cdot poly(n)$ time. Additionally,  the allocation of remaining taxa to the unfinished short sides (line 15) takes a time bounded by $f(k)$. Indeed, if we have correctly guessed all the taxa on the long sides, then there will be at most $f(k)$ taxa left that have not yet been assigned
to a side, and these will correspond to taxa on (a subset of) the short sides. (Some of the short sides might already have been allocated taxa by Algorithm \ref{algo:addOnSide}). 

This means that, if the size of $\mathcal{N}$ is bounded by $f(k)$, then  the entire  algorithm can be executed in $f(k) \cdot poly(n)$ time. And this is indeed the case, since $|\cX(s^{-})|$ and (summing over all iterations) the total number of lowest sides are  bounded by $f(k)$ and each time that a side is completed, the number of unfinished long sides decreases by 1.

%The next task in Step 1 is to add the other taxa to side s. This requires a subroutine. Let us assume that the most recent taxon that we added to the side was $x_i$, where $x_0 = s^{-}$. If we already
%knew exactly what $N$ looked like we could simply ask, ``is $x_i = s^{+}$?" If the answer was ``yes'' then we could stop  because we have finished with this side. If the answer was ``no'' then we could identify
%$x_{i+1}$, add it above $x_{i}$ on side $s$ and carry on. Of course, we do not in general know which taxon is equal to $s^{+}$, or how many taxa there are on side $s$ in total. This is the reason why the
%following subroutine is rather complex and requires a great many guesses. We write \textsc{END SIDE} when the subroutine terminates because it believes that we have finished adding taxa to side $s$. (If the subroutine terminates with
%only one taxon on side $s$ then we should abort this set of guesses). We
%write \textsc{CONTINUE} if it concludes that we should continue trying to add taxa to this side i.e. that we should perform another execution of the subroutine.\\

\end{proof}

\subsection{\emph{From simple networks to general networks}}
\label{subsec:generalFPT}

To prove the fixed parameter tractability of constructing general level-$k$ networks, we need to introduce a few other concepts.
The most important is the concept of a \emph{decomposable network}.

\begin{definition}\label{def:decomposable}
Let $\cC$ be a set of clusters  on a taxon set $\cX$ with incompatibility graph $IG(\cC)$ and let $N$ be a phylogenetic network 
that represents $\cC$. $N$ is said to be \emph{decomposable} w.r.t. $\cC$ if and only if there exists a cluster-to-edge mapping $\alpha: \cC \rightarrow E(N)$  such that, for
any two clusters $C_1,C_2\in \cC$, 
$C_1$ and $C_2$ lie in the same connected component of $IG(\cC)$ if and only if 
the two tree edges $\alpha(C_1)$ and $\alpha(C_2)$ that represent $C_1$ and
$C_2$ are contained in the same biconnected component of $N$. 
\end{definition}

%Note that that  the direction ``same  connected component of $IG(\cC)$ implies same biconnected component in $N$'' of the previous definition is always fulfilled by any network representing $\cC$ (Lemma  6.10.2 in \cite{HusonRuppScornavacca10}). 
The following observation is straightforward:

\begin{observation}\label{osb:maxBiconnectedComponent}
Let $\cC$ be a set of clusters  on a taxon set $\cX$ with incompatibility graph $IG(\cC)$ and let $N$ be a phylogenetic network $N$ representing $\cC$ that is decomposable w.r.t. $\cC$. Then, we have that the number of biconnected components of $N$  is equal to the number of connected components of  $IG(\cC)$. 
\end{observation}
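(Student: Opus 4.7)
The plan is to construct a bijection $\phi$ from the set of connected components of $IG(\cC)$ to the set of biconnected components of $N$ directly from the cluster-to-edge mapping $\alpha$ supplied by Definition \ref{def:decomposable}.

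First, I would define $\phi$ as follows: for each connected component $K$ of $IG(\cC)$, pick any cluster $C \in K$ and set $\phi(K)$ to be the biconnected component of $N$ that contains the edge $\alpha(C)$. The ``only if'' direction of Definition \ref{def:decomposable} guarantees that if we had picked a different representative $C' \in K$, the edge $\alpha(C')$ would sit in the same biconnected component as $\alpha(C)$ (since $C$ and $C'$ lie in the same connected component of $IG(\cC)$). Hence $\phi$ is a well-defined function.

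Second, I would check injectivity of $\phi$. Suppose $\phi(K_1) = \phi(K_2)$, and pick representatives $C_i \in K_i$. Then $\alpha(C_1)$ and $\alpha(C_2)$ share a biconnected component of $N$, so the ``if'' direction of Definition \ref{def:decomposable} puts $C_1$ and $C_2$ in the same connected component of $IG(\cC)$, forcing $K_1 = K_2$.

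The main obstacle is surjectivity, i.e., showing that every biconnected component $B$ of $N$ appears as $\phi(K)$ for some connected component $K$ of $IG(\cC)$; equivalently, that $B$ contains at least one edge of the form $\alpha(C)$. Reading Figure \ref{fig:bcc} and the paper's ``tangles'' terminology, the relevant biconnected components are really the non-trivial blocks of $N$ (those containing at least one reticulation), and under that reading the claim reduces to: every reticulation-bearing block of a decomposable network must host at least one $\alpha$-image. I would argue this by contradiction --- a non-trivial block hosting no $\alpha$-image is structurally dispensable relative to $\cC$, and the iff in Definition \ref{def:decomposable} would continue to hold after contracting it, contradicting the fact that the biconnected-component partition of $N$ and the partition of $\cC$ into $IG(\cC)$-components are in lockstep through $\alpha$. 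Combining well-definedness, injectivity, and surjectivity then yields the equality of counts asserted by the observation.
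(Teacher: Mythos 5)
Your well-definedness and injectivity steps are fine and are exactly the ``straightforward'' content the paper has in mind (the paper in fact offers no proof at all, declaring the observation immediate), but your surjectivity step has a genuine gap, and it is the only step that actually needs an argument. The contradiction you sketch is circular: you say that a block hosting no $\alpha$-image could be contracted while the iff of Definition~\ref{def:decomposable} ``would continue to hold,'' and that this ``contradicts the fact that the biconnected-component partition of $N$ and the partition of $\cC$ into $IG(\cC)$-components are in lockstep through $\alpha$'' --- but that lockstep is precisely the surjectivity you are trying to prove, and the continued validity of the iff after contraction contradicts nothing, since Definition~\ref{def:decomposable} only asserts the existence of \emph{some} $\alpha$ with the iff property and says nothing about every biconnected component being hit.

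Indeed, surjectivity cannot be derived from Definition~\ref{def:decomposable} alone. Take $\cC=\{\{a,b\},\{b,c\}\}$ on $\cX=\{a,b,c\}$ and the usual simple level-$1$ network representing it; it is decomposable with both $\alpha$-images in the single gall. Now subdivide the pendant edge to $a$ by inserting a redundant gall (two parallel paths meeting in a new reticulation above $a$). The resulting network still represents $\cC$, the old $\alpha$ still witnesses decomposability, yet there are now two reticulation-bearing blocks and only one connected component of $IG(\cC)$. So any correct proof of the equality of counts must invoke something beyond Definition~\ref{def:decomposable}: either the observation has to be read as counting only those biconnected components that contain an edge representing some cluster of $\cC$ (equivalently, only the injective direction, giving ``at most'' rather than equality, which is in fact all that the application in Proposition~\ref{prop:checkReprClusterGeneral} uses), or one must restrict attention to the decomposable networks actually produced by the divide-and-conquer construction, in which each non-trivial biconnected component is built for exactly one non-trivial component of $IG(\cC)$. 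Your instinct to discard the trivial blocks (cut edges) was right and is needed for any reading of the statement, but the surjectivity claim as you argue it does not go through.
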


Let $\cC$ be a set of clusters on $\cX$ with incompatibility graph $IG(\cC)$.
The set of \emph{backbone clusters} associated with $\cC$ is defined as
\[
	B(\cC)=\{\cX(\cC')\mid \cC' \mbox{~is a connected component of~}IG(\cC)\},
\]
where $\cX(\cC')=\cup_{C\in\cC'}C$ denotes the set of all taxa  in $\cC'$. 
Since the set $B(\cC)$ is compatible \cite{HusonRuppScornavacca10}, we have the following result:
\begin{proposition}
\label{prop:checkReprClusterGeneral}
Given a decomposable level-$k$ network  $N$ representing  a set of clusters  $\cC$ on $\cX$, $N$ can contain at most $2^{k+3}\cdot(n-1)^2$ clusters. 
%This implies that 
%checking whether  $\cC$ is represented by $N$ can be done in  time $f(k) \cdot poly(n)$, where $n=|\cX|$.
\end{proposition}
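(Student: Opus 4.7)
The plan is to upper bound $|\cC(N)|$, the total number of clusters represented by $N$; since $N$ represents $\cC$, we have $\cC \subseteq \cC(N)$, so the stated bound on $|\cC|$ follows. I combine a macroscopic count via Observation~\ref{osb:maxBiconnectedComponent} with a microscopic per-biconnected-component count.

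First, by Observation~\ref{osb:maxBiconnectedComponent}, the number of biconnected components of $N$ equals $|B(\cC)|$, the number of connected components of $IG(\cC)$. Since $B(\cC)$ is compatible, it is a laminar family on $\cX$, and the standard bound on laminar families gives $|B(\cC)| \leq 2n-1$; hence $N$ has at most $2n-1$ biconnected components.

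Next, I claim that each biconnected component $B_i$ of $N$, with $k_i \leq k$ reticulations, contributes at most $2^k(2n-2)$ distinct clusters to $\cC(N)$. Fix an edge $e$ of $B_i$ and a switching $\sigma$ of $N$; I argue that the cluster represented by $e$ with respect to the displayed tree $T_\sigma$ depends only on $\sigma|_{B_i}$, the restriction of $\sigma$ to the reticulations of $B_i$. Two facts drive this: (a) both incoming edges of any reticulation lie in the same biconnected component as that reticulation (they close an undirected cycle through it), so whether $e$ survives in $T_\sigma$ is controlled entirely by $\sigma|_{B_i}$; and (b) for any biconnected component $B_j$ reachable from $B_i$ in $N$, the set of leaves of $N$ reachable in $T_\sigma$ from the unique entry vertex $w_j$ of $B_j$ coincides with the backbone cluster of $B_j$, independently of the switching. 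Granted (a) and (b), the distinct clusters produced by edges of $B_i$ are indexed by the at most $2^{k_i}\leq 2^k$ choices of $\sigma|_{B_i}$, and for each such choice the surviving edges of $B_i$ form a subset of $E(T_\sigma)$, which has cardinality at most $2n-2$ since $T_\sigma$ is a binary tree on $n$ leaves.

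Multiplying the two bounds yields
\[
|\cC(N)| \;\leq\; (2n-1)\cdot 2^k\cdot (2n-2) \;=\; 2^{k+1}(2n-1)(n-1) \;\leq\; 2^{k+3}(n-1)^2,
\]
where the final inequality uses $2n-1 \leq 4(n-1)$ for $n\geq 2$ (the $n\leq 1$ cases being immediate). The main technical subtlety is claim (b), the invariance of ``leaves reachable from the entry vertex of a biconnected component.'' I would establish it by induction on the depth of $B_j$ in the tree of biconnected components of $N$: within $B_j$ every vertex remains reachable from $w_j$ after applying the local switching (since by (a) each reticulation of $B_j$ still has one of its $B_j$-internal parents available), and the inductive hypothesis propagates reachability through bridges into the components hanging below $B_j$. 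This is the single place where some structural care is needed.
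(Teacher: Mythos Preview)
Your approach is essentially the same as the paper's: bound the number of biconnected components of $N$, bound the number of clusters each biconnected component can contribute, and multiply. Your per-component argument via claims (a) and (b) is in fact more carefully spelled out than the paper's, which simply collapses each exit node $v$ of $B$ into a leaf labelled by $\cX(v)$ and then appeals to the reasoning of Proposition~\ref{prop:checkReprCluster}.

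There is, however, one genuine slip in your macroscopic count. You assert that the number of connected components of $IG(\cC)$ equals $|B(\cC)|$, and then bound $|B(\cC)|$ by $2n-1$ via laminarity. But $B(\cC)$ is defined as a \emph{set}, so distinct connected components of $IG(\cC)$ with the same backbone cluster are counted only once in $B(\cC)$. This can actually occur: if $\cC' = \{C\}$ is a trivial (isolated-vertex) component and some non-trivial component $\cC''$ has $\cX(\cC'') = C$, they share a backbone cluster. The paper handles exactly this point by proving that two \emph{non-trivial} components cannot have the same backbone cluster (a short compatibility argument), so the map from connected components to backbone clusters is at most 2-to-1; this yields at most $4(n-1)$ connected components, hence at most $4(n-1)$ biconnected components of $N$, and the final arithmetic is $4(n-1)\cdot 2^{k+1}(n-1) = 2^{k+3}(n-1)^2$. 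Your bound on $|B(\cC)|$ itself is fine; it is the identification of $|B(\cC)|$ with the number of connected components that needs this additional step.
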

\begin{proof}
The fact that $B(\cC)$ is compatible  ensures that the size of $B(\cC)$ is at most $2(n-1)$. In the following we will prove that the number of connected components of $IG(\cC)$ is at most $4(n-1)$. To prove this, we show that it is impossible to have two non-trivial connected components of $IG(\cC)$ (i.e. two connected components containing more than once cluster each), say $\bar{\cC}$ and $\bar{\cC}'$, such that $\cX(\bar{\cC})=\cX(\bar{\cC}')$. For the sake of contradiction, let us suppose that two such components $\bar{\cC}$ and $\bar{\cC}'$ exist. 
 Let $C_1 \in \bar{\cC}$ and $C^{'}_1 \in \bar{\cC}'$ be two clusters such that $C_1 \cap C'_1 \neq \emptyset$. Since $C_1$ and $C'_1$ are compatible, we can suppose w.l.o.g that $C'_1 \subset C_1$. Let $C'_2$ be another cluster of $\bar{\cC}'$ incompatible with $C'_1$ ($C'_2$ exists because $\bar{\cC}'$ is not trivial). Then, since $C'_1 \cap C'_2 \neq \emptyset$ we have that $C_1 \cap C'_2 \neq \emptyset$. But $C'_2$ cannot be a superset of $C_1$ so we have that $C'_2 \subset C_1$. Reiterating this reasoning we obtain that $\cX(\bar{\cC}') = C_1$.
Since  $\bar{\cC}$ is not trivial, there exists another cluster $C_2$ in $\bar{\cC}$ that is incompatible with $C_1$. So there exists at least one taxon in $\cX(\bar{\cC})$ that is not in $C_1$ and  we cannot have that $\cX(\bar{\cC})=\cX(\bar{\cC}')$, contradiction. This means that each {non-singleton} backbone cluster can correspond to two connected components, one trivial and one not. Then we have at most $4(n-1)$ connected components in $IG(\cC)$. By Observation \ref{osb:maxBiconnectedComponent}, this implies that $N$ has at most $4(n-1)$ biconnected components. 
%{Does this next bit assume binary? If not, can't we make this analysis simpler by assuming the network is binary?}

We now prove that each biconnected component $B$ of $N$ can represent at most $2^{k+1}(n-1)$ clusters. 
To see that, let us denote by $V'$ the set of nodes of $N$ that are not in $B$ but whose parents are in $B$ and, for each $v \in V'$, denote by $\cX(v)$ the set of all leaves in $N$ that are reachable by directed paths from $v$. It is easy to see that the set  $V'$ has a particularity: for each node $v \in V'$ we have that, no matter which switching $T_N$ is chosen, there exists a path in the
switching between $v$ and each taxon $u \in \cX(v)$. {Indeed, if this was not true, we will have that $v$ has to be in $B$, a contradiction}.
Then the network $N$ can be modified in the following way: \cs{For each node $v\in V'$,  label it with the set $\cX(v)$ and delete all the outgoing edges of $v$. %and add all former children nodes  of $v$ as children of the root of $N$\footnote{This is done to avoid to have one than one node with indegree 0, contradicting the definition of \emph{rooted} phylogenetic network.}. Let $N'$ be the rooted phylogenetic network obtained so. 
  Let $N'$ be the rooted phylogenetic network rooted at the root of the biconnected component $B$. 
Because of the  peculiarity  of the nodes in  $V'$, $N'$ represents  the same
cluster set than $B$ in $N$. }
%
%each edge in the biconnected component  $B$ represents the same
%clusters in $N'$ and in $N$. 
With a line of reasoning similar to that used in Proposition \ref{prop:checkReprCluster}, it is easy to see that $B$ can represent at most $2^{k+1}(n-1)$ clusters in $N'$, and thus
also in $N$. This concludes the proof.
\end{proof}

The following theorem ensures that we can focus on decomposable level-$k$ networks:
\begin{theorem}[\cite{cass}]\label{thm:level-$k$-decomp}
Let $\cC$ be a set of clusters. If there exists a level-$k$ network representing $\cC$, then there also exists such a network that is decomposable w.r.t. $\cC$.
\end{theorem}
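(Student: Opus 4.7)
Let $N$ be an arbitrary level-$k$ network representing $\cC$. The strategy is to decompose $\cC$ along the connected components of $IG(\cC)$, build a dedicated network for each component, and then graft these pieces onto the tree determined by the backbone. Write $\cC_1,\dots,\cC_p$ for the connected components of $IG(\cC)$ and $\cX_i:=\cX(\cC_i)$ for the associated backbone clusters. Since $B(\cC)=\{\cX_1,\dots,\cX_p\}$ is a compatible family, it extends (together with the singletons and $\cX$ itself) to a hierarchy on $\cX$, and hence defines a unique phylogenetic tree $T_B$ whose edges are in bijection with the non-trivial members of $B(\cC)$. The clusters of $\cC$ that are singletons or are themselves equal to some $\cX_i$ are precisely the clusters that lie in singleton connected components of $IG(\cC)$; these will be represented directly by tree edges of $T_B$.

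Next I would argue that for each $i$ there is a simple level-$\leq k$ network $N_i$ on $\cX_i$ representing $\cC_i$. Starting from $N$, restrict attention to $\cX_i$ by removing leaves not in $\cX_i$ together with the resulting unary paths; the restriction $N\!\mid_{\cX_i}$ still has level at most $k$ and represents $\cC_i$, because any tree displayed by $N$ restricts to a tree displayed by $N\!\mid_{\cX_i}$. Apply the standard collapse procedure that identifies maximal subsets of $\cX_i$ that are compatible with $\cC_i$, turning the induced cluster set into a separating one on a reduced taxon set. By Corollary~\ref{cor:genIsSufficient} there exists a simple level-$\leq k$ network representing this separating set, and un-collapsing yields a level-$\leq k$ network $N_i$ on $\cX_i$ representing $\cC_i$ whose only non-trivial biconnected component hosts the reticulations.

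To assemble the final network $N'$, traverse $T_B$ top-down and, for each non-trivial backbone cluster $\cX_i$, replace the subtree of $T_B$ rooted at the node labelled by $\cX_i$ with the network $N_i$ (identifying its root with the attachment point and its leaves with the corresponding taxa or with the roots of smaller substitutions nested inside). The resulting $N'$ represents $\cC$: clusters in $\cC_i$ are represented via a switching inside $N_i$, while clusters of $B(\cC)$ and singleton clusters are represented by tree edges of the $T_B$-scaffold. Define the witness map $\alpha$ by sending each $C\in\cC_i$ to a tree edge of $N_i$ that represents it under some switching. Two clusters lie in the same connected component of $IG(\cC)$ iff they come from the same $\cC_i$, iff $\alpha$ places them in the same copy of some $N_i$, iff they lie in the same non-trivial biconnected component of $N'$—so $N'$ is decomposable.

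The main obstacle I expect is verifying correctness of the restriction step: softwired representation is a global notion depending on switchings at reticulations anywhere in $N$, and it is not immediate that every cluster in $\cC_i$ is still represented after one deletes the leaves outside $\cX_i$. The crux is to observe that for any cluster $C\in\cC_i$, a switching of $N$ displaying $C$ induces, on $N\!\mid_{\cX_i}$, a displayed tree in which the same edge still represents $C\cap\cX_i=C$; a small argument using that $\cX_i$ is a backbone cluster (and hence compatible with every cluster outside $\cC_i$) shows that the choices made at reticulations outside the relevant biconnected components do not interfere. Once this restriction lemma is in hand, the remainder of the construction is bookkeeping, and the level of $N'$ is at most the maximum level among the $N_i$, which is at most $k$.
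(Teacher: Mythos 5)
First, note that the paper itself contains no proof of this statement: it is imported verbatim from \cite{cass}, so there is no in-paper argument to compare against, and your attempt has to stand on its own. Its architecture (split $\cC$ along the connected components of $IG(\cC)$, build one ``simple'' piece per non-trivial component, graft the pieces along the tree defined by $B(\cC)$) is indeed the standard route, and it is the same divide-and-conquer the paper itself uses later in Theorem~\ref{thm:core}. But as a proof it has two genuine gaps. The crux of the theorem is precisely the claim you import from Corollary~\ref{cor:genIsSufficient}: that each component, after collapsing, admits a level-$\leq k$ network in which \emph{all} of its cluster-representing edges and reticulations sit in a single biconnected component. Corollary~\ref{cor:genIsSufficient} rests on Lemma~1 of \cite{elusiveness}, which is part of the same decomposition machinery as the statement you are proving (and chronologically downstream of \cite{cass}); invoking it here reduces the theorem to an external result of essentially the same strength, and risks outright circularity. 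A self-contained proof must itself establish the ``separating $\Rightarrow$ simple'' step. Moreover, even to apply Corollary~\ref{cor:genIsSufficient} you need that the \emph{collapsed} cluster set still has level at most $k$; this requires an argument (e.g.\ restricting $N|_{\cX_i}$ to one representative per maximal ST-set, or an appeal to Corollary~\ref{cor:maxundercut}), which you do not give.

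Second, the step you single out as ``the main obstacle'' -- the restriction lemma -- is actually routine: any tree displayed by $N$ restricts to a tree displayed by $N|_{\cX_i}$, and leaf deletion plus suppression cannot merge biconnected components, so the level does not increase. The real work is in what you dismiss as bookkeeping, namely verifying the biconditional of Definition~\ref{def:decomposable} for the assembled network. Concretely: (i) when $\cX_j\subsetneq\cX_i$ you need a \emph{single} attachment point for $N_j$ inside $N_i$, i.e.\ you must show $\cX_j$ lies inside one maximal ST-set (equivalently, one meta-taxon) of $\cC_i$ -- this is true, but it needs the argument that every cluster of $\cC_i$ meeting $\cX_j$ contains it; (ii) you must show that no cluster of a non-trivial component is swallowed by a meta-taxon, so that after un-collapsing each such cluster is still represented by an edge inside the blob of $N_i$ (the ``same component $\Rightarrow$ same biconnected component'' direction); (iii) you must show that every cluster forming a singleton component of $IG(\cC)$ is represented by a cut edge of the scaffold, and that distinct such clusters receive distinct cut edges, which is what gives the converse direction of the ``iff''. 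None of these verifications is deep, but none is carried out, and without them the map $\alpha$ you define is not shown to satisfy Definition~\ref{def:decomposable}. So: right plan, but the two load-bearing ingredients -- the existence of simple local networks and the nesting/decomposability verification -- are assumed rather than proved.
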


We can now prove the main result of the section:

%The following theorem follows from Lemma \ref{lem:core} and the fact that the tangling/decomposition procedure can 
%be conducted in $f(k) \cdot poly(n)$ time. Actually, this is not as obvious as it sounds, because a level-$k$ network
%can have many reticulations, and hence it no longer obviously true that there are at most $f(k) \cdot poly(n)$ clusters in
%the input. But I think it is true\footnote{Even if it is not true I do not think it is a problem, because we should (as we discussed in Elusiveness) probably use
%$|\cX| + |\cC|$ as the size of the input, not just $|\cX|=n$.}. That is because every cluster in a level-$k$ network
%begins in some biconnected component, and each biconnected component can represent at most
%$f(k) \cdot poly(n)$ clusters. There are at most $n$ biconnected components, so at most $n.f(k) \cdot poly(n) = f(k) \cdot poly(n)$
%clusters in total.

\begin{theorem}
\label{thm:core}
Let $\cC$ be a set of clusters
on $\cX$. Then, for every fixed
$k \geq 0$, it is possible to determine in time $f(k) \cdot poly(n)$ whether a level-$k$ network
exists that represents $\cC$, and if so to construct such a network.
\end{theorem}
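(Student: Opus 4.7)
The plan is to reduce the general problem to the simple/separating case already handled by Lemma \ref{lem:core}, via the divide-and-conquer scheme sketched in the introduction to Section \ref{sec:levelFPT} and justified theoretically by Theorem \ref{thm:level-$k$-decomp}. First, I would truncate the input. By Theorem \ref{thm:level-$k$-decomp}, if a level-$k$ network for $\cC$ exists at all, then one exists that is decomposable w.r.t.\ $\cC$; and by Proposition \ref{prop:checkReprClusterGeneral} any such network represents at most $2^{k+3}(n-1)^2$ clusters. Hence if $|\cC|$ exceeds this bound I can immediately answer ``no'', so henceforth $|\cC|$ is $f(k)\cdot poly(n)$.

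Next I would build $IG(\cC)$ in polynomial time and enumerate its connected components $\cC_1,\ldots,\cC_m$. The proof of Proposition \ref{prop:checkReprClusterGeneral} shows that $m = O(n)$. Observation \ref{osb:maxBiconnectedComponent} tells me that in any decomposable level-$k$ network $N$ representing $\cC$, the non-trivial biconnected components of $N$ are in bijection with the non-trivial $\cC_i$, the $i$th such block housing precisely the clusters of $\cC_i$ over the taxon set $\cX_i = \cX(\cC_i)$. Thus it suffices, for each $i$, to build a simple level-$\leq k$ network $N_i$ on $\cX_i$ that represents $\cC_i$, and then to stitch the $N_i$ together along the backbone tree on $\cX$ induced by the compatible family $B(\cC)$ (each backbone cluster $\cX(\cC_i)$ labels the edge where $N_i$ is to be grafted).

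The remaining obstacle is that the $\cC_i$ are in general not separating, so Lemma \ref{lem:core} cannot be invoked on them directly. Here I would apply the standard preprocessing from \cite{cass,HusonRuppScornavacca10}: repeatedly collapse any maximal subset $\cX' \subseteq \cX_i$ with $|\cX'|\ge 2$ that is compatible with every cluster in $\cC_i$ into a single meta-taxon, updating $\cC_i$ accordingly. This yields, in polynomial time, a separating cluster set $\widehat{\cC}_i$ on a reduced taxon set $\widehat{\cX}_i$, with $l(\cC_i)=l(\widehat{\cC}_i)$; any level-$\leq k$ network for $\widehat{\cC}_i$ can be re-expanded into one for $\cC_i$ by hanging the collapsed sub-hierarchies below the corresponding leaves without changing the level. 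I then run Lemma \ref{lem:core} on each $\widehat{\cC}_i$ with parameter $k$: if it returns ``no'' for some $i$ I answer ``no'' overall; otherwise I decollapse, graft the resulting $N_i$ onto the backbone tree, and return the composite network.

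The main technical point is not the construction itself but confirming that every step stays inside an $f(k)\cdot poly(n)$ budget and that the reduction is correct in both directions. The a priori bound on $|\cC|$ obtained from Proposition \ref{prop:checkReprClusterGeneral} is what makes iterating over the $O(n)$ components affordable; each invocation of Lemma \ref{lem:core} contributes an $f(k)\cdot poly(n)$ factor; and preprocessing, decollapsing, and gluing along the backbone tree are all polynomial. Correctness of the reverse direction uses Theorem \ref{thm:level-$k$-decomp} together with Observation \ref{osb:maxBiconnectedComponent} to argue that any level-$k$ solution induces, on each $\widehat{\cX}_i$, a simple level-$\leq k$ sub-network representing $\widehat{\cC}_i$. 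The total running time is therefore $m\cdot f(k)\cdot poly(n) = f(k)\cdot poly(n)$, proving the theorem.
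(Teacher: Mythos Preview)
Your proposal is correct and follows essentially the same approach as the paper: bound $|\cC|$ via Proposition~\ref{prop:checkReprClusterGeneral}, decompose along the connected components of $IG(\cC)$, collapse each component to a separating instance as in \cite{cass,HusonRuppScornavacca10}, invoke Lemma~\ref{lem:core} on each, and reassemble the local networks into a global decomposable level-$k$ network. The paper's own proof is terser and defers the mechanics of the collapse and merge to \cite{HusonRuppScornavacca10}, but the structure and the key ingredients (Theorem~\ref{thm:level-$k$-decomp}, Proposition~\ref{prop:checkReprClusterGeneral}, Lemma~\ref{lem:core}) are the same.
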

\begin{proof}
By Theorem \ref{thm:level-$k$-decomp}, we know that we can construct a decomposable level-$k$ network using a divide-and-conquer strategy. A possible approach is described in Section 8.2 of \cite{HusonRuppScornavacca10}. This approach divides $\cC$ in $g$ subsets, where $g$ is the number of connected components of the incompatibility graph. Then, each subset $\cC_i$ is made separating w.r.t. $\cX(\cC_i)$ by merging 
every subset of $\cX(\cC_i)$ that is compatible with 
%all taxa that are not separated by 
$\cC_i$ 
(see \cite{HusonRuppScornavacca10} for more details). Then a \emph{local} network is computed for each $\cC_i$ and finally all the networks are merged together in a \emph{global} level-$k$ network representing $\cC$. Since by Proposition \ref{prop:checkReprClusterGeneral} we can only have a $f(k) \cdot poly(n)$  number of clusters and a $pol(n)$ number of connected components in $IG(\cC)$, it is easy to see that the merging of the taxa in each $\cC_i$ and the merging of all partial networks into the global one can be conducted in $f(k) \cdot poly(n)$ time. Moreover, since each subproblem $\cC_i$ is separating, from Lemma \ref{lem:core} we have that constructing 
each local network takes $f(k) \cdot poly(n)$ time. This concludes the proof. 
\end{proof}

\section{Minimizing reticulation number is fixed parameter tractable}
\label{sec:retnumFPT}

The aim of this section is to show that reticulation number minimization is  fixed parameter tractable. 
As pointed out for level minimization at the beginning of Section \ref{sec:levelFPT}, it
is sufficient to prove that, given a  set of clusters $\cC$ on taxon set $\cX$, we can construct a phylogenetic network representing $\cC$ with reticulation number $r$ (if any exists)  in time at most $f(r) \cdot poly(n)$.

To show the main result of this section we will introduce the concepts of \emph{ST-collapsed cluster sets} and of \emph{$r$-reticulation generators}. We will then prove that  all the results and  algorithms used in the previous section to prove that constructing \emph{simple} level-$k$ networks is fixed parameter
tractable,  hold not only for separating cluster sets and level-$k$ generators but also for ST-collapsed cluster sets and $r$-reticulation generators. The main difficulty is to show that %Proposition \ref{prop:checkReprCluster} and \ref{prop:noCyclesRightarrow},  Lemma %\ref{lem:genIsSufficient}  and Observation \ref{obs:size2Incomp} 
%and Observations \ref{obs:sameSide}, \ref{obs:when_l'_impl_l} and \ref{obs:onlyCluster} 
several key utility results still hold, since the others results do not exploit the biconnectedness of simple level-$k$ generators. For ease of reading we will refer to the extended versions of these results using their original name followed by the term ``(extended)" (e.g. ``Proposition 1"  becomes ``Proposition 1 (extended)").\\

%
%Let $\cC$ be a set of clusters on taxon set $\cX$, where $|\cX|=n$. Let $r(\cC)$ be the reticulation number of $\cC$. We show how to compute a network with reticulation number $r(\cC)$ that
%represents $\cC$ in time at most $f(r(\cC)).poly(n)$. 

\begin{observation}
\label{obs:bound}
Let $N$ be a network on $\cX$ with reticulation number $r$. Then $N$ represents at most $2^{r+1}(n-1)$ clusters.
\end{observation}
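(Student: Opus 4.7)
The plan is to mimic the proof of Proposition 1 very closely. First, I would recall that the set of softwired clusters represented by $N$ equals the union, over all trees $T$ displayed by $N$ (equivalently, over all switchings of $N$), of the clusters represented by $T$. Thus $|\cC(N)| \leq (\text{number of displayed trees}) \cdot (\text{max number of clusters per tree})$.

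Next I would bound the two factors separately. A switching picks, for each reticulation $v$ of $N$, one of the $\delta^-(v)$ incoming edges, so the number of switchings is $\prod_{v \text{ retic.}} \delta^-(v)$. Each tree on the taxon set $\cX$ with $n = |\cX|$ has at most $2(n-1)$ edges (this is precisely the bound already used inside the proof of Proposition 1), hence represents at most $2(n-1)$ clusters.

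It remains to argue $\prod_{v \text{ retic.}} \delta^-(v) \leq 2^r$ where $r = \sum_{v \text{ retic.}} (\delta^-(v) - 1)$. The only step of substance is the elementary inequality $d \leq 2^{d-1}$ for every integer $d \geq 2$ (equality at $d=2$, and easy by induction for $d \geq 3$). Applying this factorwise gives
\[
\prod_{v} \delta^-(v) \;\leq\; \prod_{v} 2^{\delta^-(v)-1} \;=\; 2^{\sum_v (\delta^-(v)-1)} \;=\; 2^r.
\]
Combining, $|\cC(N)| \leq 2^r \cdot 2(n-1) = 2^{r+1}(n-1)$, which is the claim.

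There is no real obstacle here: the argument is a direct generalisation of the bound $|\cC(N)| \leq 2^{k+1}(n-1)$ proved in Proposition 1 for simple level-$k$ networks, with the only new ingredient being the factorwise inequality $d \leq 2^{d-1}$ used to convert the product of indegrees into $2^r$, which also handles the case of non-binary reticulations. Note that the bound is used in the sequel only to ensure that $|\cC| \leq f(r)\cdot poly(n)$ inputs are worth considering, so a tight constant is not needed.
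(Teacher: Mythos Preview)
Your argument is correct. It differs from the paper's proof in one small but genuine way: the paper invokes Lemma~\ref{lem:binaryOK} to first replace $N$ by a binary network $N'$ with the same reticulation number and $\cC(N) \subseteq \cC(N')$; since every reticulation of $N'$ then has indegree exactly~$2$, the bound of $2^r$ switchings is immediate and no inequality is needed. You instead work directly with the possibly non-binary $N$ and use the factorwise estimate $d \leq 2^{d-1}$ to control the product of indegrees. Your route is more self-contained (it does not rely on the binarization lemma), while the paper's is marginally shorter once that lemma is in hand; both yield the same bound and serve the same purpose in the sequel.
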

\begin{proof}
From Lemma \ref{lem:binaryOK} we may assume without loss of generality that $N$ is binary. A binary network with reticulation number $r$ contains
exactly $r$ reticulation nodes. Hence $N$ displays at most  $2^{r}$ trees, and each tree represents at most $2(n-1)$ clusters (because
a rooted tree on $n$ taxa contains at most $2(n-1)$ edges). 
\end{proof}

We can thus
henceforth assume that $|\cC| \leq 2^{r+1}(n-1)$ i.e. that $\cC$ contains at most $f(r) \cdot poly(n)$ clusters.  
Then we have that the following holds:
\begin{prop1bis}
\label{prop:checkReprClusterBis}
Let $N$ be a network on $\cX$ with reticulation number at most $r$. Then,
% given a cluster $C$, we can check in time
%$f(r) \cdot poly(n)$ whether $N$ represents $C$. Furthermore, 
given a cluster set $\cC$, we can check in time $f(r) \cdot poly(n)$ whether
$N$ represents $\cC$.
\end{prop1bis}
%The proof of Corollary \ref{cor:check} is very similar to that of Proposition \ref{prop:checkReprCluster}. 
%\begin{proof}
%To check whether $N$ represents $C$ we can simply loop through all (at most) $2^{r}$ trees displayed by $N$, and for each
%tree we loop through the at most $2(n-1)$ clusters that it represents, comparing every cluster to $C$. To check whether $N$ represents $\cC$ we first
%check whether $|\cC| > 2^{r+1}(n-1)$. If so, then $N$ cannot possibly represent $\cC$. Otherwise we simply check
%whether $N$ represents $C$, for every $C \in \cC$, in the manner described above. 
%\end{proof}

%Thanks to Lemma \ref{lem:binaryOK} we may assume that there exists a \emph{binary} network $N$ with reticulation number $r(\cC)$ that represents $\cC$. We henceforth restrict
%our analysis to binary networks.  

Given a set of taxa~$S\subseteq\mathcal{X}$,  
 we use~$\mathcal{C}\setminus S$ to denote the result of removing all elements of~$S$ from each cluster
in~$\mathcal{C}$ and we use~$\mathcal{C}|S$ to denote~${\mathcal{C}\setminus (\mathcal{X}\setminus S)}$ (i.e. the restriction of~$\mathcal{C}$ to~$S$). We say that a set~$S \subseteq \mathcal{X}$ 
is an \emph{ST-set} with respect to $\mathcal{C}$, if~$S$ is compatible with $\mathcal{C}$ and any two clusters~$C_1,C_2\in\mathcal{C}|S$ are 
compatible \cite{elusiveness}. 
(We say that an ST-set $S$ is \emph{trivial} if $S= \emptyset $ or $S= \cX$).
%(We say that an ST-set $S$ is \emph{real} if $\emptyset \subset S \subset \cX$).
 An ST-set~$S$ is 
\emph{maximal} if there is no ST-set~$S'$ with~$S \subset S'$. The following results from \cite{elusiveness} will be very useful:

\begin{corollary}
\label{cor:mostn}
\cite{elusiveness}. Let $\mathcal{C}$ be a set of clusters on $\cX$. Then there are at most $n$ maximal ST-sets with respect to $\cC$, they are uniquely
defined and they partition $\cX$.
\end{corollary}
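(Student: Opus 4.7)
The plan is to reduce the statement to a single closure lemma: the union of two ST-sets that share a common element is again an ST-set. Before attacking this lemma I would record the trivial observation that every singleton $\{x\}$ is an ST-set (it is trivially compatible with every $C\in\cC$, and $\cC|\{x\}$ consists of singleton clusters, which are pairwise compatible). Hence every taxon lies in at least one ST-set, and one can define $M(x)$ as the union of all ST-sets containing $x$.

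The work concentrates on the closure lemma. Given ST-sets $S_1,S_2$ with $S_1\cap S_2\neq\emptyset$, I would first show that $S:=S_1\cup S_2$ is compatible with every $C\in\cC$. Compatibility of $C$ with each $S_i$ forces one of three relations (disjoint, subset, superset), giving nine combinations; the two ``mixed'' combinations such as $C\cap S_1=\emptyset$ together with $S_2\subseteq C$ are ruled out by $S_1\cap S_2\neq\emptyset$, and in each of the remaining seven a direct check gives $C\cap S\in\{\emptyset,C,S\}$, which is exactly compatibility with $S$. To verify pairwise compatibility of the clusters in $\cC|S$, one revisits the same case list: every non-empty, non-$S$ restriction $C\cap S$ turns out to be contained in $S_1$ or in $S_2$, so any two such restrictions simultaneously lie in $\cC|S_i$ for some $i\in\{1,2\}$, where compatibility is guaranteed by the ST-set property of $S_i$. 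The only mildly non-routine subcase is when $S_2\subseteq C_1\subseteq S_1$ and $S_1\subseteq C_2\subseteq S_2$, but these chains force $S_1=S_2=C_1=C_2$ and compatibility is then trivial.

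Once the closure lemma is established, the rest is bookkeeping. All ST-sets containing a fixed $x$ pairwise intersect at $x$, so an iterated application of the closure lemma (legal because $\cX$ is finite, hence the family is finite) shows that $M(x)$ is itself an ST-set, and by construction it is the unique maximal ST-set containing $x$. If $y\in M(x)$ then $M(x)$ is an ST-set containing $y$, whence $M(x)\subseteq M(y)$, and by symmetry $M(x)=M(y)$; thus distinct $M(x)$'s are disjoint, yielding a partition of $\cX$. Since the blocks are non-empty, there are at most $n$ of them. I expect the case analysis in the closure lemma to be the only mildly delicate step; everything else reduces to the standard closure-under-intersecting-unions argument.
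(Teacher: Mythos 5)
The paper itself does not prove this corollary: it is imported directly from \cite{elusiveness}, so there is no internal proof to compare against; your self-contained argument follows what is essentially the standard route (closure of ST-sets under intersecting unions), and in outline it is correct. One step, however, is stated too loosely. In the verification that $\cC|S$ (with $S=S_1\cup S_2$) is pairwise compatible, your case analysis correctly shows that every restriction $C\cap S$ other than $\emptyset$ and $S$ equals $C$ itself with $C\subseteq S_1$ or $C\subseteq S_2$; but you then assert that any two such restrictions ``simultaneously lie in $\cC|S_i$ for some $i$'', which is not always true: one can have $C_1\subseteq S_1$ and $C_2\subseteq S_2$ with $C_1\not\subseteq S_2$ and $C_2\not\subseteq S_1$. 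The repair is one line: since $S_2$ is an ST-set, $C_1$ is compatible with $S_2$, so either $C_1\cap S_2=\emptyset$ (then $C_1\cap C_2=\emptyset$), or $C_1\subseteq S_2$ (then both clusters restrict into $\cC|S_2$ and the ST-property of $S_2$ applies), or $S_2\subseteq C_1$ (then $C_2\subseteq C_1$); in every branch the two restrictions are compatible, so the closure lemma stands. The subcase you flag as the delicate one ($S_2\subseteq C_1\subseteq S_1$ together with $S_1\subseteq C_2\subseteq S_2$) indeed forces $S_1=S_2$ and is harmless, but it is not the case that actually needs attention. The remaining bookkeeping is fine: singletons are ST-sets, so $M(x)$ is well defined and, by finitely many applications of the closure lemma to sets all containing $x$, is itself an ST-set; any ST-set properly containing $M(x)$ would contain $x$ and hence be absorbed into $M(x)$, so $M(x)$ is maximal; every nonempty maximal ST-set equals $M(x)$ for any of its elements, the empty set is never maximal because singletons exist, and $y\in M(x)$ implies $M(x)=M(y)$, giving uniqueness, the partition of $\cX$, and the bound of at most $n$ blocks.
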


\begin{lemma}
\label{lem:polymax}
\cite{elusiveness} The maximal ST-sets of a set of clusters $\cC$ on $\cX$ can be computed in polynomial time.
\end{lemma}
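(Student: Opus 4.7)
The plan is to give an incremental merging algorithm. First, observe that given any candidate set $S \subseteq \cX$, one can decide in polynomial time whether $S$ is an ST-set: check the $|\cC|$ individual compatibility conditions between $S$ and each $C \in \cC$, and then verify pairwise compatibility of the (at most $|\cC|$) non-empty members of $\cC|S$. Both sub-checks are clearly polynomial in $n$ and $|\cC|$.

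Using this subroutine, maintain a partition $\mathcal{P}$ of $\cX$ initialized to the trivial partition $\{\{x\} : x \in \cX\}$ of singletons, each of which is trivially an ST-set. Repeatedly scan the $O(|\mathcal{P}|^2)$ pairs of blocks and, whenever two blocks $B_1, B_2 \in \mathcal{P}$ are such that $B_1 \cup B_2$ is an ST-set, replace them in $\mathcal{P}$ by $B_1 \cup B_2$. Terminate once no mergeable pair remains. Each successful merge strictly reduces $|\mathcal{P}|$ by one, so there are at most $n-1$ merges; combined with the polynomial-time ST-set check, the overall running time is polynomial.

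Correctness uses Corollary \ref{cor:mostn}. Maintain as an invariant that every block of $\mathcal{P}$ is contained in some maximal ST-set; this holds initially and is preserved by merges, since the union produced by a legitimate merge is itself an ST-set and therefore extends to a unique maximal ST-set (by the uniqueness/partition statement in Corollary \ref{cor:mostn}). It remains to show that at termination $\mathcal{P}$ actually equals the maximal ST-set partition. Suppose for contradiction that some maximal ST-set $M$ is split into two or more blocks $B_1, B_2, \ldots \subseteq M$ of $\mathcal{P}$. Because $M$ is an ST-set, $\cC|M$ is laminar and induces a hierarchy (tree) on $M$. Each $B_i$ is itself compatible with every $C \in \cC|M$ (since $B_i$ is an ST-set), so each $B_i$ has a well-defined ``position'' in this tree---the smallest $C \in \cC|M \cup \{M\}$ that strictly contains $B_i$. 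A descent argument in the tree produces two blocks $B_i, B_j$ whose positions coincide, i.e. no cluster of $\cC|M$ separates them from their common enclosing set. A short case analysis over whether an arbitrary $C \in \cC$ is disjoint from $M$, contains $M$, or lies in $\cC|M$ (further split by its relation to the common position of $B_i, B_j$) then shows that $B_i \cup B_j$ is compatible with every $C \in \cC$ and that $\cC|(B_i \cup B_j)$ is laminar. Thus $B_i \cup B_j$ is itself an ST-set, contradicting termination of the algorithm.

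The main obstacle is precisely this sibling-merging argument: one has to confirm that whenever a maximal ST-set $M$ is not yet fully assembled within $\mathcal{P}$, the laminar structure of $\cC|M$ always exposes at least one pair of blocks whose union is again an ST-set, so that the greedy termination condition coincides with the desired partition. Once this combinatorial claim is established, the remaining ingredients---the polynomial-time ST-set test, the invariant maintenance under merges, and the bound on the number of merges---are routine.
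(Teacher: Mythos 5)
The paper itself offers no proof of this lemma: it is imported directly from \cite{elusiveness}, so there is no internal argument to compare yours against, and your proposal has to be judged as a self-contained proof --- which it is, and a correct one. The membership test is polynomial in $n$ and $|\cC|$ (the right notion of ``polynomial'' here, as the paper itself later notes), singletons are ST-sets, and a merge performed only when $B_1\cup B_2$ is an ST-set preserves the invariant that every block lies inside a unique maximal ST-set (Corollary \ref{cor:mostn}), so the current partition always refines the maximal ST-set partition and at most $n-1$ merges occur. The only genuinely non-trivial point is exactly the one you flag, and your sketch does go through: each block $B_i\subseteq M$ is compatible with every member of the laminar family $\cC|M$, so the members of $\cC|M\cup\{M\}$ strictly containing $B_i$ form a chain and the ``position'' is well defined; choosing a position $P$ that is inclusion-minimal, any block $B_i$ with position $P$ satisfies $B_i\subsetneq P$, hence some other block $B_j$ meets $P$, is strictly contained in $P$ (blocks are disjoint), and by minimality of $P$ also has position $P$; for two blocks sharing a position the only conceivable incompatibility of $B_i\cup B_j$ with a cluster --- some $C\subseteq M$ with $B_i\subsetneq C$ and $C\cap B_j=\emptyset$ --- is impossible because such a $C$ contains the common position and therefore $B_j$, while laminarity of $\cC|(B_i\cup B_j)$ is automatic as a restriction of the laminar family $\cC|M$; so $B_i\cup B_j$ is an ST-set, contradicting termination. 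I would urge you to write out this descent and case analysis in full rather than gesturing at it (it is the heart of the proof), but no idea is missing, and together with the polynomial per-pair test and the linear bound on merges it yields the claimed polynomial running time.
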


Here ``polynomial time'' means $poly(n,|\cC|)$, but given that the size of $\cC$ is at most $f(r) \cdot poly(n)$ it follows that the maximal ST-sets of $\cC$ can all be computed
in time $f(r) \cdot poly(n)$. 

The following corollary says, essentially, that if we want to construct networks with minimum reticulation number then it is safe to assume that
each maximum ST-set corresponds to (the taxa in) a subtree that is attached to the main network via a cut-edge.

\begin{corollary}
\label{cor:maxundercut}
\cite{elusiveness} Let $N$ be a network that represents a set of clusters $\cC$. There exists a network $N'$
such that $N'$ represents $\cC$, $r(N') \leq r(N)$, $l(N') \leq l(N)$ and all maximal ST-sets (with respect to $\cC$)
are below cut-edges. 
\end{corollary}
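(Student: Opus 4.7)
The plan is to prove the corollary by performing, for each maximal ST-set $S$ of $\cC$, a local surgery on $N$ that gathers all occurrences of $S$ under a single cut-edge, without increasing either the reticulation number or the level. Since by Corollary \ref{cor:mostn} the maximal ST-sets partition $\cX$ and are uniquely defined, these local surgeries can be performed independently, one ST-set at a time, so it suffices to describe the operation for a single $S$.

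For a fixed maximal ST-set $S$, I would begin by observing two facts. First, because $S$ is an ST-set, the restriction $\cC|S$ is a set of pairwise compatible clusters, hence laminar, so it corresponds to a unique rooted tree $T_S$ on leaf set $S$. Second, every cluster $C \in \cC$ falls into one of three types with respect to $S$: either $C \cap S = \emptyset$, or $S \subseteq C$, or $C \subsetneq S$; this is exactly the compatibility of $S$ with $\cC$. Clusters of the third type are already fully encoded by $T_S$; clusters of the first two types only ``see'' $S$ as an indivisible block. This is the key structural observation that makes the surgery possible.

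The construction itself would proceed as follows. Let $N^{*}$ denote the union of all minimal directed subgraphs of $N$ that lie on some path from the root to a leaf in $S$. I would pick a ``collection node'' $v$ — informally, the lowest ancestor in $N$ from which every taxon of $S$ is reachable in every switching restricted to reach $S$ — and replace the portion of $N^{*}$ strictly below $v$ that only serves to reach $S$ by a new cut-edge hanging $T_S$ from $v$. Representation of clusters $C \subseteq S$ is preserved because $T_S$ represents $\cC|S$. For a cluster $C$ with $C \cap S = \emptyset$, some switching of $N$ exhibits an edge $e$ whose leaf descendants equal $C$; since $e$ lies outside the part of $N^{*}$ we removed, that switching still exists in $N'$. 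For a cluster $C$ with $S \subseteq C$, the corresponding edge in some switching lies at or above $v$, and the new cut-edge from $v$ to $T_S$ delivers the whole of $S$ to that edge in the modified switching.

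Because the surgery only \emph{deletes} edges and nodes from $N$ and \emph{adds} a tree ($T_S$ attached via a single cut-edge), it introduces no new reticulations, giving $r(N') \leq r(N)$. Moreover, every biconnected component of $N'$ is a biconnected subgraph of some biconnected component of $N$ (no biconnected component can grow, since we only prune and then attach via a cut-edge), so $l(N') \leq l(N)$. The main obstacle I expect is the formal definition and existence of the collection node $v$: one must argue, using the ST-set property, that there is a single node in $N$ from which all of $S$ is ``simultaneously reachable'' in a sense strong enough that we can safely cut everything below it and reconnect via $T_S$ without destroying any switching that represents a cluster disjoint from or containing $S$. Once this node is pinned down, the rest of the argument is routine bookkeeping on switchings.
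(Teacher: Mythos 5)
First, a point of reference: the paper does not prove this corollary at all — it is imported verbatim from \cite{elusiveness} — so there is no in-paper proof to compare against, and your argument has to stand on its own. It does not quite stand, and the gap is exactly where you suspect it, in two related places. (i) For a cluster $C$ with $S\subseteq C$ you assert that the edge representing $C$ in some switching ``lies at or above $v$''. In a DAG there is in general no unique lowest node from which all of $S$ is reachable, and nothing forces the representing edge of $C$ to be an ancestor of whichever minimal $v$ you pick; if it is not, hanging $T_S$ below $v$ does not deliver $S$ to that edge, and $C$ may cease to be represented. (ii) For a cluster $C$ with $C\cap S=\emptyset$ it is not enough that ``that switching still exists in $N'$'': the new edge from $v$ to the root of $T_S$ is a tree edge, hence present in \emph{every} switching of $N'$, so $S$ lies below $v$ in all of them. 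In $N$, the switching that represented $C$ may well have had $v$ below the representing edge, with the reticulation paths from $v$ towards $S$ switched off; in $N'$ the descendant set of that edge then gains $S$ and no longer equals $C$, and you give no argument that some other edge or switching takes over. The counting part of your proposal (that $r$ and the level do not increase) is fine; it is the representation bookkeeping that breaks for any attachment point chosen independently of the leaves of $S$.

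The standard repair — essentially the route taken in \cite{elusiveness} — is not to invent a fresh ``collection node'' at all: fix one leaf $x\in S$, delete the leaves in $S\setminus\{x\}$ and tidy up (deleting unlabelled outdegree-zero nodes, suppressing indegree-one outdegree-one nodes, collapsing parallel edges), which never increases $r$ or the level and keeps every cluster, restricted to the surviving taxa, represented; then expand the leaf $x$ into the tree $T_S$ that represents $\cC|S$. Every cluster containing $S$ contained $x$, so after the expansion its representing edge regains exactly $S$; every cluster disjoint from $S$ never had $x$ below its representing edge in the relevant switching, so it is unaffected; clusters inside $S$ are handled by $T_S$; and $S$ now hangs below the pendant edge formerly entering $x$, which is a cut-edge. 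Your observation that every cluster is disjoint from $S$, contains $S$, or is contained in $S$ is precisely what makes this anchored version go through — anchoring at an existing leaf of $S$ is the missing idea in your construction.
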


Let $\cS = \{S_1, \ldots, S_m\}$ be the set of maximal ST-sets of $\cC$. We construct a new cluster set $\cC'$ from $\cC$ as follows. \cs{For
each $S_j \in \cS$, and for each  cluster $C$ in $\cC$ such that $S_j \cap C := S\neq\emptyset$, we replace the set $S$ in $C$ by}  
%$S_j \in \cS$, and for each element $x \in S_j$, we replace $x$ in every cluster in $\cC$ in which it appears, with 
the new taxon $s_j$. In other words we ``collapse'' all taxa in each maximal ST-set into a single new taxon that represents that ST-set. We say that $\cC'$ is the \emph{ST-collapsed} version of $\cC$. Note that a separating cluster set $\cC$ is necessarily  ST-collapsed   but the opposite implication does not hold. For example $\cC=\{
\{a,b\}, \{b,c\}, \{a,b,c,d\}, \{d,e\}\}$ on $\cX=\{a,b,c,d,e\}$ is ST-collapsed but not separating because $\{a,b,c\}$ is compatible with $\cC$.

\begin{obs5bis}
\label{obs:size2IncompBis}
Let $\cC$ be a ST-collapsed cluster set on $\cX$. Then every size-2 subset of  $\cX$ is incompatible with $\cC$. 
\end{obs5bis}
\begin{proof}
Suppose that this is not true and there exists a size-2 subset of  $\cX$, say $A$, that is compatible with $\mathcal{C}$. Since any two clusters~$C_1,C_2\in\mathcal{C}|A$ are necessarily compatible, $A$ is a ST-set, contradicting the fact that  $\cC$ is ST-collapsed. 
\end{proof}

\begin{lemma}
\label{lem:stcollapse}
Let $\cC$ be a cluster set on $\cX$, and let $\cC'$ be the ST-collapsed version of $\cC$. Then any network $N'$ that represents $\cC'$ can be transformed into a network
$N$ that represents $\cC$, such that $r(N)=r(N')$ in $f(r(N)) \cdot poly(|\cX|)$ time.
\end{lemma}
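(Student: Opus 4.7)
The plan is to build $N$ from $N'$ by substituting each meta-taxon $s_j$ of $N'$ with a rooted tree on the corresponding maximal ST-set $S_j$. First, I would compute all maximal ST-sets $S_1, \ldots, S_m$ of $\cC$ using Lemma \ref{lem:polymax}. This runs in time $poly(n, |\cC|)$, and since we only care about the case in which a network of reticulation number $r(N')$ representing $\cC$ exists, we may assume by Observation \ref{obs:bound} that $|\cC| \leq 2^{r(N')+1}(n-1)$, giving a total running time of $f(r(N')) \cdot poly(n)$. For each non-trivial $S_j$, the restriction $\cC|S_j$ is a laminar family (by the definition of ST-set) and hence uniquely determines a rooted tree $T_j$ on leaf set $S_j$ that represents exactly the clusters in $\cC|S_j$. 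I would construct each $T_j$ via the standard hierarchy-to-tree transformation in $poly(|S_j|)$ time. Finally I construct $N$ by replacing, for each non-trivial $S_j$, the leaf $s_j$ of $N'$ with $T_j$ (identifying the incoming edge of $s_j$ with the incoming edge of the root of $T_j$); trivial ST-sets correspond to renaming. Since the $T_j$'s are trees, no new reticulations are introduced, so $r(N) = r(N')$.

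To verify that $N$ represents $\cC$, I would fix any $C \in \cC$ and exploit the fact that each $S_j$ is compatible with $\cC$: for every $j$, exactly one of the following holds: (a) $C \cap S_j = \emptyset$; (b) $S_j \subseteq C$; (c) $C \subsetneq S_j$. Because the maximal ST-sets partition $\cX$ (Corollary \ref{cor:mostn}), case (c) can hold for at most one $j$. If case (c) occurs for some $j$, then $C \in \cC|S_j$, and so $C$ is represented by some edge of $T_j$; picking any tree $T'$ displayed by $N'$ and letting $T$ be its expansion obtained by grafting each $T_{j'}$ in place of $s_{j'}$ yields a tree displayed by $N$ in which that same edge of $T_j$ represents $C$, since $T_j$ is attached to the rest of $T$ via a single cut edge. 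Otherwise only cases (a) and (b) arise and the image of $C$ in $\cC'$ is the cluster $C' = \{s_j : S_j \subseteq C\}$; since $N'$ represents $\cC'$, some tree $T'$ displayed by $N'$ has an edge $(u',v')$ representing $C'$, and in the expansion $T$ of $T'$ (displayed by $N$), the set of leaves reachable from $v'$ is exactly $\bigcup_{S_j \subseteq C} S_j = C$, so $(u',v')$ represents $C$ in $T$.

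The main obstacle, such as it is, is handling case (c), where $C$ has been erased from $\cC'$ (it collapses to a singleton $\{s_j\}$ that $N'$ trivially represents and thus gives us no information). The resolution is that the local information $\cC|S_j$ is preserved separately, and the uniqueness of the tree $T_j$ induced by this laminar family is precisely what lets us recover $C$ inside the expanded network. Combining the $f(r(N')) \cdot poly(n)$ cost of the ST-set computation with the $poly(n)$ costs of tree-building and leaf-replacement yields the claimed running time.
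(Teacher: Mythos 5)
Your proposal is correct and follows essentially the same route as the paper: the paper's proof is exactly the construction you describe (compute the maximal ST-sets via Lemma \ref{lem:polymax} and replace each meta-taxon $s_j$ in $N'$ by the tree representing $\cC|S_j$), with the cluster-representation verification and the $f(r)\cdot poly(n)$ bookkeeping left implicit. Your case analysis and running-time justification simply spell out details the paper omits, so there is nothing to flag.
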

\begin{proof}
Let $\cS = \{S_1, \ldots, S_m\}$ be the set of maximal ST-sets of $\cC$. For each $S_j \in \cS$ we replace the taxon $s_j$ in $N'$ with the tree on taxon set $S_j$ that represents
exactly the set of clusters $\cC|S_j$. 
\end{proof}

\begin{corollary}
\label{cor:STisminimal}
Let $\cC$ be a cluster set on $\cX$, and let $\cC'$ be the ST-collapsed version of $\cC$. Then $r(\cC') = r(\cC)$.
\end{corollary}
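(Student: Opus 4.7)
The plan is to prove the equality via two inequalities, combining Lemma \ref{lem:stcollapse} in one direction and Corollary \ref{cor:maxundercut} in the other.

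For the direction $r(\cC) \leq r(\cC')$, I will appeal directly to Lemma \ref{lem:stcollapse}: take any network $N'$ representing $\cC'$ with $r(N') = r(\cC')$, and apply the lemma to obtain a network $N$ representing $\cC$ with $r(N) = r(N') = r(\cC')$. Since $r(\cC)$ is defined as the minimum reticulation number over all networks representing $\cC$, we immediately get $r(\cC) \leq r(\cC')$.

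For the reverse inequality $r(\cC') \leq r(\cC)$, I would start from a network $N$ representing $\cC$ with $r(N) = r(\cC)$. By Corollary \ref{cor:maxundercut}, there exists a network $N^{*}$ representing $\cC$ with $r(N^{*}) \leq r(N)$ such that every maximal ST-set $S_j \in \cS$ lies below a cut-edge $e_j$ of $N^{*}$. For each $j$, I would contract the subtree hanging off of $e_j$ (whose leaf set is exactly $S_j$) into a single leaf labelled by the new taxon $s_j$, obtaining a network $N'$ on the taxon set of $\cC'$. Since this contraction only removes tree edges and leaves in the ``pendant'' part below a cut-edge, it does not create or destroy any reticulation nodes, so $r(N') = r(N^{*}) \leq r(\cC)$. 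It then remains to check that $N'$ represents $\cC'$. For each cluster $C \in \cC$ and its image $C' \in \cC'$, pick a switching (tree displayed by) $N^{*}$ that represents $C$; the corresponding switching of $N'$ (obtained by the same choices at the retained reticulations) represents $C'$, because the collapsing turns exactly those taxa in $S_j \cap C$ into the single taxon $s_j$ in precisely the way the ST-collapsing of $\cC$ is defined.

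The main obstacle I anticipate is the last verification step: showing that collapsing pendant subtrees corresponding to maximal ST-sets projects clusters of $N^{*}$ to clusters of $N'$ in exactly the same way that the ST-collapsing transformation projects $\cC$ to $\cC'$. The key observation needed here is that, because $S_j$ is compatible with $\cC$, for any cluster $C \in \cC$ the intersection $C \cap S_j$ is either empty, all of $S_j$, or a proper subset; but whenever an edge of $N^{*}$ represents $C$ in some switching, and the cut-edge $e_j$ lies below the edge representing $C$, then $S_j \subseteq C$, so contracting $S_j$ to $s_j$ replaces the whole block by a single taxon as required. Once this correspondence is established, combining both inequalities yields $r(\cC') = r(\cC)$.
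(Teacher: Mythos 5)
Your proposal is correct and follows essentially the same route as the paper's own proof: $r(\cC)\leq r(\cC')$ via Lemma \ref{lem:stcollapse}, and $r(\cC')\leq r(\cC)$ by taking a minimum network with all maximal ST-sets below cut-edges (Corollary \ref{cor:maxundercut}) and collapsing each pendant subtree to a single taxon $s_j$. The paper simply asserts that the collapsed network represents $\cC'$, whereas you spell out the switching correspondence; that extra verification is sound (the only unstated case, a cluster $C\subseteq S_j$ whose image is the singleton $\{s_j\}$, is trivially represented by the leaf edge of $s_j$).
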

\begin{proof}
Lemma \ref{lem:stcollapse} tells us that $r(\cC) \leq r(\cC')$. To see that $r(\cC') \leq r(\cC)$, observe that Corollary \ref{cor:maxundercut} allows us
to assume the existence of a network $N$ with reticulation number $r(\cC)$ such that all the maximal ST-sets of $\cC$ are below cut-edges in $N$. If,
for each maximal ST-set $S_j$ of $\cC$, we replace the subtree corresponding to $S_j$ with a single taxon $s_j$, we obtain a network with reticulation
number at most $r(\cC)$ which represents $\cC'$. 
\end{proof}

Combining the fact  the  transformation described in the proof of Lemma \ref{lem:stcollapse} can  be executed in time $f(r) \cdot poly(n)$ 
with  Lemma \ref{lem:stcollapse} and Corollary \ref{cor:STisminimal} we may thus henceforth restrict our attention to ST-collapsed cluster sets. Networks that represent ST-collapsed cluster sets have a rather
restricted topology, as the following lemma shows.

\begin{lemma}
\label{lem:nocut}
Let $\cC$ be an ST-collapsed cluster set on $\cX$, and let $N$ be a binary network that represents $\cC$. Then
it follows that, for each cut edge $(u,v)$ of $N$, either $v$ is a leaf labelled by a taxon from $\cX$, or there
is a directed path starting from $v$ that can reach a reticulation node.
\end{lemma}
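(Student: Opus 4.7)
I would prove the contrapositive: suppose $(u,v)$ is an edge of $N$ such that $v$ is not a leaf and no directed path starting from $v$ reaches a reticulation, and derive a contradiction with the hypothesis that $\cC$ is ST-collapsed. Let $D(v)$ be the set of nodes reachable from $v$ by directed paths in $N$ (including $v$ itself). Since $v$ is reachable from itself, the hypothesis forces $v$ to be a tree node; combined with binarity and the assumption that $v$ is not a leaf, this means $v$ has indegree $1$ (with unique parent $u$) and outdegree $2$.

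The first real step is to argue that the induced subgraph $N[D(v)]$ is a rooted binary tree with root $v$, connected to the rest of $N$ only through the cut edge $(u,v)$. For any $w \in D(v)\setminus\{v\}$, $w$ is non-reticulating by hypothesis and hence has a unique parent in $N$; because some directed path from $v$ reaches $w$, the penultimate node of that path lies in $D(v)$ and must be that unique parent. Since $N[D(v)]$ contains no reticulations, every switching of $N$ preserves it intact, and hence so does every tree displayed by $N$. Because $v$ has outdegree $2$, the tree $N[D(v)]$ contains at least two leaves, so I can select two leaves labelled by taxa $x,y$ that share a common parent $p$ in $N[D(v)]$.

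The final step is to exhibit $\{x,y\}$ as a size-$2$ ST-set of $\cC$, which then contradicts ST-collapsedness by Observation~5 (extended). In every tree $T$ displayed by $N$ the node $p$ still has exactly the children $x$ and $y$, so $\{x,y\}$ is a cluster represented by $T$. Now fix any $C \in \cC$ and pick a tree $T_C$ displayed by $N$ that represents $C$: then $T_C$ represents both $C$ and $\{x,y\}$, and since the clusters represented by a tree form a laminar family, $\{x,y\}$ and $C$ are compatible. Hence $\{x,y\}$ is compatible with $\cC$, and the second ST-set condition is automatic because $\cC|\{x,y\}$ is a family of subsets of a two-element set. The main subtlety is the structural claim that $N[D(v)]$ is a genuine subtree of $N$ preserved by every switching; it leans jointly on the hypothesis that no reticulation descends from $v$ and on the binarity of $N$, and without it the laminar-family argument for $\{x,y\}$ would fail.
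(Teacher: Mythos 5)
Your proof is correct and follows essentially the same route as the paper: both arguments observe that if no reticulation is reachable from a non-leaf $v$, the part of $N$ below $v$ is a pendant binary subtree preserved by every switching, and then derive a contradiction with ST-collapsedness via an ST-set of size at least two. The only cosmetic difference is that you extract a cherry $\{x,y\}$ and invoke Observation 5 (extended), whereas the paper takes the whole taxon set below $v$ as the offending ST-set (and treats the single-taxon case separately), which your binary-tree/cherry argument covers implicitly.
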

\begin{proof}
Let $\cX(v) \subseteq \cX$ be the set of taxa reachable from $v$ by directed paths. If $|\cX(v)| \geq 2$, but
there are no reticulation nodes reachable by directed paths from $v$, then the subnetwork rooted at $v$
is actually a tree with taxon set $\cX(v)$, meaning that $\cX(v)$ is an ST-set of cardinality 2
or higher. This violates the ST-collapsed assumption, giving a contradiction. If $|\cX(v)|=1$ then it follows that either $v$ is a leaf labelled by a taxon, or (due to the fact that $N$ is binary and contains no nodes with indegree and outdegree both equal to 1) at least one reticulation node is reachable from $v$ by a directed path. 
\end{proof}

We are now (finally) ready to define a $r$-\emph{reticulation generator}. This is very closely related to the level-$k$ generator discussed in Section \ref{sec:levelFPT}. The only significant difference is that $r$-reticulation generators do not have to be biconnected, and
(for technical reasons) the inclusion of a ``fake root''.

\begin{definition}
An $r$-\emph{reticulation generator} is a directed acyclic
multigraph, which has a single node of indegree
0, called the \emph{fake root}, and this has outdegree 1; precisely $r$ reticulation nodes (indegree 2
and outdegree at most 1), and apart from that only nodes
of indegree 1 and outdegree 2.
\end{definition}

Note that this definition implies that a  $r$-reticulation generator cannot contain any leaf. As in the case of level-$k$ generator,  nodes with
indegree 2 and outdegree 0 as well as all edges are  
called \emph{sides}.
Figure \ref{fig:rgen} shows the single 1-reticulation generator and the seven 2-reticulation generators.

\begin{sloppypar}
\begin{lemma}
\label{lem:numbRetGen}
There are at most $f(r)$ $r$-reticulation generators and each $r$-reticulation generator contains at most
$f(r)$ sides.
\end{lemma}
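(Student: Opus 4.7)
The plan is to bound both the number of nodes and the number of edges in any $r$-reticulation generator by a linear function of $r$ via a simple indegree/outdegree balancing argument, and then to derive the two claims of the lemma as immediate consequences.

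First I would fix an $r$-reticulation generator $G$ and partition its nodes by type: one fake root (indegree $0$, outdegree $1$), exactly $r$ reticulations (indegree $2$, outdegree in $\{0,1\}$), and some number $t$ of remaining ``tree'' nodes (indegree $1$, outdegree $2$). Let $r_1$ be the number of reticulations with outdegree $1$ and $r_0 = r - r_1$ those with outdegree $0$. Equating the sum of indegrees with the sum of outdegrees across the multigraph gives $t + 2r = 1 + 2t + r_1$, whence $t = 2r - 1 - r_1 \leq 2r - 1$. Therefore $G$ has at most $1 + r + t \leq 3r$ nodes and at most $t + 2r \leq 4r - 1$ edges.

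The bound on the number of sides follows at once: recall that the set of sides of $G$ consists of all edges of $G$ together with all nodes of indegree $2$ and outdegree $0$, i.e.\ the $r_0$ terminal reticulations. Hence the number of sides is at most $(4r - 1) + r_0 \leq 5r - 1$, which depends only on $r$.

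Finally, to bound the total number of $r$-reticulation generators I would observe that each is a directed acyclic multigraph on at most $3r$ labelled vertices with at most $4r$ edges. Because every node has outdegree at most $2$ and indegree at most $2$, the multiplicity of parallel edges between any ordered pair of vertices is at most $2$, so the number of such labelled multigraphs is bounded by a function of $r$; quotienting by graph isomorphism only decreases this count. The only mildly non-trivial step is the indegree/outdegree balance equation used to pin down $t$ in terms of $r$; everything else is routine combinatorial bookkeeping.
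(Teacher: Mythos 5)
Your proof is correct and follows essentially the same route as the paper: bound the number of vertices and edges of a generator linearly in $r$, take the sides to be the edges together with the indegree-2/outdegree-0 nodes, and count generators as directed acyclic multigraphs on a bounded vertex set with edge multiplicity at most $2$ between any pair of nodes. The only difference is presentational: you derive the bounds of at most $3r$ nodes and $4r-1$ edges directly via a handshake (indegree/outdegree balancing) argument, whereas the paper imports the corresponding level-$k$ generator bounds from prior work and adds $1$ for the fake root, so your self-contained side bound of $5r-1$ is in fact slightly tighter than the paper's crude $7r-1$.
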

\end{sloppypar}
\begin{proof}
In Lemma 1 of \cite{reflections} it is proven that a level-$k$ generator has
at most $3k-1$ vertices and at most $4k-2$ edges. The proof there does not exploit the biconnectedness
of level-$k$ generators, so - with the exception of the fake root - also holds for
$r$-reticulation generators. By adding 1 to both the vertex and edge upper bounds to account for the
fake root we come to upper bounds of $3r$ and $4r-1$ respectively. Hence we obtain $7r-1$ as a crude
upper bound on the number of sides in a generator. To see that there are at most $f(r)$ $r$-reticulation
generators observe that between any pair of nodes $u$ and
$v$ in the generator there is either no edge, an edge from $u$ to $v$ (or from $v$ to $u$), or a multi-edge from $u$ to $v$
(or from $v$ to $u$). Hence there are at most $5^{(3r)^2}$ $r$-reticulation generators.  
\end{proof}

\begin{definition}
\label{def:Nl}
The set $\mathcal{\hat{N}}^{r}$ (for $r \geq 1$) is defined as the set of all binary networks that can be constructed by choosing some $r$-reticulation
generator $G$, then applying the {leaf hanging} transformation described in Definition \ref{def:Nk} {and} finally 
deleting the fake root (i.e. the single vertex with indegree 0 and outdegree 1) and its incident edge.
\end{definition}

\begin{lemma4bis}
\label{lem:genIsSufficient_bis}
Let $\cC$ be an ST-collapsed set of clusters on $\cX$, such that $r(\cC) \geq 1$. Then there exists a network $N$ in $\mathcal{\hat{N}}^{r(\mathcal{C})}$ such that
$N$ represents $\cC$.
\end{lemma4bis}
\begin{proof}
Let $N$ be any binary network with reticulation number $r(\cC)$ such that
$N$ represents $\cC$. We show how applying the reverse of the transformation
described in Definition \ref{def:Nl} to $N$ will give some $r(\cC)$-reticulation
generator $G$. The lemma will then follow. We begin by adding a fake root to $N$ i.e. a new vertex $u'$ and an edge from $u'$ to
the root of $N$. (This is the inverse of deleting the fake root). We then delete all the leaves in $N$. Any nodes that are
created with indegree 2 and outdegree 0 we leave as they are (this is the inverse of step 3 of Definition \ref{def:Nk}). Nodes with indegree 1 and outdegree 0 cannot be created, because this would require that there exists a node $v$ in $N$ which has indegree 1 and outdegree 2 such that both its children are leaves labelled by taxa. But this would mean that $v$ is the head of a cut-edge $e$ where $e$ violates the condition described in Lemma \ref{lem:nocut}. Now, consider the nodes that have been created with indegree and outdegree both equal to 1. Let $u$ be any such node, and let $U = \{u\}$. Whenever $U$ contains a node $u$ whose unique parent $p(u)$ also has indegree and outdegree both equal to 1, add $p(u)$ to $U$. Whenever $U$ contains a node $u$ whose unique child $c(u)$ also has indegree and outdegree both equal to
1, add $c(u)$ to $U$. We continue expanding $U$ this way until it cannot grow anymore. Clearly $U$ stops growing at the point that $U$ contains two
nodes $u_{top}$ and $u_{bottom}$ (where possibly $u_{top} = u_{bottom}$) such that the parent of $u_{top}$ (respectively, child of $u_{bottom}$)
does not have indegree and outdegree both equal to 1. We suppress all the nodes in $U$, in the usual sense. Note, crucially, that this does not affect the indegree or outdegree of the parent of $u_{top}$ or the child of $u_{bottom}$. While $N$ still contains nodes of indegree 1 and outdegree 1 we repeat the above process, until none are left; this is the inverse of steps 1 and 2 of Definition \ref{def:Nk}. (Note that this process might create multi-edges, but because it leaves the indegree and outdegree of unsuppressed nodes intact, there will be at most two edges between any two nodes). Now, let $G$ be the resulting structure. Observe that the reticulation number of $G$
is the same as $N$, that every node in $G$ with indegree 2 has outdegree 0 or 1, that $G$ contains a single fake root, that all nodes in $G$ with indegree 1 have outdegree 2, and that $G$ contains no leaves. We conclude that $G$ is a $r$-reticulation generator and that we could have constructed $N$ by applying the transformation described in Definition \ref{def:Nl} to $G$. 
\end{proof}

%We observe that all the lemmas and algorithms used in the previous section to prove that constructing \emph{simple} level-$k$ networks is fixed parameter
%tractable, also hold for constructing networks from $r$-reticulation generators. This is (1) because none of the proofs exploit the biconnectedness of simple
%level-$k$ generators, (2) the leaf-hanging operation for $r$-reticulation generators is essentially identical to the leaf-hanging operation for
%$r$-reticulation generators, and (3) all of the results in this article which assume that $\cC$ is separating, also hold under the weaker
%assumption that $\cC$ is ST-collapsed. (Every separating cluster set is also ST-collapsed, because ST sets are by definition unseparated, but in general
%an ST-collapsed cluster set might not be separating. However, in this article ST-collapsed is strong enough to make all the results
%go through).\\

\begin{figure}
\label{fig:rgen}
\includegraphics[scale=0.2]{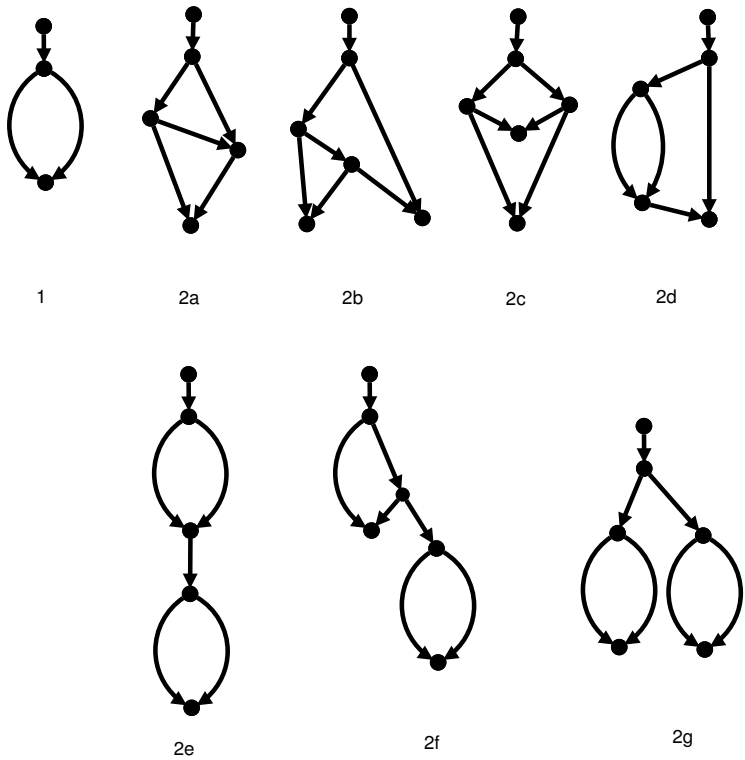}
\caption{The single 1-reticulation generator and the seven 2-reticulation generators.}
\end{figure}

\begin{prop2bis}
\label{prop:noCyclesRightarrowBis}
Given a ST-collapsed cluster set   $\cC$ on $\cX$ and an ordered set of distinct taxa of $\cX$ $(x_1, \dots, x_k)$ such that  $k\geq 2$ and $x_i \rightarrow_{\cC}  x_{i+1}$ for $ 1 \leq i \leq (k-1)$. Then $x_k \not\rightarrow_{\cC}  x_1$.
\end{prop2bis}
\begin{proof}
If $x_i \rightarrow_{\cC}  x_{i+1}$ for $ 1 \leq i \leq (k-1)$ and $x_k \rightarrow_{\cC}  x_1$, this means that the set $\cX'= \cup_{i=1}^{k}  x_i$ is compatible with $\cC$. Moreover,  every non-singleton cluster that contains one element of $\cX'$, contains
them all. So every non-singleton cluster $C\in \mathcal{C}$ is either disjoint from $\cX'$, or contains it, from which we conclude
that  any two clusters~$C_1,C_2\in\mathcal{C}|\cX'$ are 
compatible. So $\cX'$ is a ST-set and we  have  a contradiction. 
\end{proof}

Since the number of $r$-reticulation generators  and the number of sides in a generator is bounded by $f(r)$ from Lemma \ref{lem:numbRetGen}, and we have extended
several critical utility results to also apply to ST-collapsed cluster sets and $r$-reticulation generators, we have the following:
\begin{lemma7bis}
\label{lem:coreBis}
Let $\cC$ be a ST-collapsed  set of clusters on $\cX$. Then, for every fixed
$r \geq 0$, Algorithm \ref{algo:core} determines  whether a network  that represents $\cC$ with reticulation number equal to $r$ 
exists, and if so, constructs such a network in time $f(r) \cdot poly(n)$.
\end{lemma7bis}

%\begin{obs2bis}
%\label{obs:sameSideBis}
%Let $N$ be  a  phylogenetic network on $\cC$ representing a ST-collapsed cluster set  $\cC$ on $\cX$. If two taxa $x$ and $y$ in $\cX$ are on the same side of the generator underlying $N$ and the parent of $x$ is a descendant of the parent of $y$, then $y \rightarrow_{\cC}  x$. 
%\end{obs2bis}
%\begin{obs3bis}
%\label{obs:when_l'_impl_lBis}
%Let $N$ be  a  phylogenetic network on $\cC$ representing a ST-collapsed cluster set  $\cC$ on $\cX$ and let $x$ and $y$ be two taxa of $\cX$ on the same side $s$ of the generator underlying $N$ such that  $y \rightarrow_{\cC}  x$. Moreover, let $z$ be a taxon on a side $s'$ such that $s'$ is not reachable from $s$ and $z \rightarrow_{\cC}  x$. Then we have that  $z \rightarrow_{\cC}  y$. 
%\end{obs3bis}
%\begin{proof}
%The proof is equal to the proof of Lemma  \label{obs:when_l'_impl_l}, since the latter does not exploit the biconnectedness of simple level-$k$ generators.
%
%\end{proof}

%\begin{obs4bis}
%Let $N$ be  a  phylogenetic network on $\cC$ representing a ST-collapsed cluster set $\cC$ on $\cX$. Let $x$ and $y$ be  two taxa in $\cX$  on the same side $s$ of the generator underlying $N$ such that there exists an edge $e$ from the parent of $y$ to the parent of $x$. Then $e$ represents all   clusters in $\cC$ containing $x$ but not $y$.
%\end{obs4bis}

From Lemma \ref{lem:stcollapse} and Corollary \ref{cor:STisminimal}, we may finally conclude the following.

\begin{theorem}
\label{thm:retics}
Let $\cC$ be a set of clusters on $\cX$. Then, for every fixed
$r \geq 0$, it is possible to determine in time $f(r) \cdot poly(n)$ whether a network that represents $\cC$ with reticulation number at most
$r$ exists.
\end{theorem}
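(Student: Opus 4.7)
The plan is to combine the machinery developed throughout Section~\ref{sec:retnumFPT} into a single algorithm that iterates over candidate reticulation numbers. First, I would dispose of the easy preprocessing: if $|\cC| > 2^{r+1}(n-1)$ then by Observation~\ref{obs:bound} no network on $\cX$ with reticulation number at most $r$ can represent $\cC$, so we return ``no'' immediately. Otherwise $|\cC| \leq f(r) \cdot poly(n)$, which we need to ensure the ST-collapse and cluster-representation checks run in the claimed time.

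Next, I would loop over $r' = 0, 1, \ldots, r$ and, for each $r'$, try to build a network representing $\cC$ with reticulation number exactly $r'$; the first $r'$ for which this succeeds gives the answer ``yes''. For $r'=0$ the question reduces to whether $\cC$ is a hierarchy, which can be tested in $poly(n)$ time. For $r' \geq 1$, I would compute the ST-collapsed version $\cC'$ of $\cC$ via Lemma~\ref{lem:polymax} in time $f(r) \cdot poly(n)$, then invoke Lemma~7~(extended) on $\cC'$ with parameter $r'$. If it returns a network $N'$ on the collapsed taxon set, I would apply the reverse transformation from Lemma~\ref{lem:stcollapse} (replacing each meta-taxon $s_j$ by the tree on $S_j$ representing $\cC|S_j$) to obtain a network $N$ that represents $\cC$ with $r(N) = r(N') = r'$. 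Corollary~\ref{cor:STisminimal} guarantees that $r(\cC') = r(\cC)$, so this loop correctly identifies the smallest feasible value of $r'$, and in particular answers ``yes'' iff $r(\cC) \leq r$.

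For the running time, each iteration costs $f(r) \cdot poly(n)$: the ST-collapsing is polynomial, the call to Lemma~7~(extended) is $f(r') \cdot poly(n) \leq f(r) \cdot poly(n)$, the back-transformation is $f(r) \cdot poly(n)$, and verifying the final network via Proposition~1~(extended) is also within budget. Since there are only $r+1$ iterations, the total is $f(r) \cdot poly(n)$.

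The main obstacle, and the reason most of Section~\ref{sec:retnumFPT} is spent preparing for this theorem, is justifying that the whole Algorithm~\ref{algo:core} framework from the level-minimization setting still applies when the parameter is global reticulation number rather than level. This required extending the utility results (Propositions and Observations marked ``(extended)''), replacing level-$k$ generators by $r$-reticulation generators, and verifying that the key finiteness bounds (number of generators, number of sides, bound on $|\cC(N)|$, maximality of ST-sets) continue to hold without biconnectedness; once those extensions are in place, as summarised in Lemma~7~(extended), the theorem itself follows by the straightforward orchestration described above.
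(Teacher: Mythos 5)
Your proposal is correct and follows essentially the same route as the paper: the paper's own (very terse) proof of this theorem is exactly the orchestration you describe --- ST-collapse $\cC$ via Lemma~\ref{lem:polymax}, run Algorithm~\ref{algo:core} with $r$-reticulation generators as justified by Lemma~7~(extended), transform the result back with Lemma~\ref{lem:stcollapse}, and use Corollary~\ref{cor:STisminimal} together with the bound from Observation~\ref{obs:bound} and the incremental loop over the parameter stated at the start of Section~\ref{sec:retnumFPT}. You merely make explicit the bookkeeping (the loop over $r'$, the $r'=0$ base case, the time accounting) that the paper leaves implicit.
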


\section{Conclusions and open problems}
In this article we have shown that, under the softwired cluster model of phylogenetic networks, constructing networks with minimum reticulation number (respectively, level) is fixed parameter tractable where the reticulation number (respectively, level) is the parameter. The obvious problem with the algorithms in this article is that
the part of the running time that depends only on the parameter is massively exponential. This contrasts with fixed parameter tractable algorithms for combining \emph{two trees}
into a phylogenetic network. In this literature the dependence on the parameter is more modest: for example in \cite{quantifyingreticulation} an algorithm for two binary trees with  running time of 
$O((14r)^{r} \cdot n^3)$  is given (where $n=|\cX|$ and $r$ is the hybridization number of the two trees). However, the two tree case is rather special \cite{twotrees,elusiveness} and it is likely that, as the number of trees increases, the dependence on the parameter will also increase dramatically. Relatedly, there is still no fixed parameter tractable algorithm for combining an arbitrary set of trees into a phylogenetic network using a minimum number of reticulations. Could the ideas presented in this article - in particular, the use of generators - offer a theoretical route to this result?

% BibTeX users please use one of
%\bibliographystyle{spbasic}      % basic style, author-year citations
%\bibliographystyle{spmpsci}      % mathematics and physical sciences
%\bibliographystyle{spphys}       % APS-like style for physics
%\bibliography{}   % name your BibTeX data base

\bibliographystyle{plain}      % basic style, author-year citations
\bibliography{article_arxiv}

\end{document}